\newcommand{\RR}{{\mathbb R}}
\newcommand{\T}{{\mathcal T}}
\newcommand{\diag}{{\operatorname{diag}}}
\newcommand{\LSA}{{\operatorname{LSA}}}
\newcommand{\M}{\mathcal M}
\definecolor{OliveGreen}{rgb}{.3,.5,.2}
\definecolor{MidnightBlue}{rgb}{.3,.4,.6}
 \definecolor{codegreen}{rgb}{0,0.6,0}
\definecolor{codegray}{rgb}{0.5,0.5,0.5}
\definecolor{codepurple}{rgb}{0.58,0,0.82}
\definecolor{backcolour}{rgb}{0.95,0.95,0.92}
 \newcommand{\tc}{\text{:}}
\title{Identifiability of species network topologies from genomic sequences using the logDet distance} 
\author{Elizabeth S. Allman \and Hector Ba\~nos \and John A. Rhodes }
\institute{ Elizabeth S. Allman \and John A. Rhodes \at Department of Mathematics and Statistics, University of Alaska Fairbanks, Fairbanks, AK, 99775, USA \\
\email{e.allman@alaska.edu, j.rhodes@alaska.edu}
\and
Hector D. Ba\~nos \at Department of Biochemistry \& Molecular Biology, Faculty of Medicine, Dalhousie University, Halifax, Nova Scotia, CANADA \\ \email{hbanos@dal.ca}}
\begin{document}

\maketitle
	
\begin{abstract} 
Inference of network-like evolutionary relationships between species from genomic data must address the interwoven signals from both gene flow and incomplete lineage sorting. The heavy computational demands of standard approaches to this problem severely limit the size of datasets that may be analyzed, in both the number of species and the number of genetic loci.
Here we provide a theoretical pointer to more efficient methods, by showing that logDet distances computed from genomic-scale sequences retain sufficient information to recover network relationships in the level-1 ultrametric case. This result is obtained under the Network Multispecies Coalescent model
combined with a mixture of General Time-Reversible sequence evolution models across individual gene trees, but does not depend on partitioning sequences by genes.
Thus  under standard stochastic models statistically justifiable inference of network relationships from sequences can be accomplished without consideration of individual genes or gene trees.
\keywords{species network, identifiability, logDet, phylogenetic inference}\subclass{92D15 \and92D20}\end{abstract}

\maketitle

\section{Introduction}

As genomic-scale sequencing has become increasingly common, attention in phylogenetics has shifted from inferring trees of evolutionary relationships for individual genetic loci from a set of species to inferring relationships between the species themselves. A substantial complication is that population genetic processes within species, as modeled by the \emph{Multispecies Coalescent} (MSC) model can lead  to individual gene trees having quite different topological structures than the tree relating the species overall. If the evolutionary history of the species also involved hybridization or other forms of horizontal gene flow, so that a species network is a more suitable depiction of relationships, the relationships of gene trees to the network, as modeled by the \emph{Network Multispecies Coalescent} (NMSC) model, is even more complex.

Inference of species networks, through a combined NMSC and sequence substitution model, can be performed in a Bayesian framework \cite{Zhang2017,Wen2018} but computational demands severely limit both the number of taxa and the number of genetic loci considered. Other methods take a faster two-stage approach, first inferring gene trees which are treated as ``data" for a second inference of a species network. Approaches include maximum pseudolikelihood  using either rooted triples (PhyloNet) or quartets (SNaQ) displayed on the gene trees \cite{Yu2015,Solis-Lemus2016}, or the faster,  distance-based analysis built on gene quartets of NANUQ \cite{ABR2019}. Still, the first stage of these approaches, the inference of individual gene trees, can be a major computational burden. Avoiding such gene tree inference, and passing more directly from sequences to an inferred network, could substantially reduce total computational time in data analysis pipelines.

The goal of this paper is to show that most topological features of a level-1 species network can be identified from logDet intertaxon distances computed from aligned genomic-scale sequences. In particular this can be done without partitioning the sequences by genes, under a combined model of the NMSC and a mixture of general time-reversible (GTR) substitution processes on gene trees. While the main result, that the logDet distances retain enough information to recover most of the species network, despite having lost information on individual genes, is a theoretical one, it points the way toward faster algorithms for practical inference. In particular, since the computation of logDet distances requires little effort, it suggests that a distance-based approach similar to NANUQ's, but avoiding individual gene tree inference, may offer substantially faster analyses than current methods. 

\medskip

The model of sequence evolution underlying our result accounts not only for base substitutions along each gene tree, but also for variation in gene trees due to their formation under a coalescent process combined with hybridization or similar gene transfer.  Our model extends to networks the mixture of coalescent mixtures model on species trees of \cite{Allman2019}, 
which itself extended the coalescent mixture introduced by Chifman and Kubatko \cite{Chifman2015}.  More specifically, for a fixed species network, gene trees are  formed under the Network Multispecies Coalescent model \cite{Meng2009,Yu2011,Zhu2016} for each site independently. GTR substitution parameters for base evolution on each site's tree are then independently chosen from some distribution, leading to a site pattern distribution. These site distributions are finally combined to give a 
site pattern distribution for genomic sequences. (As discussed in Section \ref{sec::networksmodels}, this distribution also applies to a more realistic model in which multisite genes with a single substitution process have lengths chosen independently from some distribution.)
While this pattern frequency distribution thus reflects the substitution processes on all the gene trees,  information about pattern frequencies arising on any individual gene tree is hidden.

The logDet distance was first introduced  in the context of a single class general Markov model of sequence evolution on a single gene tree \cite{Steel94,Lockhart94}, and has been used both to obtain both gene tree identifiability results and for inference of individual gene trees.
Considering genomic sequences,  Liu and Edwards \cite{Liu2009}, and independently
Dasarathy et al. \cite{Roch2015}, showed that for a Jukes-Cantor substitution model and an ultrametric species tree, the Jukes-Cantor distances obtained under the coalescent mixture model still allowed for consistent inference of topological species trees. By passing to the logDet distance, Allman et al. \cite{Allman2019} extended this result to the more realistic mixture of coalescent mixtures model, showing that the logDet distance allowed for consistent inference of a topological species tree, assuming it is ultrametric in generations. This study builds on all these works on gene and species tree models,  but considers level-1 species networks on which all extant species are equidistant from the root.

Passing from species trees to networks is a substantial step, however, and our approach is strongly motivated by the approach taken by Ba\~nos \cite{Banos2019} in studying identifiability of features of unrooted level-1 topological species networks from gene tree quartet concordance factors (probabilities of the different quartet topologies displayed on gene trees). In the ultrametric setting of this work, we show that logDet distances computed from genomic sequences suffice to determine 4-cycles on undirected rooted triple networks, and then that this 4-cycle information for different rooted triples can be combined to determine all cycles of size 4 or more, and even all hybrid nodes in those cycles of size 5 or more. We do not obtain information on 2- or 3- cycles, so our results closely parallel those in \cite{Banos2019}, despite the rather different source of information.

\medskip
There are a number of other theoretical works in the literature on determining phylogenetic networks from limited information. For instance, \cite{Jansson2006} investigates determining a level-1 network from the rooted triple trees it displays,
\cite{HuberEtAl2017,HuberEtAl2018} discuss how knowledge of trinets (induced 3-taxon directed rooted networks) and quarnets (induced 4-taxon undirected unrooted networks) determine larger networks, and \cite{VanIersaletal2020} explores determination of networks from distances. However, the question of how, or whether, these results can be applied to biological data is not addressed, and the setting of these works is not directly applicable to obtaining our results. 

Other works \cite{GrossLong18,GrossEtAl2020,HolleringSullivant2021} use algebraic approaches to show that certain types of level-1 networks can be identified from joint pattern frequency arrays under group-based models of sequence evolution such as the Jukes-Cantor and Kimura models. In addition to their restriction on sequence evolution models, these works do not incorporate a coalescent process. That is, all sequence sites are assumed to have evolved on one of the finitely-many trees displayed on the network. Since the absence of a coalescent process is a limiting case of our coalescent-based model, our results allowing for mixtures of more general sequence evolution models extend those results in the ultrametric case. Algebraic study of a network model combined with the general Markov model, again with no coalescent process, was also conducted in \cite{CasanellasF-S2020}.

\medskip

This paper proceeds as follows. Section \ref{sec::networksmodels} defines the networks and models under consideration, as well as the logDet distance. Section 
\ref{sec::comb} uses combinatorial arguments to show how information on undirected rooted triple networks can be used to determine features of a larger directed network from which they are induced. Expected frequencies of site patterns for sequences produced by the mixture of coalescent mixtures model are studied in Section \ref{sec::freqs}, and shown to be expressible as convex combinations of pattern frequencies from simpler networks. In Section \ref{sec::logDet} we show that the ordering by magnitude of logDet distances for triples of taxa tells us about the induced rooted triple species network, and by combining this with the result of Section \ref{sec::comb} we obtain our main identifiability result, Theorem \ref{thm::main}. Section \ref{sec::other} further studies the logDet distances from a rooted triple network, in order to better understand what triples of distances can arise under the mixture of coalescent mixtures model. We conclude in Section \ref{sec:discussion} with an outline of how these results can be developed into a practical inference algorithm.

 \section{Networks and models}\label{sec::networksmodels}

 \subsection{Phylogenetic Networks}\label{sec::networks}
 Although there are many variations on the notion of a phylogenetic network in the literature, we adopt ones appropriate to the Network Multispecies Coalescent (NMSC) model. This model, which describes the formation of trees of gene lineages in the presence of both incomplete lineage sorting and hybridization, will be further developed in the next subsection. First, we focus on setting forth combinatorial aspects of the networks.

\begin{definition} \label{def::network} \cite{Solis-Lemus2016,Banos2019}
 	A  \emph{topological binary rooted phylogenetic network}  $\mathcal{N}^+$
 	on taxon set $X$ is a connected directed acyclic graph with  vertices   $V=V(\mathcal N^+)$ and edges $E=E(\mathcal N^+)$,  where $V$ is a disjoint union $V = \{r\} \sqcup V_L \sqcup V_H \sqcup V_T$  and $E$ is a disjoint union $E = E_H \sqcup E_T$, with a bijective leaf-labeling function $f : V_L \to X$ with the
 	following characteristics:
 	\begin{itemize}
 		\item[1.] The \emph{root} $r$ has indegree 0 and outdegree 2.
 		\item[2.] A \emph{leaf} $v \in V_L$ has indegree 1 and outdegree 0.
 		\item[3.] A  \emph{tree node} $v\in  V_T$ has indegree 1 and outdegree 2.
 		\item[4.] A \emph{hybrid node} $v\in  V_H$ has indegree 2 and outdegree 1.
 		\item[5.] A \emph{hybrid edge} $e=(v,w) \in E_H$ is an edge whose child node $w$ is hybrid.
 		\item[6.] A \emph{tree edge} $e=(v,w) \in E_T$ is an edge whose
 		child node $w$ is either a tree node or a leaf.
 	\end{itemize}
 \end{definition}

When $|X|=3$ or $4$, we refer to $\mathcal{N}^+$ as a \emph{rooted triple network} or a \emph{rooted quartet network}, respectively.

The vertices, and edges, of $\mathcal N^+$ are partially ordered by the directedness of the graph. For instance,  a node $u$ is \emph{below} a node $v$, and $v$ is \emph{above} $u$, if there exists a non-empty directed path in $\mathcal N^+$ from $v$ to $u$. The root is thus above all other nodes.

\medskip

A metric notion of the network above incorporates some of the parameters of the NMSC model. This introduces edge lengths, measured in generations throughout this article, as well as probabilities that a gene lineage at a hybrid node follows one or the other hybrid edge as it traces back in time toward the network root. Since we focus on binary networks, only hybrid edges are allowed to have length 0, to model possibly instantaneous jumping of a lineage from one population to another.

\begin{definition} A  \emph{metric binary rooted phylogenetic network}  $( \mathcal N^+, \{\ell_e\}_{e\in E},\{\gamma_e\}_{e\in E_H})$ is a topological binary rooted phylogenetic network together with an assignment of weights or \emph{lengths} $\ell_e$ to all edges and \emph{hybridization parameters} $\gamma_e$ to all hybrid edges, subject to the following restrictions:
\begin{itemize}
\item[1.] The length $\ell_e$ of a tree edge $e\in E_T$ is positive.
\item[2.] The length $\ell_e$ of a hybrid edge $e\in E_H$ is non-negative.
\item[3.] The hybridization parameters $\gamma_e$  and $\gamma_{e'}$ for a pair of hybrid edges $e,e'\in E_H$ with the same child hybrid node are positive and sum to 1.
\end{itemize}
\end{definition}

A metric network of this sort is said to be \emph{ultrametric} if every directed path from the root to a leaf has the same total length. This is equivalent to requiring the ultrametricity of
all trees displayed on the network. An example of a simple ultrametric network
is shown in Figure \ref{fig::network} (Right).

 \begin{figure}
	\begin{center}
	\includegraphics[width=3.8in]{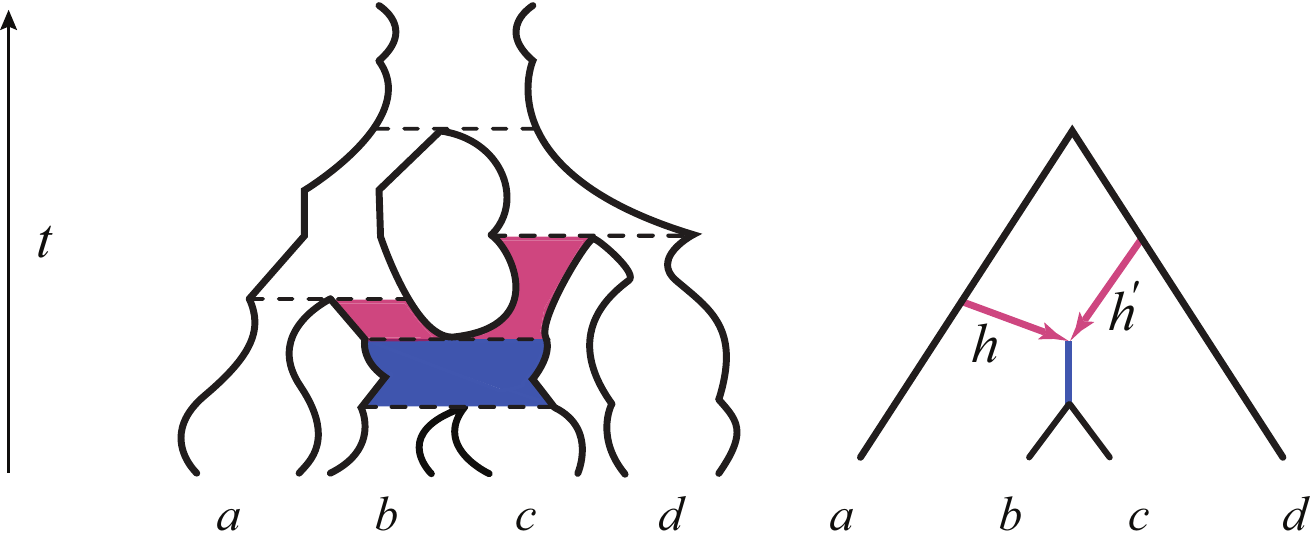}
	\end{center}
	\caption{ (Left) An ultrametric species network $\mathcal N^+$ with time $t$ in generations before the present, hybrid edges $h$ and $h'$ shown in red, and population functions $N_e(t)$ on each edge depicted by widths of ``tubes.''  The edge lengths $\tau$ are measured on the $t$-axis between the dashed lines indicating speciation and hybridization events. The dashed red/blue boundary represents a hybrid node, the top dashed line the root of the network, and other dashed lines tree nodes.  (Right) A schematic of the same species tree, which does not show population sizes. Hybridization parameters $\gamma$ and $\gamma'$ are omitted from both drawings.}\label{fig::network} 
\end{figure}

On directed networks there are several analogs \cite{Steel2016} of the most recent common ancestor of a set of taxa on a tree. The following is the most useful in this work.

\begin{definition}\label{def::LSA}\cite{Steel2016}
Let $\mathcal{N}^+$ be a (metric or topological)  binary rooted phylogenetic network on a set of taxa $X$  and let $Z\subseteq X$. Let $D$  be the set of 
nodes which lie on every directed path from the root $r$ of $\mathcal{N}^+$ to any $z\in Z$. Then the \emph{lowest stable
ancestor of $Z$ on $\mathcal{N}^+$}, denoted $\LSA(Z,{\mathcal{N}^+})$,  is the unique node $v\in D$  such that $v$ is below all $u\in D$ 
with $u\neq v$. The \emph{lowest stable ancestor (LSA)} of a network on $X$ is $\LSA(X)$.
\end{definition}

Phylogenetic networks as defined here have no cycles in the usual sense for a directed graph. The term \emph{cycle} will thus be used to refer to a collection of edges that form a cycle when all edges are undirected. A cycle must contain at least two hybrid edges sharing a hybrid node, and may contain any non-negative number of tree edges. The class of networks we focus on is those in which cycles are separated, in the following sense.

\begin{definition}\label{def::level1} 
	A rooted binary phylogenetic network $\mathcal{N}^+$ is said to be \emph{level-1} if no two distinct cycles in $\mathcal{N}^+$ share an edge.
\end{definition}

Although this is not the standard definition of level-1 \cite{Rossello2009}, in the setting of binary networks it is equivalent. 

Each cycle on a level-1 phylogenetic network contains exactly one hybrid node and two hybrid edges with that node as a child. Thus there is a one-to-one correspondence between cycles and the  hybrid nodes they contain.  A cycle composed of $n$ edges, 2 of which are hybrid, is called an \emph{$n$-cycle}. If the cycle's hybrid node has $k$ leaf descendants, it is an \emph{$n_k$-cycle}. 

\medskip

Passing from a large network to one on a subset of the taxa is similar to the process for trees.

\begin{definition}
\emph{Suppressing a node} with both in- and out-degree 1 in a directed phylogenetic network means replacing it and its two incident edges with a single edge from its parent to its child. For a metric network, the new edge is assigned a length equal to the sum of lengths of the two replaced. If the outedge was hybrid, the new edge is also hybrid and retains the hybridization parameter. 

Similarly, suppressing a node of degree 2 between two undirected edges means replacing it and its two incident edges with a single undirected edge.
\end{definition}

\begin{definition}\label{def::triplet} 
Let   $\mathcal{N}^+$  be a (metric or topological)  binary rooted phylogenetic network on $X$ and let $Y\subset X$. The \emph{induced  rooted network}   $\mathcal{N}^+_{Y}$ on $Y$ is the network obtained from   $\mathcal{N}^+$ by retaining nodes and edges in every path from the root $r$ on $\mathcal N^+$ to any $y\in Y$, and then suppressing all nodes with in- and out-degree 1.  We then say  $\mathcal{N}^+$ \emph{displays} $\mathcal{N}^+_{Y}$.
\end{definition}

We need the notion of a \emph{rooted undirected network}, in which all edges have been undirected but the root retained. Note that if a rooted network is a tree, knowledge of the root alone is enough to recover the direction of every edge, so this notion is not useful in that setting. If cycles are present, knowledge of the root determines only the direction of every cut edge (an edge whose deletion results in a graph with two connected components), and edges directly descended from cut edges. Knowing the root and all hybrid nodes in an undirected level-1 network  does, however, determine the full directed network. 

Several other notions of networks induced from a directed one are needed.

\begin{definition}\label{def::undirected}
Let   $\mathcal{N}^+$  be a (metric or topological)  binary rooted phylogenetic network on $X$. 
\begin{enumerate}
\item   \cite{Banos2019}
The \emph{LSA network} $\mathcal N^{\oplus }$
induced from $\mathcal N^+$  is the network on $X$ obtained by
deleting all edges and nodes above $\LSA(X, \mathcal{N}^+)$, and designating $\LSA(X, \mathcal{N}^+)$ as the root node.

\item
The \emph{undirected LSA network} $\mathcal N^\ominus$ is the rooted network obtained from the LSA network $\mathcal{N}^\oplus$ by undirecting all edges.

\item  \cite{Banos2019}
The \emph{unrooted semidirected network} $\mathcal N^-$ is the unrooted network obtained from the LSA network $\mathcal{N}^\oplus$ by undirecting all tree edges and suppressing the root, but retaining directions of hybrid edges.

\end{enumerate}
\end{definition}

For a binary level-1 network $\mathcal N^+$, the only possible structure above the LSA has the form of a (possibly empty) chain of 2-cycles \cite{Banos2019}, an example of which is shown in Figure \ref{fig::chain2cyc}. The LSA network $\mathcal N^\oplus$ is obtained by simply deleting that chain.

 \begin{figure}
	\begin{center}
	\includegraphics{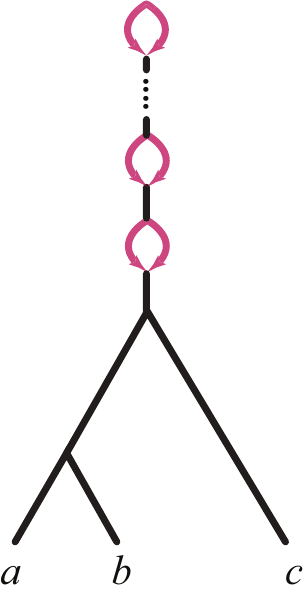}
	\end{center}
	\caption{ A rooted network $\mathcal N^+$ whose LSA network $\mathcal N^\oplus$ is the rooted tree $((a,b),c)$,  but which has a chain of 2-cycles above  $\LSA(a,b,c)$. }\label{fig::chain2cyc}
\end{figure}

Note that the terminology of ``$n_k$-cycles" can be applied to LSA networks $\mathcal N^\oplus$, as hybrid edges retain their direction.
On undirected LSA networks $\mathcal N^\ominus$, however,  ``$n$-cycle" can still be applied, but ``$n_k$-cycle" generally cannot.

\begin{definition}
By \emph{suppressing a cycle} $C$ in a topological level-1 network we  mean deleting all edges in $C$, identifying all nodes in $C$, and if the resulting node is of degree 2 suppressing it. If the network is rooted and this results in the root becoming a degree 1-node, then the resulting edge below the root is also deleted, with its child becoming the root. \end{definition}

Suppressing an $n$-cycle in a binary level-1 network results in a non-binary network when $n\ge 4$. However if only 2- and 3-cycles are suppressed, the result is binary.

\subsection{Coalescent Model on Networks}\label{ssec:NMSC} The 
formation of gene trees within a species network, as ancestral lineages of sampled loci from extant taxa join together moving backwards in time, is given a
mechanistic description by the Network Multispecies Coalescent Model (NMSC) \cite{Meng2009,Yu2011,Zhu2016}. 

Parameters of the NMSC for a set of taxa $X$
include a metric rooted binary phylogenetic network $(\mathcal N^+,\{\ell_e\},\{\gamma_e\})$ on $X$, with edge lengths $\ell_e$  in generations. In addition, for each edge $e=(u,v)$ fix a function $N_e:[0,\ell_e)\to \mathbb R^{>0}$ giving the (haploid) population size along the edge, where $N_e(0)$ is the population size at the child node $v$ and $N_e(t)$ is the population at time $t$ units above it. Finally, let $N_r:[0,\infty)\to \mathbb R^{>0}$ be an additional population size function for an infinite length `edge' ancestral to the root $r$ of the network.   
The $N_e$ need not be constant nor equal, although those are common assumptions in other works.  As in \cite{Allman2019} we make the biologically-plausible technical assumptions that the functions $N_e$ are bounded, and that all $1/N_e(t)$ are integrable over finite intervals.

Figure \ref{fig::network} (Left) depicts an example species network that is ultrametric in generations, with  hybrid edges $h$ and $h'$, and population functions $N_e$ on each edge depicted by time-varying widths of the network edges. The edge lengths $\ell_e$ are measured on the $t$-axis between the horizontal lines indicating speciation and hybridization events. Figure \ref{fig::network} (Right) gives a schematic of the same species tree, without a depiction of population functions. 

The standard Kingman coalescent models the formation of gene trees, with edge lengths in generations, within a single population edge $e$, with pairs of lineages coalescing independently as they trace backward in time, at instantaneous rate $1/N_e(t)$.  The multispecies coalescent model (MSC) extends this to a tree of populations, by using the standard coalescent on each edge, as well as an infinite length edge above the root, allowing multiple gene lineages to enter a population from its descendant ones at a tree node. The NMSC extends this further, so that
lineages reaching hybrid nodes randomly enter one or the other hybrid edge above them, with the choice determined independently according to the hybridization parameter probabilities. Thus the NMSC parameters $(\mathcal N^+,\{\ell_e\},\{\gamma_e\})$ and $\{N_e\}$ determine a distribution of rooted metric gene trees. The structure of the NMSC also ensures that  the distributions of gene trees obtained by marginalization to a subset $Y$ of taxa are the same as the distributions obtained from the NMSC on the displayed network $\mathcal N^+_Y$.

\subsection{Sequence substitution models on gene trees}\label{ssec::substitution}
The $k$-state \emph{general time-reversible model} (GTR) for sequence evolution is a continuous-time Markov process on a metric gene tree.  Gene tree edge lengths are in substitution units, and sequences are composed of $k$ possible states, or bases. Model parameters are a $k\times k$ instantaneous rate matrix  $Q$ together with a $k$-state distribution $ \pi $, with non-negative entries summing to 1, satisfying the following:
 \begin{enumerate}
 \item off-diagonals entries of $Q$ are positive,
 \item row sums of $Q$ are 0,
 \item $\operatorname{trace} Q=-1,$
 \item $\pi Q=0$,
 \item $\diag(\pi)Q$ is symmetric.
 \end{enumerate}
 In the ultrametric framework for our species networks, we introduce an additional time-dependent but lineage-independent rate scalar $\mu(t)$ for $Q$, where $t$ is measured in generations from leaves to the root and beyond, and $\mu(t)$ has units of substitutions/generation. We assume  $\mu$ is piecewise-continuous, $\mu(t)>0$ for all $t\ge 0$ so that the mutations process never stops, and $\int_0^\infty \mu(t)dt=\infty$ so that the total amount of possible mutation is unbounded. Following \cite{Allman2019}, this substitution model is denoted by GTR+$\mu$. 

For any node $u$ on a gene tree, let $t_u$ denote the distance, in generations, to that node from its descendant leaves.  The states at  a single site in sequences at the taxa at the leaves on the gene tree are then determined as follows: 
 A state is randomly chosen at the root of the tree from the distribution $ \pi$. For each edge $e=(u,v)$ descendant from a node $u$ the site undergoes random state changes with rates $\mu(t)Q$ for times $t\in[t_v,t_u]$ to obtain states at the child nodes. The full substitution process on the edge is thus described by the Markov matrix 
 $$M_e=\int_{t_v}^{t_u}\exp(\mu(t)Q)\,dt.$$
  A similar process is then repeated  for those nodes' children, and so on, until states at the taxa have been determined.
 
\subsection{Mixture of coalescent mixtures}\label{ssec::mixtures}
The model we focus on is the  $m$-class \textit{mixture of coalescent mixtures} \cite{Allman2019} on an ultrametric network. This model has as parameters  an ultrametric species network $( \mathcal N^+, 
\{\ell_e\},\{\gamma_e\})$, population size functions
$\{N_e\}$, a finite collection $\{(Q_i,{ \pi}_i; \mu_i)\}_{i=1}^m$ of GTR+$\mu$ parameters for the $m$ classes, and a vector $ \lambda$ of $m$ positive class size parameters summing to 1.  

Sequence data is generated as follows: For each site:
 \begin{enumerate}
      \item  a gene tree $T$ is sampled according to the NMSC model on $(\mathcal N^+, \{\ell_e\},\{\gamma_e\})$ with population sizes $\{N_e\}$, 
      \item  class $i$ is sampled  from the distribution $ \lambda$ to determine parameters $(Q_i,{ \pi}_i; \mu_i)$,	
      \item  the  bases for each $x\in X$ are sampled  under the  GTR+$\mu$ process on $T$ with parameters $(Q_i,{ \pi}_i; \mu_i)$.
\end{enumerate}

This model is denoted by $\M= \M(\theta)$ where $$\theta=( (\mathcal N^+, \{\ell_e\},\{\gamma_e\}),\{N_e\}, \lambda, \{(Q_i,{ \pi}_i; \mu_i)\}).$$ 
Sampling $n$ independent sites from this model produces $k$-state aligned sequences of length $n$. As usual in phylogenetics, these are summarized through counts of site patterns across the sequences in an $|X|$-dimensional $k\times k\times\cdots\times k$ array. Marginalizations of this array to 2-dimensions give pairwise $k\times k$ site pattern count matrices that compare only the sequences for two taxa in $X$.

\medskip

In the tree context, two extensions of this model were discussed in \cite{Allman2019}. For the first, the model assumption of one independently drawn gene tree for each site is modified to a more realistic one for genomic sequences in which all sites for a genetic locus share a gene tree. If the lengths (in number of sites) of the loci are independent identically distributed draws from some distribution, then the expected site pattern distribution for such a model is unchanged from that determined by $\mathcal M$. Only the rate of convergence, as the number of sampled genes grows, of frequencies of sampled site patterns to the asymptotic distribution will be slowed. Although 
the identifiability results of this paper are only formally stated for the model $\mathcal M$ on networks, they apply more generally to a similarly extended network model.

Another extension in the tree setting in \cite{Allman2019} allowed for relaxing the ultrametric condition while retaining strong results on identifiability from the logDet distances. In that extension, the scalar rate function was allowed to be edge dependent as long as a certain symmetry condition on mixture components resulted in ultrametricity in substitution units ``on average" across gene trees. While a similar model extension in the network setting seems likely to lead to similar results, it is not explored here, as the technical complications are greater than in the tree case.

\subsection{LogDet distance}\label{ssec:logDet}
The fundamental tool we use to study relationships of taxa under the mixture of coalescent mixtures model $\M$ is the logDet distance between a pair of aligned sequences. 
It is computed as follows: For taxa $a,b\in X$,  let $\widehat F^{ab}$ be 
a $k \times k$ matrix of
empirical relative site-pattern frequencies, obtained by normalizing the site pattern count matrix for $a$ and $b$, so that its entries sum to 1. Thus the $ij$ entry of $\widehat F^{ab}$ is the proportion of sites in the
sequences exhibiting base $i$ for $a$ and base $j$ for $b$. With $\hat f_a$ and $\hat f_b$ the vectors of row and column sums
of $\widehat F^{ab}$, which give the
proportions of various bases in the sequences for $a$ and $b$, let $\hat g_a$ and $\hat g_b$ the products of
the entries of $\hat f_a$, $\hat f_b$, respectively. Then  the empirical logDet distance is
\begin{align}\label{formula::logdet}
 \hat d_{LD}(a,b)=- \frac{1}{k}\left(\ln \det \left(\widehat F^{ab}\right )-\frac{1}{2}\ln (\hat g_a \hat g_b)\right) 
\end{align} 

Under most phylogenetic models, including the mixture of coalescent mixtures model, individual site patterns in sequences are assumed to be independent and identically distributed. By the weak law of large numbers, $\widehat F^{ab}$ computed from a sample will converge in probability to its expected value $F^{ab}$ as the sequence length goes to $\infty$. By the continuous function theorem (e.g., \cite{vanderVaart}), the empirical logDet distance thus converges in probability to the logDet distance computed by the same formula from the expected $F^{ab}$, a quantity we refer to as the \emph{theoretical logDet distance} and denote by $d_{LD}(a,b)$.

\section{Rooted Networks from Undirected Rooted Triple Networks}\label{sec::comb}

The goal of this section is to establish Proposition \ref{prop::combnet}, a combinatorial result indicating features of a topological level-1 rooted $n$-taxon network that can be recovered from its induced undirected rooted triple networks with 2- and 3-cycles suppressed. This is a rooted analog of a key result  of \cite{Banos2019} relating unrooted semidirected networks and their induced undirected quartet networks. Later sections of this paper focus on identifying these rooted triple networks under the model $\M$.

There are several possible routes to Proposition \ref{prop::combnet}. One approach would be to follow the argument of the quartet analog, with modifications throughout due to the rooted setting. Another would be to imitate the alternate proof of the quartet result given in \cite{ABR2019}, based on an extension of the intertaxon quartet distance of \cite{Rhodes2019}, but instead using the rooted triple distance also introduced in that work. 
The argument presented here is shorter than these approaches, as it leverages information about undirected rooted triple networks to obtain information about undirected quartet networks, and then applies the theory of \cite{Banos2019}. 

The following result, extracted from the  proof of Theorem 4 of \cite{Banos2019},  will be used. In it, and throughout this work, by a network \emph{modulo 2- and 3-cycles} we mean the network obtained by suppressing all 2- and 3-cycles. Similarly, \emph{modulo directions of edges in 4-cycles} means that all edges in 4-cycles are undirected. As a result, which of the edges in a 4-cycle are hybrid, and  therefore which node is hybrid, is not indicated.

\begin{lemma}[\cite{Banos2019}]\label{lem:banos}
Let $\mathcal N^+$ be a level-1 rooted binary topological  phylogenetic network on $X$. Let $Q$ be the set of undirected quartet networks obtained from those displayed on $\mathcal N^+$ by unrooting, suppressing all cycles of size 2 and 3, and undirecting all edges.  Then modulo 2- and 3-cycles and directions of edges in 4-cycles,  the semidirected unrooted network $\mathcal N^-$ is determined by $Q$.  
\end{lemma}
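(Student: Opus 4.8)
The plan is to reconstruct $\mathcal N^-$, up to the stated ambiguities, directly from the combinatorial data in $Q$, in two passes after a preliminary reduction: first I would recover the \emph{tree of blobs} of $\mathcal N^-$, and then, inside each cycle, the cyclic order of its pendant subtrees together with the location of its hybrid node whenever the cycle has size at least $5$. Since every assertion is made modulo $2$- and $3$-cycles, the first step is to suppress all $2$- and $3$-cycles in $\mathcal N^+$ (and, consistently, in each displayed quartet), which leaves $Q$ unchanged. After this reduction every cycle has size $\ge 4$, the blobs of $\mathcal N^-$ are exactly its cycles, and — because the network is binary, distinct cycles are vertex-disjoint — contracting each cycle to a point turns $\mathcal N^-$ into a genuine (unrooted, possibly high-degree) $X$-tree $T_B$.

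\emph{Pass 1: the tree of blobs.} Here I would classify each $4$-subset $Y\subseteq X$ according to how the four lineages leading to its taxa meet the cycles of $\mathcal N^-$. A short case analysis should show that the corresponding undirected quartet in $Q$ is a $4$-cycle exactly when these four lineages pass through one common cycle, entering it at four \emph{distinct} attachment nodes, and that in every other case it equals the resolved quartet tree $T_B|_Y$: if fewer than four of the lineages meet at a common node of a cycle then the induced cycle is a $2$- or $3$-cycle and is suppressed, and the other cycles lying along the retained lineages likewise collapse to $2$-cycles or disappear entirely. Consequently the tree quartets appearing in $Q$ are precisely the resolved quartets displayed by $T_B$, and the $4$-cycle quartets mark exactly the $4$-subsets on which $T_B$ is unresolved. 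Since an $X$-tree, even one with high-degree vertices, is determined by the collection of quartet topologies (resolved and unresolved) it displays, this recovers $T_B$, including the number of cycles of $\mathcal N^-$ and their sizes.

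\emph{Pass 2: cyclic orders and hybrid nodes.} Fix a cycle $C$ of $\mathcal N^-$, known from $T_B$ to have some size $m\ge 4$, with its $m$ pendant ``sides.'' Choosing one representative taxon in each side, the quartet of $Q$ on any four of the sides depends only on those four sides, and by the dichotomy of Pass 1 it is a $4$-cycle, whose cyclic order restricts that of $C$, precisely when a distinguished one of the $m$ sides — the one pendant to the hybrid node of $C$, whose removal deletes that node together with its two incoming hybrid edges and so breaks the cycle — lies among the four; otherwise it is a tree. For $m=4$ the single such quartet recovers the cyclic order of the four sides and nothing more. For $m\ge 5$, the $4$-cycle quartets on sides of $C$ are exactly those containing the hybrid side, so collecting their cyclic orders simultaneously reconstructs the full cyclic order of the $m$ sides around $C$ and identifies the hybrid side as the unique side lying in all of them; the two cycle-edges at that node are the hybrid edges. (This is precisely why directions of edges in $4$-cycles cannot be recovered: for $m=4$ the last step is vacuous.) Assembling these cyclic orders with $T_B$ yields the underlying undirected graph of $\mathcal N^-$ modulo $2$- and $3$-cycles, with hybrid edges oriented in every cycle of size $\ge 5$ — that is, $\mathcal N^-$ modulo $2$- and $3$-cycles and directions of edges in $4$-cycles.

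The step I expect to be the main obstacle is the case analysis behind Pass 1 and its refinement in Pass 2: one must check carefully that restricting $\mathcal N^+$ to $Y$, pruning the pendant subtrees with no retained taxon, and suppressing $2$- and $3$-cycles really does produce the asserted tree or $4$-cycle — in particular that cycles lying along the retained lineages collapse exactly as claimed — and that enough tree quartets survive in $Q$, rather than being replaced by $4$-cycles, to pin down the non-binary tree $T_B$. This bookkeeping is essentially the content of the relevant portion of the proof of Theorem~4 in \cite{Banos2019}.
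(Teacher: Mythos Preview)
The paper does not prove this lemma; it is stated as a result extracted from the proof of Theorem~4 in \cite{Banos2019}. Your proposal goes further by sketching an argument, but it contains an internal inconsistency that makes Pass~1 fail as written.

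In Pass~1 you assert that the undirected quartet in $Q$ on a $4$-subset $Y$ is a $4$-cycle exactly when the four lineages enter a common cycle $C$ of $\mathcal N^-$ at four distinct attachment nodes. This is false for cycles of size $m\ge 5$. Indeed, in Pass~2 you yourself give the correct dichotomy: for four sides of such a $C$, the quartet is a $4$-cycle only when the hybrid side is among the four, since excluding it deletes the hybrid node together with both incoming hybrid edges and breaks the cycle, leaving a resolved tree quartet. Thus your Pass~1 and Pass~2 characterizations directly contradict one another whenever $m\ge 5$ and the hybrid side is omitted.

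The consequence is that Pass~1 does not recover $T_B$ in the way you describe. When the four taxa lie in four distinct non-hybrid pendant subtrees of a cycle $C$ with $m\ge 5$, the quartet $T_B|_Y$ is an unresolved star, yet $Q$ shows a \emph{resolved} tree whose split is determined by the cyclic positions of those four sides relative to the hybrid node. Hence the tree quartets appearing in $Q$ are not exactly the resolved quartets of $T_B$; they also include these spurious resolutions of star quartets, and one cannot read off from $Q$ alone which $4$-subsets are unresolved in $T_B$. Recovering the blob structure therefore requires a more careful argument that disentangles genuine from spurious tree resolutions --- this is the substantive content of the relevant part of \cite{Banos2019}, not mere bookkeeping.
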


In order to apply this to rooted triples, we first recall some combinatorial  properties of rooted triple and quartet networks.

\begin{lemma}[\cite{Banos2019}]\label{lem:numberofcycles} 
	Let  $\mathcal{Q}^-$ be a  level-1 unrooted semidirected binary quartet network. Then $\mathcal{Q}^-$  has no $k$-cycles for $k\geq 5$, and at most one $4$-cycle. If
$\mathcal{Q}^-$ has a 4-cycle, then it has neither $3$- nor $2_2$-cycles. If there is no $4$-cycle, then there are at most two $3$-cycles, with at most one of these a $3_2$-cycle.
\end{lemma}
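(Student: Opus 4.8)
The plan is to proceed by a direct structural analysis of a level-1 unrooted semidirected binary quartet network $\mathcal{Q}^-$, counting leaves and using the fact that a $k$-cycle on a binary level-1 network must ``attach'' to the rest of the network and to the four leaves through the nodes lying on it. First I would recall that in a binary level-1 network every cycle contains exactly one hybrid node and exactly two hybrid edges, and that removing (contracting) a cycle leaves the graph connected; each node on an $n$-cycle that is not the hybrid node is a tree node of the ambient binary network, so in the semidirected unrooted quartet network each such node on the cycle has exactly one edge leaving the cycle. Since $\mathcal{Q}^-$ has exactly $4$ leaves, and each pendant subtree hanging off a cycle node must contain at least one leaf, an $n$-cycle forces at least a certain number of leaves. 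Concretely, an $n$-cycle has $n$ nodes; in the unrooted setting the hybrid node also has its single outgoing hybrid-child edge leaving the cycle, and the two ``top'' tree nodes where the cycle closes up each send an edge outward as well (or, if the cycle sits at the very top of the unrooted network, the structure is pinned down by the $4$-leaf budget). Carefully bookkeeping how many of the $n$ outgoing directions must lead to distinct leaves gives $n \le 4$, which is the first assertion ($\mathcal{Q}^-$ has no $k$-cycle with $k\ge 5$).

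Next I would handle the coexistence statements. For the ``at most one $4$-cycle'' claim: a single $4$-cycle in a binary level-1 quartet network already uses up essentially all of the leaf budget (each of the non-hybrid nodes on the $4$-cycle, together with the hybrid node, distributes the four leaves among the pendant directions), so there is no room left to grow a second disjoint cycle, which by level-1-ness is the only option. For the claim that a $4$-cycle precludes $3$- and $2_2$-cycles: a $3$-cycle would need its own leaf-bearing pendant edges in addition to those forced by the $4$-cycle, exceeding four leaves; a $2_2$-cycle (a $2$-cycle both of whose ``sides'' carry leaf descendants, i.e.\ accounting for two leaves) is even more demanding. When there is no $4$-cycle, the remaining cycles are $2$- and $3$-cycles; I would then argue that two $3$-cycles can coexist only in a ``chain''-like configuration that is still consistent with four leaves, and that this configuration cannot have two $3_2$-cycles, again by counting: a $3_2$-cycle requires two of its pendant directions to reach leaves on different sides, and two such cycles would demand more than four leaves (or would force a $k\ge 4$ cycle upon contraction of one of them).

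The main obstacle I expect is the careful case bookkeeping in the semidirected \emph{unrooted} setting: because the root has been suppressed, a cycle can appear at the ``top'' of the network with no outgoing tree edge at what was the highest tree node, so the naive count of ``$n$ outgoing edges, each needing a leaf'' must be adjusted by one in that boundary case, and one must be sure not to double-count a leaf that is reachable through two different pendant directions (impossible in a level-1 network, but worth stating). I would organize the argument by first reducing to the ambient rooted LSA network $\mathcal{N}^\oplus$ where directions and a root are available, prove the leaf-counting bounds there, and then transfer the conclusions to $\mathcal{Q}^-$ by noting that unrooting and suppressing the root does not create or destroy cycles of size $\ge 2$. A clean way to make the counting rigorous is to observe that contracting any one cycle $C$ in a level-1 quartet network yields a smaller level-1 network on the same four leaves with one fewer cycle, so an inductive/extremal argument on the number of cycles, combined with the base observation that a single cycle on four leaves is either a $4$-cycle, a $3_k$-cycle ($k\le 2$), or a $2_k$-cycle ($k\le 2$), and that these ``use up'' $4$, $k+1$, and $k$ units of a leaf budget respectively, pins down exactly the allowed combinations.
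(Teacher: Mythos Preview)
The paper does not supply a proof of this lemma; it is quoted verbatim from \cite{Banos2019} and used as a black box. So there is no ``paper's own proof'' to compare against here.

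That said, your plan is the right one and matches how such results are typically established. The core observation you identify --- that in a binary level-1 network every vertex on an $n$-cycle has exactly one edge leaving the cycle, and level-1-ness forces these $n$ exit edges to lead into $n$ pairwise disjoint subgraphs each containing at least one leaf --- immediately gives $n\le 4$ on four taxa and is the engine for all the remaining coexistence claims. Your instinct to pass to the rooted LSA network to make ``descendant'' well-defined, prove the bounds there, and then note that unrooting neither creates nor destroys cycles, is also the clean way to handle the $n_k$ subscript (which counts leaf descendants of the hybrid node and so needs a direction).

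One small caution on your bookkeeping: when you write that a $4$-cycle ``uses up essentially all of the leaf budget'' and therefore precludes a second cycle, be careful to state that what is precluded is a second cycle of size $\ge 3$ or a $2_2$-cycle --- the lemma deliberately does \emph{not} rule out $2_1$-cycles coexisting with a $4$-cycle, and indeed a $2_1$-cycle can sit on a pendant edge without needing any extra leaves. Your later paragraphs get this right, but the phrasing ``no room left to grow a second disjoint cycle'' in the $4$-cycle paragraph is slightly too strong as written.
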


Lemma \ref{lem:numberofcycles} can be used to characterize possible cycles in a rooted triple network, by attaching an outgroup at the root.
More specifically, by \emph{attaching an outgroup $o$ to the root}  of an $n$-taxon network on taxa $X$  with $o\notin X$ we mean identifying the root $r$ of the network with the node $r$ on an edge $(r,o)$ and undirecting all tree edges.
This gives a 
$(n+1)$-taxon unrooted semidirected network. The rooted triple networks displayed on the original network are then in one-to-one correspondence with induced semidirected quartet networks containing $o$ on the new network. This construction yields the following.
 
\begin{corollary}\label{lem::numcyctriplet}
 	Let $\mathcal N^+$ be a  level-1 binary  rooted triple network. Then $\mathcal N^+$ has no $k$-cycles for $k\geq 5$, and at most one $4$-cycle in which case there are no $3$- or $2_2$-cycles. If there is no $4$-cycle, then there are at most two $3$-cycles, with at most one of these a $3_2$-cycle.
 \end{corollary}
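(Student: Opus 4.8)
The plan is to deduce Corollary~\ref{lem::numcyctriplet} directly from Lemma~\ref{lem:numberofcycles} via the outgroup construction sketched in the paragraph preceding the statement. First I would make the correspondence precise: given a level-1 binary rooted triple network $\mathcal N^+$ on $X=\{a,b,c\}$, pick a new taxon $o\notin X$, identify the root $r$ with the non-leaf endpoint of a pendant edge $(r,o)$, and undirect all tree edges while keeping hybrid edges directed. One checks the result $\mathcal Q^-$ is a level-1 unrooted semidirected binary quartet network on $\{a,b,c,o\}$: the only structural thing to verify is that no new cycle is created by the identification and that all degrees are correct (the old root, degree-2 between two outedges, now becomes a degree-3 tree node after attaching $o$), so Lemma~\ref{lem:numberofcycles} applies to $\mathcal Q^-$.

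Next I would argue that the cycles of $\mathcal N^+$ and of $\mathcal Q^-$ are in bijection, with sizes and hybrid-descendant counts preserved in the relevant way. Attaching $o$ and undirecting tree edges does not delete or create any cycle, so every $n$-cycle of $\mathcal N^+$ is an $n$-cycle of $\mathcal Q^-$ and conversely. For the subscripted classification one must be a little careful, since ``$n_k$-cycle'' counts leaf descendants of the hybrid node: in $\mathcal N^+$ we count descendants among $\{a,b,c\}$, in $\mathcal Q^-$ among $\{a,b,c,o\}$, but since $o$ sits above the root it is never below any hybrid node, so the count is unchanged. Hence a $3_2$-cycle in $\mathcal N^+$ corresponds exactly to a $3_2$-cycle in $\mathcal Q^-$, and likewise for $2_2$-cycles. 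Transporting each clause of Lemma~\ref{lem:numberofcycles} through this bijection gives exactly the four assertions of the corollary: no $k$-cycle for $k\ge 5$; at most one $4$-cycle, and if present no $3$- or $2_2$-cycles; if no $4$-cycle, at most two $3$-cycles with at most one a $3_2$-cycle.

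The main obstacle, such as it is, is purely bookkeeping: verifying that the outgroup-attachment map really does land in the class of objects to which Lemma~\ref{lem:numberofcycles} applies (level-1, binary, the degree conditions at the modified root node), and confirming that no cycle of $\mathcal Q^-$ passes through the pendant edge $(r,o)$ or uses $o$ — which is immediate since $o$ is a leaf of degree~1. Once that is pinned down, the statement is a one-line translation. I would therefore keep the proof short: state the construction, observe the cycle bijection preserves $n$ and the hybrid-leaf count, and invoke Lemma~\ref{lem:numberofcycles}.
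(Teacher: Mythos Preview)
Your proposal is correct and follows essentially the same approach as the paper: attach an outgroup $o$ at the root to obtain a level-1 unrooted semidirected binary quartet network, then invoke Lemma~\ref{lem:numberofcycles}. The paper leaves the cycle-bijection and $n_k$-subscript bookkeeping implicit, but your explicit verification that $o$ is never a descendant of any hybrid node (so leaf-descendant counts are preserved) is exactly the point that makes the translation go through.
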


Considering a rooted quartet network $\mathcal Q^+$, and the impact of passing to its associated unrooted semidirected quartet network $\mathcal Q^-$, Lemma \ref{lem:numberofcycles} also immediately yields the following.

\begin{corollary}\label{cor::numcycquart}
	Let $\mathcal Q^+$ be a level-1 rooted binary quartet network. Then $\mathcal Q^+$ has no $k$-cycles for $k\geq 6$, and has at most a one 5-cycle or 4-cycle, but not both.
\end{corollary}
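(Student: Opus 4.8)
The plan is to deduce this corollary from Lemma~\ref{lem:numberofcycles} by the same outgroup-attachment device used to prove Corollary~\ref{lem::numcyctriplet}, but run in the reverse direction: instead of adding a taxon to pass from a rooted triple to an unrooted quartet, I attach an outgroup~$o$ to the root of $\mathcal Q^+$ to obtain a level-1 unrooted semidirected \emph{quintet} network $\mathcal Q^-_o$ on five taxa, and then restrict attention to those quartets of $\mathcal Q^-_o$ that contain~$o$. Each cycle $C$ of $\mathcal Q^+$ of size $n$ persists, after attaching $o$ and undirecting tree edges, as an $n$-cycle of $\mathcal Q^-_o$ (the root of $\mathcal Q^+$ sits on $C$ or not, but in either case the undirected cycle length is unchanged; attaching $o$ only adds the pendant edge $(r,o)$ and cannot create, destroy, or resize a cycle). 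So bounding cycle sizes in $\mathcal Q^+$ is the same as bounding them in $\mathcal Q^-_o$.

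First I would record that $\mathcal Q^-_o$ is a level-1 unrooted semidirected network on five taxa. Next, for any taxon $x\in X$, deleting $x$ (i.e., passing to the induced quartet network on $\{o\}\cup (X\setminus\{x\})$) yields a level-1 unrooted semidirected quartet network, to which Lemma~\ref{lem:numberofcycles} applies: it has no $k$-cycle for $k\ge 5$ and at most one $4$-cycle. The key observation is that a cycle $C$ of $\mathcal Q^-_o$ survives deletion of $x$ \emph{unless} $x$ is the unique leaf hanging below $C$ whose removal collapses $C$ (more precisely, deleting $x$ can only shrink $C$ if some node of $C$ becomes degree~2 and gets suppressed, which happens precisely when $x$ is a pendant taxon attached at a node of $C$ with no other descendant through that node). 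Since $\mathcal Q^-_o$ has five leaves, a cycle $C$ of size $n$ has at least $n-1$ of its attachment points "protected" by a descendant leaf other than $o$, and so it can fail to survive the deletion of a generic~$x$ only in controlled ways; concretely, any cycle of size $\ge 6$ in $\mathcal Q^-_o$ would still have size $\ge 5$ after deleting any single leaf, contradicting Lemma~\ref{lem:numberofcycles}. This gives the first claim, that $\mathcal Q^+$ has no $k$-cycle for $k\ge 6$.

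For the second claim I would argue that $\mathcal Q^-_o$ cannot contain both a $5$-cycle and a $4$-cycle, nor two distinct cycles each of size $\ge 4$. Suppose it did. On a level-1 network distinct cycles are edge-disjoint; with only five leaves and the outgroup edge, I count the leaves forced to sit below each such cycle and show that a $5$-cycle and a $4$-cycle (or two $4$-cycles, etc.) cannot be accommodated simultaneously --- e.g., after deleting any one leaf one of the two cycles still has size $\ge 4$ and the other has size $\ge 3$, so the surviving quartet network has a $4$-cycle together with another cycle, violating the "if there is a $4$-cycle then there are no $3$- or $2_2$-cycles" clause of Lemma~\ref{lem:numberofcycles}. (Here I must be slightly careful that deleting a leaf does not shrink a $4$-cycle below size $3$; this holds because each of the $\ge 4$ attachment nodes of a $4$-cycle other than the one possibly carrying~$x$ remains of degree~$3$.) Putting the two claims together yields the statement. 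The main obstacle I anticipate is the bookkeeping in the second claim: tracking exactly which leaves (including~$o$) are forced below which cycle, and verifying in each case that deleting an appropriate single leaf leaves a quartet network that Lemma~\ref{lem:numberofcycles} forbids; the first claim, by contrast, should be essentially immediate from the size count $5 < 6$.
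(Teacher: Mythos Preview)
Your approach is correct but takes an unnecessary detour compared with the paper. The paper's argument is a single step: pass directly from $\mathcal Q^+$ to its associated unrooted semidirected quartet network $\mathcal Q^-$ (Definition~\ref{def::undirected}(3)), which is \emph{already} a 4-taxon object, and apply Lemma~\ref{lem:numberofcycles}. In forming $\mathcal Q^-$ one only removes any chain of 2-cycles above the LSA and suppresses the (degree-2) root; this can decrease the size of at most one cycle, and by exactly~1. Hence a $k$-cycle in $\mathcal Q^+$ with $k\ge 6$ yields a cycle of size $\ge 5$ in $\mathcal Q^-$, impossible; and two cycles of size $\ge 4$ in $\mathcal Q^+$ give two cycles in $\mathcal Q^-$ of sizes $\ge 4$ and $\ge 3$ (since the root lies on at most one of them), which Lemma~\ref{lem:numberofcycles} forbids in every combination.

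Your route---attach an outgroup $o$ to obtain a 5-taxon network $\mathcal Q^-_o$, then delete a leaf to reach a 4-taxon network---does work, and your observation that deleting a single leaf shrinks at most one cycle by at most~1 is the right mechanism. But notice that the shortest instance of your own construction is to delete $o$ itself: this returns exactly $\mathcal Q^-$, and the ``bookkeeping'' you anticipate for the second claim evaporates. So the outgroup step buys nothing here; it was needed for Corollary~\ref{lem::numcyctriplet} only because a rooted \emph{triple} has three taxa and one must manufacture a fourth, whereas a rooted quartet already has four.
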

 
 We now catalog the rooted quartet networks with 4- or 5-cycles, modulo smaller cycles.
 
\begin{lemma}\label{lem::triplets4c}
Let $\mathcal Q^+$ be a level-1 binary rooted quartet network with one $4$-cycle or one $5$-cycle. Then modulo 2- and 3- cycles and up to taxon relabelling, the LSA network $\mathcal Q^\oplus$ is one of those shown in Figure \ref{fig::quartet4c5c}. Thus $\mathcal Q^+$ displays either 1, 2, or 3  rooted triples with a 4-cycle.
\end{lemma}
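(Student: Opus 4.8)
The plan is to enumerate the possible LSA networks $\mathcal Q^\oplus$ by starting from the unrooted semidirected picture and then considering all ways to place the root. By Corollary \ref{cor::numcycquart}, $\mathcal Q^+$ has at most one 4-cycle or one 5-cycle and not both, and working modulo 2- and 3-cycles we may assume all smaller cycles have been suppressed, so $\mathcal Q^+$ has exactly one cycle, of size 4 or 5. Passing to the unrooted semidirected network $\mathcal Q^-$, Lemma \ref{lem:numberofcycles} tells us $\mathcal Q^-$ has no $k$-cycle for $k\ge 5$, so the 5-cycle case for $\mathcal Q^+$ must collapse to a 4-cycle in $\mathcal Q^-$: this happens precisely when the root sits on one of the two hybrid edges of the 5-cycle, so that suppressing the root merges those two hybrid edges into a single edge, and one tree edge of the 5-cycle becomes a pendant cut edge leading to the fourth taxon. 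For the 4-cycle case of $\mathcal Q^+$, either the root lies outside the cycle (and $\mathcal Q^-$ retains the 4-cycle) or the root lies on an edge of the 4-cycle (and suppressing it leaves a 3-cycle, contradicting our modulo-3-cycle assumption), so the root must be outside the 4-cycle.

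Next I would fix the combinatorial shape of a 4-cycle quartet in $\mathcal Q^-$: on four taxa, a 4-cycle uses up 4 of the vertices as the cycle, with pendant edges to the leaves; up to relabelling there is essentially one unrooted such shape, a 4-cycle with one leaf hanging off each cycle vertex (recording that in the $\mathcal Q^+$-with-4-cycle case one of the four cycle vertices is the hybrid node, in the $\mathcal Q^+$-with-5-cycle case it is the merged hybrid vertex, and in both cases directions of the two hybrid edges are retained but we are free to work modulo them where Figure \ref{fig::quartet4c5c} does). Then I re-root: the root of $\mathcal Q^\oplus$ is reconstructed by subdividing one of the edges of $\mathcal Q^-$ and directing away from it, using that above the LSA there is nothing (by definition of the LSA network), so the root must be placed so that every leaf is a descendant and the LSA of all four taxa is the root itself. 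Going through the edge classes of the shape — a pendant edge, a cycle tree-edge, or (only in the 5-cycle case) re-expanding the merged hybrid edge into two — and discarding placements that either do not make the placed vertex the LSA or that recreate a suppressed small cycle, yields the finite list in Figure \ref{fig::quartet4c5c}.

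Finally, for each listed $\mathcal Q^\oplus$ I would read off how many of its four induced rooted triples carry a 4-cycle, by deleting each taxon in turn and suppressing degree-2 nodes: deleting the hybrid leaf (or a leaf whose removal collapses the cycle) destroys the cycle, while deleting a leaf attached above or to the side of the cycle leaves a 4-cycle on the remaining triple; counting across the few cases in the figure gives the values $1$, $2$, or $3$ as claimed. The main obstacle I anticipate is the bookkeeping in the re-rooting step: one must be careful that a chosen root placement genuinely has $\mathrm{LSA}(X)=r$ (so the network really is its own LSA network, with no chain of 2-cycles or extraneous structure above it) and does not inadvertently reintroduce a 2- or 3-cycle that our ``modulo'' convention has already removed; ruling out the spurious placements cleanly, rather than by brute force, is where the argument needs the most care, and is exactly why Lemma \ref{lem:numberofcycles} and Corollary \ref{cor::numcycquart} are invoked to bound the possibilities in advance.
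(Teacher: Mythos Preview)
Your unroot-then-reroot strategy has a genuine gap. You claim that if $\mathcal Q^+$ has a 4-cycle then ``the root must be outside the 4-cycle,'' arguing that a root on the cycle would collapse to a 3-cycle in $\mathcal Q^-$ and thereby contradict the modulo-3-cycle assumption. But that assumption is imposed on $\mathcal Q^+$, not on $\mathcal Q^-$: a 4-cycle in $\mathcal Q^\oplus$ with the root as one of its four vertices is a perfectly valid configuration after suppressing 2- and 3-cycles in $\mathcal Q^\oplus$, even though unrooting produces a 3-cycle in $\mathcal Q^-$. In fact the middle row of Figure~\ref{fig::quartet4c5c} consists of exactly these networks (a rooted-triple 4-cycle network with one leaf replaced by a cherry), and your enumeration misses them entirely because you only reroot the unrooted 4-cycle shape. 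Your description of the 5-cycle case is also garbled: the root is not ``on one of the two hybrid edges'' but is the top vertex of the 5-cycle, and suppressing it merges its two outgoing \emph{tree} edges; no tree edge of the 5-cycle becomes a pendant edge.

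The paper avoids all of this by taking the opposite route: rather than unrooting and rerooting, it starts from a rooted 4-taxon tree and enumerates all ways to add a single hybrid edge so as to create a 4- or 5-cycle. This keeps the root fixed throughout, so there is no need to reason about which root placements are admissible or what happens to cycles under suppression, and the list in Figure~\ref{fig::quartet4c5c} drops out directly. If you want to salvage your approach, you must also enumerate rootings of the unrooted 3-cycle quartet shapes (with the 3-cycle retained, since it came from a genuine 4-cycle in $\mathcal Q^\oplus$), which adds exactly the bookkeeping you were hoping to avoid.
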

\begin{proof}
	Let $\mathcal Q^+$ be a rooted level-1 network on $\{a,b,c,d\}$ with a cycle $C$ of size $4$ or $5$. By  Corollary \ref{cor::numcycquart}, $C$ is the only cycle of size greater than 3. Figure \ref{fig::quartet4c5c} shows the topologies, up to taxon relabeling, of all the rooted quartet networks with a $4$- or $5$-cycle and no  $2$- or $3$-cycles, as determined by enumerating all possible locations for adding hybrid edges to a rooted 4-taxon tree. The top row of Figure  \ref{fig::quartet4c5c} shows the quartet networks with exactly one displayed rooted triple, on $\{a,b,c\}$, having a 4-cycle. The middle row shows the networks with exactly two displayed rooted triples, on $\{a,b,c\}$ and $\{a,b,d\}$, having a 4-cycle. The bottom row shows those  with exactly three displayed rooted triples, on $\{a,b,c\}$, $\{a,b,d\}$, and $\{a,c,d\}$, having a 4-cycle. \qed\end{proof}
	
\begin{figure}
\begin{center}

\includegraphics{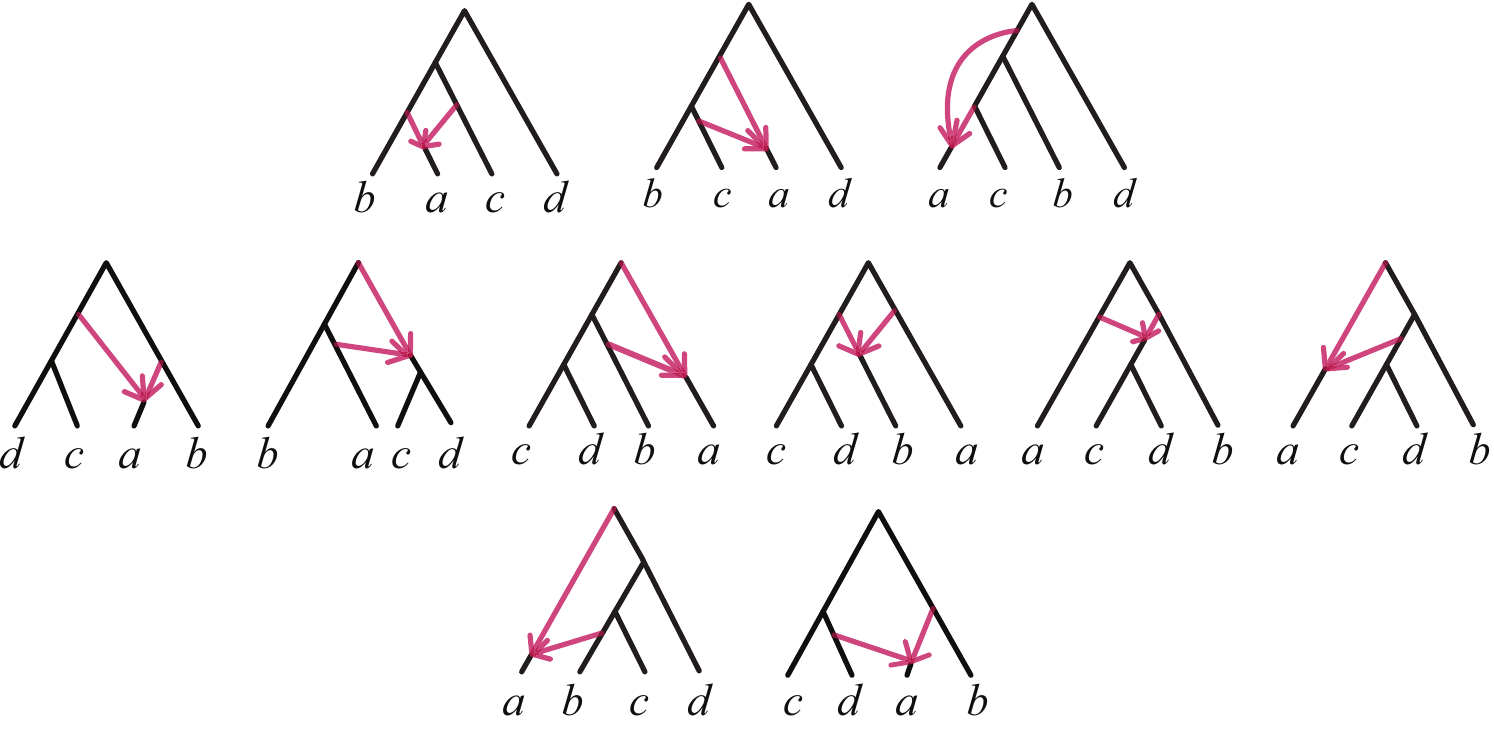}

\end{center}
\caption{All rooted directed topological quartet networks with a single $4$- or $5$-cycle, and no other cycles, up to relabeling of taxa. Networks in the top row display exactly one rooted triple with a 4-cycle, those in the middle row display two, and those in the bottom row display three.}\label{fig::quartet4c5c}
\end{figure}

Now we proceed to the main result of this section. 

\begin{proposition}\label{prop::combnet}
Let $\mathcal N^+$ be a level-1 rooted binary topological  phylogenetic network on $X$. Let $S$ be the set of undirected rooted triple networks obtained from those displayed on $\mathcal N^+$ by suppressing all cycles of size 2 and 3 and undirecting all edges.  Then modulo 2- and 3-cycles and directions of edges in 4-cycles, the LSA network $\mathcal N^\oplus$ is determined by $S$.  
\end{proposition}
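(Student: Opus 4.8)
The plan is to turn the rooted-triple information in $S$ into quartet information and then invoke Lemma~\ref{lem:banos}. To handle the root without a separate argument, I would first attach an outgroup: fix $o\notin X$ and let $\widehat{\mathcal N}^{+}$ be the network on $X\cup\{o\}$ obtained from $\mathcal N^{+}$ by adjoining a new root whose two children are the leaf $o$ and the old root of $\mathcal N^{+}$. Then $\widehat{\mathcal N}^{+}$ is again a level-1 binary rooted phylogenetic network, the quartet networks displayed on $\widehat{\mathcal N}^{+}$ that lie inside $X$ are exactly those displayed on $\mathcal N^{+}$, and the quartet network of $\widehat{\mathcal N}^{+}$ on a set $\{a,b,c,o\}$ is, after unrooting, undirecting, and suppressing $2$- and $3$-cycles, exactly the element of $S$ on $\{a,b,c\}$ with a new pendant leaf $o$ adjoined at its root. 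Moreover $\mathcal N^{\oplus}$, modulo $2$- and $3$-cycles and directions of edges in $4$-cycles, is recovered from $\widehat{\mathcal N}^{-}$ by a deterministic procedure: delete the leaf $o$, declare its former neighbour (now a degree-two node) to be the root, direct all tree edges away from it, and orient the non-hybrid edges of each cycle of size $\ge 5$ consistently with the hybrid-edge directions retained in $\widehat{\mathcal N}^{-}$. Hence it suffices to reconstruct, from $S$ alone, every undirected quartet network displayed on $\widehat{\mathcal N}^{+}$, and then apply Lemma~\ref{lem:banos} to $\widehat{\mathcal N}^{+}$.

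The quartet networks of $\widehat{\mathcal N}^{+}$ containing $o$ are given for free by the description above. For a quartet $Y=\{a,b,c,d\}\subseteq X$, I would reconstruct the undirected quartet network $\mathcal Q_{Y}$ displayed on $\mathcal N^{+}$ (that is, $\mathcal N^{+}_{Y}$ unrooted, with $2$- and $3$-cycles suppressed and all edges undirected, as appears in the set $Q$ of Lemma~\ref{lem:banos}) from the four elements of $S$ on the $3$-subsets of $Y$, as follows. By Corollary~\ref{cor::numcycquart}, $\mathcal N^{+}_{Y}$ has at most one cycle of size $\ge 4$ --- a single $4$- or $5$-cycle --- which unroots to a single $4$-cycle, so $\mathcal Q_{Y}$ is either an unrooted binary tree (modulo $2$- and $3$-cycles) or a $4$-cycle carrying one pendant leaf at each of its four nodes. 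The $4$-cycle is present if and only if at least one of the four induced undirected rooted triples of $S$ carries a $4$-cycle: a $4$-cycle in a displayed triple is the image of a cycle of $\mathcal N^{+}_{Y}$ of size $\ge 4$, because passing to a $3$-taxon sub-network can only shrink cycles, and such a cycle persists in $\mathcal Q_{Y}$ because only $2$- and $3$-cycles are suppressed; conversely, Lemma~\ref{lem::triplets4c} guarantees that exactly $1$, $2$, or $3$ of the triples carry a $4$-cycle. In the tree case, the topology of $\mathcal Q_{Y}$ is recovered from the induced rooted triple trees in the standard way. In the $4$-cycle case, the arrangement of the four taxa around the cycle --- equivalently, the partition of $Y$ into the two pairs of cycle-opposite leaves --- is read off by comparing the list of which $3$-subsets of $Y$ support a $4$-cycle triple, together with the shapes of those triples, against the catalogue of Figure~\ref{fig::quartet4c5c}.

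Running the above over all $4$-subsets of $X\cup\{o\}$ produces the complete family of undirected quartet networks displayed on $\widehat{\mathcal N}^{+}$; Lemma~\ref{lem:banos} applied to $\widehat{\mathcal N}^{+}$ then determines $\widehat{\mathcal N}^{-}$ modulo $2$- and $3$-cycles and directions of edges in $4$-cycles, and the root-recovery procedure of the first paragraph then yields $\mathcal N^{\oplus}$ to the same resolution, which is the assertion of Proposition~\ref{prop::combnet}.

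I expect the crux to be the quartet-from-triples step, specifically the claim that the networks catalogued in Figure~\ref{fig::quartet4c5c} are pairwise distinguished --- up to the relevant taxon-relabelling equivalence --- by the labelled data recording which $3$-subsets support a $4$-cycle triple and what those triples look like. Lemma~\ref{lem::triplets4c} tells us how many such triples occur but not, on its own, that two inequivalent quartet networks cannot produce the same pattern, so this requires a careful but finite inspection of the figure. A secondary point to check is that the extraction behaves correctly in the boundary cases of the passage to triples: that a $5$-cycle of $\mathcal N^{+}_{Y}$ yields only $4$-cycle (never $5$-cycle) triples, and that no $4$-cycle of $\mathcal N^{+}_{Y}$ is overlooked because it shrinks to a cycle of size $\le 3$ in every induced triple --- which is ruled out precisely by the final assertion of Lemma~\ref{lem::triplets4c}.
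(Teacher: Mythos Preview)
Your proposal is correct and follows essentially the same route as the paper: attach an outgroup, convert the rooted-triple data into the full family of undirected quartet networks on $X\cup\{o\}$, invoke Lemma~\ref{lem:banos}, and then re-root at $o$. The only real difference is organizational. The paper first reconstructs the \emph{rooted} undirected quartet $\mathcal N^\ominus_{abcd}$ for every $4$-subset of $X$ via an explicit case analysis on the number $k\in\{0,1,2,3\}$ of triples with a $4$-cycle (using the catalogue of Figure~\ref{fig::quartet4c5c}), then attaches $o$ to each rooted quartet to obtain $5$-taxon unrooted networks, and finally extracts all $4$-taxon subnetworks from those. You instead attach $o$ at the outset, observe that quartets containing $o$ are the elements of $S$ with $o$ grafted at the root, and for quartets inside $X$ aim only for the \emph{unrooted} $\mathcal Q_Y$, which requires strictly less information (just the opposite-pair partition in the $4$-cycle case). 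Your path is a bit more streamlined in avoiding the $5$-taxon detour; the paper's is more explicit, actually carrying out the inspection of Figure~\ref{fig::quartet4c5c} that you correctly flag as the crux but defer.
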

\begin{proof}
We first build a set of rooted quartet networks from $S$. 
 Let $\{a,b,c,d\}\in X$  and let $S_{abcd}\subseteq S$ be the set of undirected rooted triple networks on any three elements of $\{a,b,c,d\}$, so $|S_{abcd}|=4$. 
By Corollary  \ref{cor::numcycquart} and Lemma \ref{lem::triplets4c}, there are $k=0$, $1$, $2$, or $3$ elements  of $S_{abcd}$ with a 4-cycle. We consider each possibility in turn, showing that we can determine the undirected rooted quartet network $N^\ominus_{abcd}$ modulo 2- and 3-cycles.  
 \smallskip
 
If $k=0$, all rooted triples in $S_{abcd}$ are trees and since $\mathcal N^+_{abcd}$ has no 4- or 5-cycles by Lemma \ref{lem::triplets4c}, the undirected LSA network $\mathcal N^\ominus_{abcd}$  modulo 2-and 3-cycles is a tree. By a well-known result for trees \cite{Semple2005}, $S_{abcd}$ determines $\mathcal N^\ominus_{abcd}$ modulo 2- and 3-cycles.

If $k=1$, then modulo 2- and 3-cycles and relabelling of taxa,
$\mathcal N^+_{abcd}$ is isomorphic to one of the networks in the top row of Figure \ref{fig::quartet4c5c}. But for all these networks if $a,b,c$ are the taxa in the rooted triple network with a 4-cycle, then the rooted 4-taxon network is obtained by attaching $d$ as an outgroup to it. Thus $\mathcal N^\ominus_{abcd}$ is determined modulo 2- and 3-cycles.

If $k=2$, $\mathcal N^+_{abcd}$ is isomorphic, modulo 2- and 3-cycles and relabeling, to one of the networks in the middle row of Figure  \ref{fig::quartet4c5c}. Note that for all those rooted quartet networks, the displayed rooted triple networks with 4-cycles are on $\{a,b,c\}$ and $\{a,b,d\}$, and the 4-taxon network can be obtained from either of these by replacing $c$ or $d$ with a cherry on $\{c,d\}$, thus determining $\mathcal N^\ominus_{abcd}$ modulo 2- and 3-cycles.
 
If $k=3$, $\mathcal N^+_{abcd}$ is isomorphic, modulo $2$-, and $3$-cycles and relabeling, to  one of the networks in the bottom row of Figure \ref{fig::quartet4c5c}.  In both of these, there is exactly one taxon, $a$,  that is in all three rooted triple networks with 4-cycles, and there is exactly one taxon, $c$,  that has graph-theoretic distance 3 from $a$ in exactly one of the two rooted triples with 4-cycles it appears in. 
Thus we can determine which taxon is $a$, and which is $c$. For the remaining pair $b,d$, if there is a taxon that is at distance 4 from  $a$ in both 4-cycle rooted triple networks it appears in, then the 4-taxon network is the one shown on the left, and that taxon is $d$. Otherwise, the network is the one shown on the right.
In this case there is exactly one rooted triple network on $a$ and $c$ which has its third taxon at distance 2 from the root, and this determines $b$.
Thus we obtain the rooted 4-taxon network $\mathcal N^\oplus_{abcd}$  modulo 2- and 3-cycles, and hence $\mathcal N^\ominus_{abcd}$  modulo 2- and 3-cycles
  
With all rooted 4-taxon networks  $\mathcal N^\ominus_{abcd}$ modulo 2- and 3-cycles determined, we attach an outgroup $o$ to all, giving the collection of all 5-taxon unrooted networks including $o$, modulo 2- and 3-cycles, induced from the unrooted network $\mathcal N'$ formed by attaching $o$ to the root of $\mathcal N^+$. But the unrooted 4-taxon networks displayed on these 5-taxon ones form the collection of all 4-taxon undirected networks (possibly including $o$) modulo 2- and 3-cycles displayed on $\mathcal N'$.

Lemma \ref{lem:banos} now determines $\mathcal N'$ modulo 2- and 3-cycles, with directions of cut edges and edges in cycles of size $\ge 5$, though not in 4-cycles.  Rooting $N'$ by the outgroup $o$ we recover the topology of $\mathcal N^\oplus$ modulo 2- and 3-cycles and directions of edges in 4-cycles.
\qed\end{proof}

%%%%%%%%%%%%%%%%%%%%%%%%%%%%%%%%%%%%%%%%

\section{Expected pattern frequencies as convex sums}\label{sec::freqs}

The theoretical logDet distance between taxa depends on the matrix of expected relative site-pattern frequencies $F^{xy}$ in aligned sequences for taxa $x,y$, under the mixture of coalescent mixtures model $\mathcal M(\theta)$. The goal of this section is to show that $F^{xy}$ on a level-1 ultrametric rooted triple network can be expressed as a convex combination of frequency matrices for networks with no cycles below the LSA of the taxa. In this way, we reduce the computation of $F^{xy}$ to its computation on simpler networks. This is complicated somewhat by the fact that the convex combination may have terms which are expected pattern frequencies  conditioned on a pair of lineages coalescing below a certain node in a network. 

The lemmas that follow often involve modifying a network $\mathcal N^+$ by removing a hybrid edge, to obtain a new network $\mathcal N^+_i$. If one hybrid edge in a cycle is removed, the hybrid node is then suppressed as the other hybrid edge is joined to the descendant tree edge and given the induced length and population size.
We  retain all other edge lengths and population sizes, as well as hybrid parameters for unaffected cycles.  The parameters for the substitution process describing sequence evolution on gene trees are also retained. If $\theta$ denotes the full set of parameters associated to $\mathcal N^+$, then $\theta_i$ denotes the  full set of parameters associated to $\mathcal N^+_i$ in this way.
Notation such as  $F^{xy}(\theta)$ or $F^{xy}(\theta_i)$ denotes the dependence of $F^{xy}$ on the parameters $\theta$ or $\theta_i$, which include the network $\mathcal N^+$ or $\mathcal N_i^+$.

\begin{figure}\label{fig:234cycles}
	\begin{center}
	\includegraphics[width=4.in]{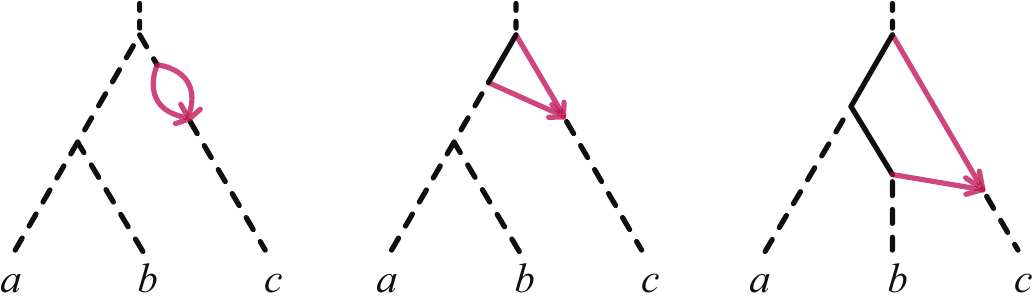}
	\end{center}
\caption{Examples of level-1 rooted triple networks with  $2_1$-, $3_1$-, and $4_1$-cycles. While multiple $2_1$-cycles may be present along any pendant edge shown here in dashes, there can be at most two $3_1$-cycles, whose hybrid nodes are located on a dashed pendant edge. At most one $4_1$-cycle can be present.
Site-pattern frequency matrices from the model $\mathcal M$ on rooted triple networks with these types of cycles are convex combinations of such matrices for 1, 2, or 4 networks without those cycles, as shown by Lemmas \ref{lem::2_1cycles} and \ref{lem::3_1and4cycles}.}
\end{figure}

The most straightforward network simplifications occur when the hybrid node of a cycle has a single descendant leaf, as depicted by the  example $2_1$-, $3_1$- and $4_1$-cycles in Figure \ref{fig:234cycles}. 

\begin{lemma}\label{lem::2_1cycles} (Removing $2_1$-cycles)
	Let $\mathcal N^+$ be a binary level-1 ultrametric rooted triple network on $\{a,b,c\}$ and let $C$ be a $2_1$-cycle in $\mathcal N^+$ with hybrid edges $h_1,h_2$. Let $\mathcal N^+_1$  be the network obtained from $\mathcal N^+$ by removing $h_2$.   Then, under the model $\M$ for any $x,y\in\{a,b,c\}$,
	$$F^{xy}(\theta)= F^{xy}(\theta_1).$$
\end{lemma}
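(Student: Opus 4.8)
The plan is to show that the models $\mathcal M(\theta)$ and $\mathcal M(\theta_1)$ induce the same distribution on site patterns at $\{a,b,c\}$; since each $F^{xy}$ is a $2$-dimensional marginal of that distribution, the identity $F^{xy}(\theta)=F^{xy}(\theta_1)$ follows immediately. The principle behind this is that a $2_1$-cycle is invisible to the NMSC: its hybrid node $w$ has a single descendant leaf, so in every realization exactly one gene lineage ever enters the cycle, and a lone lineage cannot coalesce, so neither the population-size functions on the two cycle edges nor the hybridization parameters $\gamma_{h_1},\gamma_{h_2}$ can affect anything.

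First I would record the relevant structural facts. Both hybrid edges of the $2$-cycle $C$ run from the common tree node $u$ at the top of $C$ to the hybrid node $w$; since $w$ lies above a leaf, the two root-to-leaf directed paths that coincide except for using $h_1$ in place of $h_2$ have equal total length, so ultrametricity forces $\ell_{h_1}=\ell_{h_2}=:\ell$. Passing to $\mathcal N^+_1$ deletes $h_2$, suppresses $w$, and then suppresses $u$ (now a degree-$2$ node), collapsing the edge above $u$, the cycle $C$, and the edge below $w$ into a single edge whose length is the sum of the three relevant lengths and whose population function is their concatenation; in particular the generation-heights of the parent of $u$ and of the child of $w$ are unchanged, so everything outside this region sits at exactly the same times in $\mathcal N^+$ and $\mathcal N^+_1$.

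Next I would compare the two NMSC processes by conditioning on the hybrid choice made at $w$. Given either choice, the unique lineage ancestral to the leaf below $w$ ascends from $w$ to $u$ occupying exactly $\ell$ generations, with no other lineage present and hence no possibility of coalescence --- precisely as that lineage traverses the merged edge of $\mathcal N^+_1$ --- while the rest of the network, its edge lengths, its population functions, and its remaining hybridization parameters are identical to those of $\mathcal N^+_1$. Consequently the distribution of the time-embedded gene tree on $\{a,b,c\}$, conditioned on either choice, equals the gene-tree distribution under $\theta_1$, and hence so does the unconditional gene-tree distribution under $\theta$. Since the GTR+$\mu$ substitution process along a gene tree is determined entirely by how that tree sits on the generation-time axis, together with the retained class parameters $\{(Q_i,\pi_i;\mu_i)\}$ and weights $\lambda$, the two site-pattern distributions coincide, and therefore so do the matrices $F^{xy}$.

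The step I expect to require the most care is justifying that the hybrid choice at $w$ contributes no randomness that survives to the gene tree or to the sequences. This rests on two observations: the ultrametricity identity $\ell_{h_1}=\ell_{h_2}$, and the fact that the union of the edge above $u$, the two hybrid edges, and the edge below $w$ is ancestral to exactly one leaf, so it carries a single lineage in every realization and its population-size functions are irrelevant. Once these are in place the remaining steps are routine, as the GTR+$\mu$ process sees only the placement of gene-tree edges on the time axis, which is identical for $\theta$ and $\theta_1$.
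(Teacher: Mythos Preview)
Your proposal is correct and follows essentially the same approach as the paper: both arguments condition on the hybrid choice at the $2_1$-cycle, observe that the two hybrid edges have equal length by ultrametricity, and use the fact that a single lineage in an edge makes the population size irrelevant, so either choice yields the same gene-tree (and hence site-pattern) distribution as $\theta_1$. The paper phrases this via the convex combination $F^{xy}(\theta)=\gamma_1 F^{xy}(\theta_1)+\gamma_2 F^{xy}(\theta_2)$ and then argues $F^{xy}(\theta_1)=F^{xy}(\theta_2)$, while you argue directly that both conditional distributions equal that of $\theta_1$; these are the same idea.
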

\begin{proof}	Since  the hybrid node of $C$ has only one descendant,  the combined coalescent and substitution process on $\mathcal N^+$ can be expressed as a linear combination of those processes on $\mathcal N^+_1,\mathcal N^+_2$,  weighted by $\gamma_1=\gamma(h_1),\gamma_2=\gamma(h_2)$.   That is, for any $x,y\in\{a,b,c\}$,
	 	$$F^{xy}(\theta)= \gamma_1 F^{xy}(\theta_1)+\gamma_2 F^{xy}(\theta_2).$$
But $\mathcal N^+_1$ and $\mathcal N^+_2$ only differ by $h_1$ and $h_2$ which have the same length, though possibly different population sizes. However, since only one lineage can be present in the population for those edges, those population sizes have no impact in model $\mathcal M$, so $F^{xy}(\theta_2)=F^{xy}(\theta_1)$. Since $\gamma_1+\gamma_2=1$, the claim follows.
\qed\end{proof}

If a network $\mathcal N^+$ has multiple $2_1$-cycles, then applying Lemma \ref{lem::2_1cycles} repeatedly gives $ F^{xy}(\theta)=F^{xy}(\widetilde \theta)$ where $\widetilde{\mathcal N}^+$ is a rooted network with no $2_1$-cycles obtained from $\mathcal N^+$ by deleting one hybrid edge in each of the $2_1$-cycles on $\mathcal N^+$.

\begin{lemma}\label{lem::3_1and4cycles} (Decomposing $3_1$- and $4_1$-cycles)
	Let $\mathcal N^+$ be a binary level-1 ultrametric rooted triple network on $\{a,b,c\}$ and let $C$ be either a $3_1$- or a $4_1$-cycle on $\mathcal N^+$. Let $h_1,h_2$  be   the  hybrid edges of $C$ with $\gamma_i=\gamma(h_i)$. Let $\mathcal N^+_i$  be the network obtained from $\mathcal N^+$ by removing $h_j$, $j\ne i$.  Then, under the model $\M$ for any $x,y\in\{a,b,c\}$,
	$$F^{xy}(\theta)= \gamma_1 F^{xy}(\theta_1)+\gamma_2 F^{xy}(\theta_2).$$
\end{lemma}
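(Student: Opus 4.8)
The plan is to mirror the proof of Lemma~\ref{lem::2_1cycles}, exploiting the same structural feature: since $C$ is a $3_1$- or a $4_1$-cycle, its hybrid node $v$ has exactly one leaf descendant, say $z\in\{a,b,c\}$. Every population edge below $v$, as well as both hybrid edges $h_1,h_2$, is ancestral only to $z$, so under the NMSC the only sampled lineage ever present on those edges is the single lineage sampled at $z$. In particular no coalescence can occur there, that lineage reaches $v$ with probability $1$, and the population size functions assigned to $h_1$, $h_2$, and to the pendant edge from $v$ down to $z$ are immaterial to the model $\M$.

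First I would condition on the event $H_i$ that, upon reaching $v$, the $z$-lineage continues up the hybrid edge $h_i$; by the definition of the NMSC this event has probability $\gamma_i$ and is independent of all the other randomness in the coalescent process, including the choices made at any other hybrid nodes of $\mathcal N^+$. I would then check that, conditioned on $H_i$, the expected site-pattern frequency matrix for any pair of taxa equals the one determined by $\theta_i$. This holds because $\mathcal N^+_i$ is obtained from $\mathcal N^+$ precisely by deleting $h_j$, $j\neq i$, and suppressing the resulting degree-$2$ node(s) --- the former hybrid node $v$, together with one further tree node when $C$ is a $4_1$-cycle --- with induced edge lengths, while retaining all remaining edge lengths, hybridization parameters, population sizes, and substitution parameters. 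The affected edge lengths add up correctly, and the only population functions disturbed by the suppression belong to edges carrying a single lineage and are hence irrelevant; so the NMSC on $\mathcal N^+$ conditioned on $H_i$ induces exactly the gene-tree distribution of the NMSC on $\mathcal N^+_i$, while for any fixed metric gene tree the per-site GTR+$\mu$ process, mixed over classes, is manifestly the same for $\theta$ and $\theta_i$. Weighting the cases $i=1,2$ by $\gamma_1,\gamma_2$ and summing then yields $F^{xy}(\theta)=\gamma_1 F^{xy}(\theta_1)+\gamma_2 F^{xy}(\theta_2)$ for every $x,y\in\{a,b,c\}$.

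The substantive step, as opposed to routine bookkeeping, is the verification that conditioning on the hybrid-edge choice at $v$ genuinely reproduces the model on $\mathcal N^+_i$: one must confirm that the $z$-lineage is the only lineage present on $h_1$, $h_2$, or anywhere at or below $v$, so that this choice factors out of the NMSC as a plain $\gamma_i$-weight, and that deleting $h_j$ and suppressing $v$ (together with the extra degree-$2$ node in the $4_1$ case) leaves every other part of the coalescent untouched. Unlike the $2_1$-cycle case --- where ultrametricity forces $h_1$ and $h_2$ to have equal length and hence $\mathcal N^+_1$ and $\mathcal N^+_2$ to coincide up to irrelevant population sizes, collapsing the convex combination to a single term --- here $\mathcal N^+_1$ and $\mathcal N^+_2$ are in general genuinely different networks (compare the $3_1$- and $4_1$-cycle examples of Figure~\ref{fig:234cycles}), which is exactly why the stated decomposition has two terms. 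Ultrametricity of $\mathcal N^+$ is inherited by both $\mathcal N^+_i$ but otherwise plays no role, so beyond this bookkeeping I anticipate no real obstacle.
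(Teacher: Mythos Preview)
Your proposal is correct and follows essentially the same approach as the paper: the paper's proof simply observes that since the hybrid node of $C$ has only one descendant, the combined coalescent and substitution process on $\mathcal N^+$ is a $\gamma_i$-weighted linear combination of the processes on the $\mathcal N^+_i$. Your write-up just unpacks this observation in more detail, correctly handling the suppression of degree-$2$ nodes and the irrelevance of population sizes on single-lineage edges.
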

 \begin{proof}
 	 	Since  the hybrid node of $C$ has only one descendant, we can express the combined coalescent and substitution process on $\mathcal N^+$ as a linear combination of the processes of the $\mathcal N_i$, with coefficients $\gamma_i$, $i=1,2$. 
		\qed\end{proof}
 
A level-1 rooted triple network may have one $4_1$-cycle, one $3_1$-cycle, or two $3_1$-cycles. In the last case, Lemma \ref{lem::3_1and4cycles} may be applied twice, to express the pattern frequency matrix under the model as a convex combination of four such matrices for networks with no $3_1$-cycles.
	
With Lemma \ref{lem::2_1cycles} this shows that computation of the  matrix of relative site-pattern frequencies of a level-1 ultrametric  rooted triple network $\mathcal N^+$ reduces to cases where  there are no $2_1$-, $3_1$-, or $4_1$-cycles. The effects of $2_2$- and  $3_2$-cycles  are more complicated, however, as a coalescent event may or may not occur below the hybrid nodes of such cycles. 

The following definition facilitates studying the impact of such cycles. In it a node $p$ may be either an existing node or a new node introduced along an edge of a network, with appropriate division of the original edge length and population function. Although strictly speaking this second case passes out of the class of binary networks, we allow this only to simplify reference to intermediate states of the coalescent process.
 
\begin{definition} Let $K_p(\theta)$  be the random variable giving the number of lineages at node $p\in V(\mathcal N ^+)$ under the NMSC. With $X_p\subseteq X$ denoting the set of taxa below $p$,  $K_p(\mathcal N^+)$ has sample space $\left\{1,2,\dots,|X_p|\right\}$.  
\end{definition}

When $\theta$ is clear from context we write  $K_p=K_p(\theta)$. We also use the notation  $F^{xy}_{|K_p=m}(\theta)$ to denote the joint distribution of site patterns conditioned on $K_p=m$ under the model $\M$ with parameters $\theta$.
 
 \begin{figure}
\begin{center}
 \includegraphics[width=4.in]{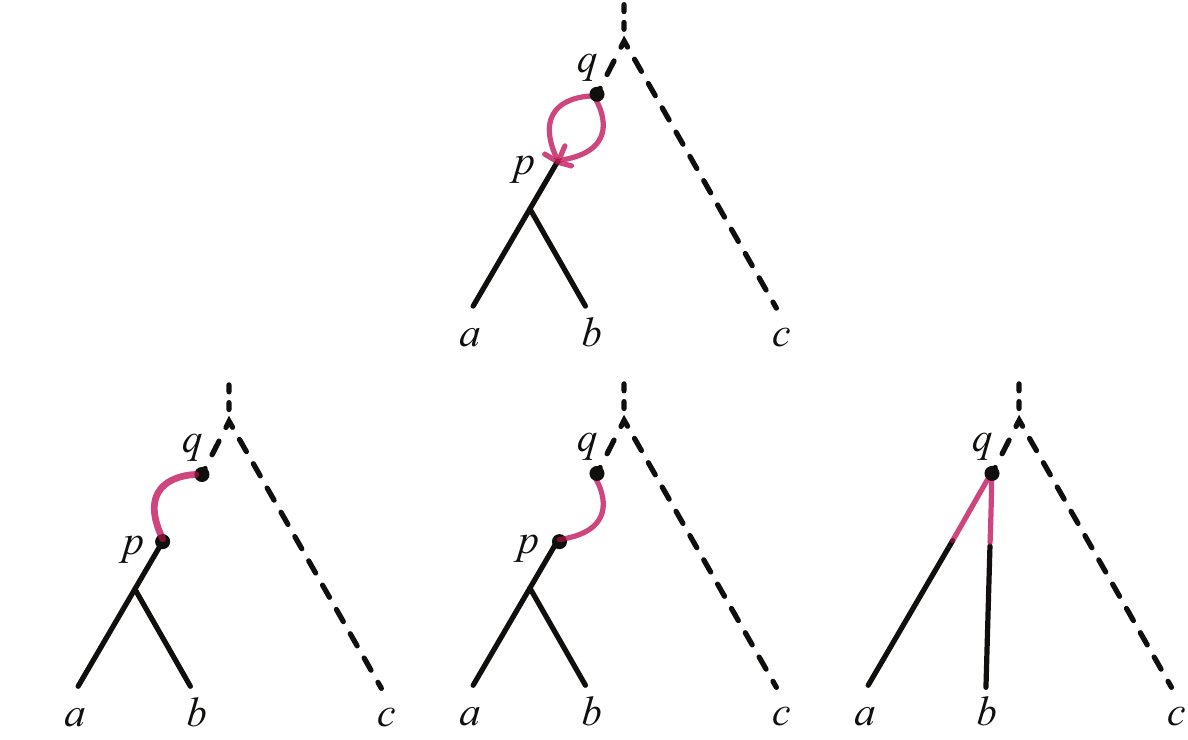}
\end{center}
 \caption{ (Top) A rooted level-1 ultrametric network on $\{a,b,c\}$, with the $2_2$-cycle closest to LSA($a,b$) shown. (Bottom) The networks $\mathcal N^+_1$, $\mathcal N^+_2$, and $\mathcal N^+_0$ obtained from $\mathcal N^+$, respectively, as described in Lemma
\ref{lem::2cyc}. Note that there may be additional cycles along the dashed lines, with hybrid nodes above node $q$ and taxon $c$.}\label{fig::2cyc}
 \end{figure}
 
\begin{lemma}\label{lem::2cyc}(Decomposing $2_2$-cycles)
 Let $\mathcal N^+$ be a binary level-1 ultrametric  rooted triple network on $\{a,b,c\}$ without $2_1$- or $3_1$-cycles. Suppose, as depicted in Figure \ref{fig::2cyc}, $C$ is a $2_2$-cycle on $\mathcal N^+$, with edges $h_1,h_2$ from node $q$ to hybrid node $p$, hybridization parameters $\gamma_i=\gamma(h_i)$, leaf descendants $a,b$ of $p$, and no cycles below $p$.   Denote by $\mathcal N^+_i$, $i=1,2$ the network obtained from $\mathcal N^+$ by removing $h_j$, $j\ne i$ and by $\mathcal N^+_0$ 
 the network obtained from $\mathcal N^+$ by deleting all edges and nodes below $q$ and attaching edges $(q,a)$ and $(q,b)$ of appropriate length so that $\mathcal N^+_0$ is ultrametric. Then, under the model $\M$ for any $x,y\in\{a,b,c\}$,
 \begin{align*}
F^{xy}(\theta)= & \gamma_1^2 F^{xy}(\theta_1 )+\gamma_2^2 F^{xy}(\theta_2 )+ P(K_p=2)2\gamma_1\gamma_2F^{xy}(\theta_0 )+P(K_p=1)2\gamma_1\gamma_2F^{xy}_{|K_p=1}(\theta_1).
\end{align*} 
\end{lemma}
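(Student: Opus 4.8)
The plan is to condition on how many lineages ascend from the hybrid node $p$ and, in the case $K_p = 2$, on which of the two hybrid edges each of the two lineages follows as it traces back toward the root. Fix taxa $x,y \in \{a,b,c\}$. Because there are no $2_1$- or $3_1$-cycles and no cycles strictly below $p$, the only nondeterministic behavior at $p$ is the coalescent within the subtree below $p$ (governed by $N_e$ on those edges) together with the independent Bernoulli choice of hybrid edge made by each lineage reaching $p$. First I would split on $K_p \in \{1,2\}$. On the event $K_p = 1$, the single lineage picks $h_1$ with probability $\gamma_1$ and $h_2$ with probability $\gamma_2$; but as in the proof of Lemma~\ref{lem::2_1cycles}, the resulting processes on $\mathcal N^+_1$ and $\mathcal N^+_2$ agree (only one lineage is ever present on $h_1$ or $h_2$, so the differing population sizes there are irrelevant, and $h_1,h_2$ have equal length). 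Hence the conditional contribution is $2\gamma_1\gamma_2 \cdot 0 + \gamma_1^2 F^{xy}_{|K_p=1}(\theta_1) + \gamma_2^2 F^{xy}_{|K_p=1}(\theta_1)$ — wait, more carefully: conditioned on $K_p=1$, the choice is a single draw, giving $\gamma_1 F^{xy}_{|K_p=1}(\theta_1) + \gamma_2 F^{xy}_{|K_p=1}(\theta_2) = F^{xy}_{|K_p=1}(\theta_1)$, weighted overall by $P(K_p=1)$. I would note this accounts for the last term, but I must reconcile it with the stated coefficient $P(K_p=1)2\gamma_1\gamma_2$; the remaining mass $P(K_p=1)(\gamma_1^2+\gamma_2^2)$ gets folded into the $F^{xy}(\theta_1)$ and $F^{xy}(\theta_2)$ terms, since on $\mathcal N^+_i$ the lineage deterministically takes $h_i$ and the coalescent below behaves identically.

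Next, on the event $K_p = 2$, the two lineages make independent choices: with probability $\gamma_1^2$ both take $h_1$, with probability $\gamma_2^2$ both take $h_2$, and with probability $2\gamma_1\gamma_2$ they split, one on each hybrid edge. If both take $h_i$, the subsequent process is exactly that on $\mathcal N^+_i$ conditioned on two lineages having reached $p$ — and since $\mathcal N^+_i$ retains the full subtree below $q$, conditioning on $K_p=2$ and then running the process is the same as the unconditioned process on $\mathcal N^+_i$ reweighted; combined with the $K_p=1$ bookkeeping above, the total $h_1$-only contribution assembles to $\gamma_1^2 F^{xy}(\theta_1)$ and likewise $\gamma_2^2 F^{xy}(\theta_2)$. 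If the two lineages split, one on each hybrid edge, then immediately above $q$ there are again exactly two lineages entering the edge above $q$, and the pendant edges $(q,a),(q,b)$ below have been replaced (in $\mathcal N^+_0$) by two edges directly from $q$; the key observation is that forcing the two lineages not to coalesce until they pass $q$ is precisely the process on $\mathcal N^+_0$, because in $\mathcal N^+_0$ the two leaf lineages $a,b$ ascend along separate edges to $q$ with no opportunity to coalesce below $q$. This yields the term $P(K_p=2)\,2\gamma_1\gamma_2\, F^{xy}(\theta_0)$.

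Assembling the four mutually exclusive cases — $(K_p=2,\text{both }h_1)$, $(K_p=2,\text{both }h_2)$, $(K_p=2,\text{split})$, $(K_p=1)$ — by the law of total probability on the site-pattern distribution gives the claimed identity, once the $K_p=1$ terms with coefficient $\gamma_1^2,\gamma_2^2$ are merged into the first two summands using $\gamma_1^2 P(K_p=2) + \gamma_1^2 P(K_p=1) = \gamma_1^2$. I would present this merging explicitly: $F^{xy}(\theta_i)$ itself decomposes over $K_p\in\{1,2\}$ as $P(K_p=2)F^{xy}_{|K_p=2}(\theta_i) + P(K_p=1)F^{xy}_{|K_p=1}(\theta_i)$, and the $\gamma_i^2$-weighted pieces of the conditional decomposition of $F^{xy}(\theta)$ reconstitute exactly $\gamma_i^2 F^{xy}(\theta_i)$.

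The main obstacle is the third term: justifying rigorously that the process on $\mathcal N^+$ conditioned on $\{K_p=2\}$ and on the two lineages taking different hybrid edges coincides with the (unconditioned) process on $\mathcal N^+_0$. One must check that (i) the coalescent dynamics everywhere strictly above $q$ are untouched by this conditioning, since the two lineages enter the edge above $q$ regardless; (ii) the replacement pendant edges in $\mathcal N^+_0$ are assigned lengths making the network ultrametric, so that the two lineages arrive at $q$ at the same time as they would in $\mathcal N^+$ (here ultrametricity in generations is essential, and the GTR+$\mu$ rate scalar $\mu(t)$ being time-dependent but lineage-independent means substitution along those pendant edges also matches); and (iii) in $\mathcal N^+$ the two lineages, being on separate hybrid edges $h_1,h_2$ which may carry different population sizes, nonetheless cannot coalesce (they are in different populations) until they reunite above $q$ — exactly mirroring $\mathcal N^+_0$ where they are on separate tree edges below $q$. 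Once these three points are nailed down, the remaining computation is the routine total-probability bookkeeping sketched above.
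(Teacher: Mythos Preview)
Your proposal is correct and follows essentially the same route as the paper: condition on $K_p\in\{1,2\}$, split the $K_p=2$ case into the three hybrid-edge choices $(h_1,h_1),(h_2,h_2),(h_1,h_2)$, invoke the $2_1$-cycle argument to identify $F^{xy}_{|K_p=1}(\theta)=F^{xy}_{|K_p=1}(\theta_1)=F^{xy}_{|K_p=1}(\theta_2)$, and then use $1=\gamma_1^2+\gamma_2^2+2\gamma_1\gamma_2$ to redistribute the $K_p=1$ mass so that the $\gamma_i^2$ pieces recombine with the corresponding $K_p=2$ pieces into the unconditioned $\gamma_i^2 F^{xy}(\theta_i)$. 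The paper's write-up is more compact (it applies the identity $1=\gamma_1^2+\gamma_2^2+2\gamma_1\gamma_2$ in one line rather than framing it as ``merging''), but the argument is the same; your points (i)--(iii) justifying the $\theta_0$ term are exactly what is being used implicitly there.
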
  
\begin{proof}
Since the structure of the model for $\mathcal N^+$, $\mathcal N^+_1$, and $\mathcal N^+_2$ is identical below $p$, we may also use $K_p$ to denote
$K_p(\theta_1)$ and  $K_p(\theta_2)$. Thus
 \begin{align}
  F^{xy}(\theta)&=P(K_p=2) F^{xy}_{|K_p=2}(\theta)+P(K_p=1) F^{xy}_{|K_p=1}(\theta)\notag\\
  		 &= P(K_p=2)\left [\gamma_1^2 F^{xy}_{|K_p=2}(\theta_1)+\gamma_2^2 F^{xy}_{|K_p=2}(\theta_2)+2\gamma_1\gamma_2F^{xy}(\theta_0)\right ] +	P(K_p=1) F^{xy}_{|K_p=1}(\theta ).\label{eq:2cyceq}
  \end{align}
 
But since $F^{xy}_{|K_p=1}(\theta)=F^{xy}_{|K_p=1}(\theta_i)$  for $i=1,2$ by the argument used for Lemma \ref{lem::2_1cycles}, and the identity $1=\gamma_1^2+\gamma_2^2+2\gamma_1\gamma_2$, 
$$
F^{xy}_{|K_p=1}(\theta)=\gamma_1^2 F^{xy}_{|K_p=1}(\theta_1)+ \gamma_2^2  F^{xy}_{|K_p=1}(\theta_2)+2\gamma_1\gamma_2F^{xy}_{|K_p=1}(\theta_1).
$$
Substituting this into equation \eqref{eq:2cyceq} and using $P(K_p=1)+P(K_p=2)=1$ yields the claim.
\qed\end{proof}
 
Note that while $\mathcal N^+_1$ and $\mathcal N^+_2$ of Lemma \ref{lem::2cyc} have the same topology and edge lengths, the hybrid edges $h_1,h_2$ may have different population sizes. Thus $F^{xy}(\theta_1 )\ne F^{xy}(\theta_2 )$ is possible. This is in contrast to the argument on removing $2_1$-cycles in Lemma \ref{lem::2_1cycles}, in which hybrid edge population sizes did not play a role.
 
Since a level-1 3-taxon rooted network cannot have a $2_2$-cycle above a $3_2$-cycle,  Lemma \ref{lem::2cyc} can be applied recursively to the $\mathcal N^+_i$, $i\in\{1,2\}$ to eliminate all $2_2$-cycles. Thus the remaining complication to producing an
expression
for   $F^{xy}(\theta)$ as a convex combination of such matrices for networks without $2_1$-, $3_1$-, or $2_2$-cycles is the presence of
terms of the form $F^{xy}_{|K_p=1}({\theta}')$ where ${\mathcal N'^+}$ has cherry $\{a,b\}$ and neither $2_1$- nor $3_1$-cycles. Such terms are handled with the following.

\begin{lemma} \label{lem::CondCoal} (Decomposing $2_2$-cycles conditioned on coalescence)
Let $\mathcal N^+$ be a binary level-1 ultrametric rooted triple network on $\{a,b,c\}$ without $2_1$-  or $3_1$-cycles, and 
on which $\{a,b\}$ form a cherry. Let $p$ be the  parent of the common parent of $a,b$. 
Denote by ${\widetilde {\mathcal N}^+}$  a network obtained from $\mathcal N^+$ by removing one hybrid edge from each $2_2$-cycle.

If $\mathcal N^+$ has no $3_2$-cycle, then
$$F^{xy}_{|K_p=1}(\theta)= F^{xy}_{|K_p=1}({\widetilde\theta}).$$

If $\mathcal N^+$ has a $3_2$-cycle, with hybrid edges $h_1,h_2$ and hybridization parameters $\gamma_i=\gamma(h_i)$, then let ${\widetilde {\mathcal N}}^+_i$ be the network obtained from ${\widetilde {\mathcal N}}^+$ by removing $h_j$, $j\ne i$. Then
$$F^{xy}_{|K_p=1}(\theta)= \gamma_1F^{xy}_{|K_p=1}(\widetilde \theta_1)+\gamma_2F^{xy}_{|K_p=1}(\widetilde \theta_2).$$
\end{lemma}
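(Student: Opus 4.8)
The plan is to exploit that conditioning on $K_p=1$ forces a single gene lineage to occupy the entire ``$(a,b)$-branch'' of $\mathcal N^+$ above $p$, so that the cycles along that branch are traversed by just one lineage and, apart from an outermost $3_2$-cycle, contribute no randomness at all. First I would dispose of the case in which $p$ is a tree node. Since the cherry node $v$ is the unique parent of both $a$ and $b$, and $p$ the unique parent of $v$, any directed path to $a$ (or $b$) ends $\ldots\to p\to v\to a$ (resp.\ $b$); if $p$ has a second child $w\neq v$, the leaf below $w$ cannot be $a$ or $b$ (else $v$, hence $p$, would lie below $w$, contradicting acyclicity), so it is $c$ and $X_p=\{a,b,c\}$. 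But then $p$ and every node above it is an ancestor of $c$, while no node lies strictly between $v$ and $p$, so $\mathcal N^+$ has neither a $2_2$- nor a $3_2$-cycle; hence $\widetilde{\mathcal N}^+=\mathcal N^+$ and both assertions are immediate or vacuous. So I assume henceforth that $p$ is a hybrid node; then $X_p=\{a,b\}$, and $p$ is the hybrid node of a unique cycle, which --- its hybrid node $p$ having exactly two leaf descendants, and there being no $2_1$-, $3_1$-, or (by Corollary \ref{lem::numcyctriplet}) cycle of size $\geq 5$ --- is a $2_2$- or a $3_2$-cycle.

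Second I would locate the relevant cycles. Using again that $v$ is the unique parent of $a$ and $b$, any hybrid node whose leaf-descendant set is $\{a,b\}$ is an ancestor of, or equal to, $p$; this applies to the hybrid node of every $2_2$-cycle and of the (at most one, by Corollary \ref{lem::numcyctriplet}) $3_2$-cycle. A complementary bookkeeping shows that each such hybrid node, and the top node of each $2_2$-cycle, also has leaf-descendant set $\{a,b\}$ and so lies strictly below $\LSA(a,b,c)$; that these cycles are strung along the branch from $p$ up to $\LSA(a,b,c)$; and that a $3_2$-cycle, if present, is the outermost among them, with top node $\LSA(a,b,c)$ and pendant leaf $c$. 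Moreover each of these hybrid nodes, and each $2_2$-cycle top node, is a cut vertex of $\mathcal N^+$, so every directed path from $p$ to the root passes through it.

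Third I would run the NMSC conditioned on $K_p=1$. This event is precisely that the $a$- and $b$-lineages coalesce on the edge $(p,v)$, and the law of that coalescence, conditioned on its occurring there, depends only on $\ell_{(p,v)}$ and $N_{(p,v)}$, which none of the modifications at or above $p$ alter. Conditioned on the event, a single lineage --- the $(a,b)$-lineage --- occupies the branch above $p$ until it first shares a population with the lineage ancestral to $c$, which cannot happen below $\LSA(a,b,c)$; in particular it passes through the hybrid node of every $2_2$-cycle and of the $3_2$-cycle. For a $2_2$-cycle, ultrametricity forces its two hybrid edges to have equal length, so (by the argument in the proof of Lemma \ref{lem::2_1cycles}: a population carrying a lone lineage contributes nothing to the model $\M$) which hybrid edge the $(a,b)$-lineage follows is immaterial; hence deleting one hybrid edge from each $2_2$-cycle leaves the conditioned gene-tree distribution, and therefore $F^{xy}_{|K_p=1}$, unchanged, giving $F^{xy}_{|K_p=1}(\theta)=F^{xy}_{|K_p=1}(\widetilde\theta)$. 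If $\mathcal N^+$ has no $3_2$-cycle this is the assertion. Otherwise, in $\widetilde{\mathcal N}^+$ the $(a,b)$-lineage reaches the $3_2$-cycle's hybrid node and follows $h_i$ with probability $\gamma_i$, independently of the coalescence below $p$; conditioned on following $h_i$, the other hybrid edge $h_j$ carries no lineage, so the conditioned process agrees with the NMSC on $\widetilde{\mathcal N}^+_i$ conditioned on $K_p=1$ (the structure at and above $\LSA(a,b,c)$ being the same throughout). Averaging over this choice, and then over the substitution classes of $\M$, yields the second assertion, $F^{xy}_{|K_p=1}(\widetilde\theta)=\gamma_1 F^{xy}_{|K_p=1}(\widetilde\theta_1)+\gamma_2 F^{xy}_{|K_p=1}(\widetilde\theta_2)$.

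The main obstacle is the second step: the probabilistic content is slight, so nearly all the work lies in the combinatorial bookkeeping placing every $2_2$-cycle and the at-most-one $3_2$-cycle on the $(a,b)$-branch between $p$ and $\LSA(a,b,c)$ --- the $3_2$-cycle outermost, with $c$ as its pendant leaf --- and identifying the relevant nodes as cut vertices. Once that is in hand, the rest is a routine ``a single lineage ignores its surroundings'' argument.
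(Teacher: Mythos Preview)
Your proposal is correct and follows essentially the same approach as the paper: conditioned on $K_p=1$, a single lineage occupies every population above $p$ and below the $3_2$-hybrid node (or the LSA), so the $2_2$-cycles are inert by the argument of Lemma~\ref{lem::2_1cycles}, and the $3_2$-cycle, if present, decomposes as in Lemma~\ref{lem::3_1and4cycles}. The paper's proof is three sentences and simply asserts the structural facts you carefully establish in your second step; your treatment of the case where $p$ is a tree node and your explicit placement of the cycles along the $(a,b)$-branch make rigorous what the paper leaves to context.
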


\begin{proof}
Conditioned on $K_p=1$, there is only one lineage in any population above $p$ and below the hybrid node of a $3_2$-cycle, if such a cycle is present, or the LSA otherwise.  
Thus, as in the proof of  Lemma \ref{lem::2_1cycles}, no $2_2$-cycle will have any effect on the joint distribution.
If there is no $3_2$-cycle on $\mathcal N^+$ this yields the claim. If there is a $3_2$-cycle,  since only one lineage reaches the hybrid node of  the $3_2$-cycle,  we obtain the claim as in the proof of Lemma \ref{lem::3_1and4cycles}.
\qed\end{proof}
 
 \begin{lemma}\label{lem::3_2cyc} (Decomposing $3_2$-cycles)
 	Let $\mathcal N^+$ be a binary level-1 ultrametric  rooted triple network on $\{a,b,c\}$ with no cycles below its LSA except a  $3_2$-cycle $C$. 
	Let  $p$ denote the hybrid node of $C$, and $h_1,h_2$ the hybrid edges with hybridization parameters $\gamma_i=\gamma(h_i)$ and  lengths $y,z$, as depicted at the top of Figure \ref{fig::3_2c}. Let $\mathcal N^+_1$, $\mathcal N^+_2,$ $\mathcal N^+_3,$ and $\mathcal N^+_4$  be the networks derived from $\mathcal N^+$ shown at the bottom of Figure \ref{fig::3_2c}.  Then, under the model $\M$, for any $x,y\in\{a,b,c\}$, with $K_p=K_p(\theta)$,
\begin{equation*} 
\begin{split}
F^{xy}(\theta)=&  \gamma_1^2 F^{xy}(\theta_1)  +  \gamma_2^2  F^{xy}(\theta_2)+P(K_p=2) \gamma_1\gamma_2  \left ( F^{xy}(\theta_3) + F^{xy}(\theta_4)\right ) \\
&\ \ \ \ +P(K_p=1) \gamma_1 \gamma_2 \left (F^{xy}_{|K_p=1}(\theta_1)+F^{xy}_{|K_p=1}(\theta_2) \right ).\\
\end{split}
\end{equation*}
\end{lemma}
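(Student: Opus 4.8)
The plan is to condition on the number $K_p$ of gene lineages reaching the hybrid node $p$ and to enumerate the outcomes of the independent hybrid-edge choices those lineages make. First I would record the forced structure of a $3_2$-cycle in a rooted triple network: after relabeling so that the two leaves below the hybrid node are $a$ and $b$, the cycle $C$ has top node $q$ (which is the LSA), tree edge $(q,w)$, hybrid edges $h_1=(q,p)$ and $h_2=(w,p)$, with $c$ the remaining descendant of $w$ and the cherry $\{a,b\}$ sitting on the single edge below $p$. Under the NMSC the lineages ancestral to $a$ and $b$ travel together in the population just below $p$, and $K_p\in\{1,2\}$ counts how many survive uncoalesced to $p$. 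Since $\mathcal N^+$, $\mathcal N^+_1$, and $\mathcal N^+_2$ are identical below $p$, the variable $K_p$ has the same law under $\theta,\theta_1,\theta_2$, and I will use $F^{xy}(\theta)=P(K_p=1)F^{xy}_{|K_p=1}(\theta)+P(K_p=2)F^{xy}_{|K_p=2}(\theta)$, and likewise for $\theta_1,\theta_2$.

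Next I would treat the two cases. When $K_p=1$, only one lineage passes through $C$, so hybrid-edge population sizes are irrelevant and (exactly as in Lemma \ref{lem::2_1cycles}) the two choices $h_1,h_2$ yield stochastically identical processes — that single lineage together with the $c$-lineage reaches $q$ and coalesces in the ancestral population — whence $F^{xy}_{|K_p=1}(\theta)=F^{xy}_{|K_p=1}(\theta_1)=F^{xy}_{|K_p=1}(\theta_2)$. When $K_p=2$, the two lineages at $p$ choose edges independently, giving four outcomes of probabilities $\gamma_1^2,\gamma_2^2,\gamma_1\gamma_2,\gamma_1\gamma_2$: both take $h_1$ (continuing as in $\mathcal N^+_1$, conditioned on $K_p=2$), both take $h_2$ (as in $\mathcal N^+_2$, conditioned on $K_p=2$), or the two lineages split, one up $h_1$ and the other up $h_2$, in the two possible ways. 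The crucial point for the split cases is that, once we condition on the two lineages not having coalesced below $p$, the shared population below $p$ has contributed nothing else, so the lineages are exchangeable and proceed independently — hence each split outcome is precisely the (unconditioned) model-$\M$ process on the rooted triple tree obtained from $\mathcal N^+$ by sending one of $a,b$ up $h_1$ to $q$ and the other up $h_2$, past $w$, to join $c$; these two trees are $\mathcal N^+_3$ and $\mathcal N^+_4$ of Figure \ref{fig::3_2c}, and because in them $a$ and $b$ lie on separate pendant edges they carry no coalescence constraint, so the conditioning event $\{K_p=2\}$ is accounted for by the explicit factor $P(K_p=2)$ in the statement. This gives $F^{xy}_{|K_p=2}(\theta)=\gamma_1^2F^{xy}_{|K_p=2}(\theta_1)+\gamma_2^2F^{xy}_{|K_p=2}(\theta_2)+\gamma_1\gamma_2\bigl(F^{xy}(\theta_3)+F^{xy}(\theta_4)\bigr)$.

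Finally I would assemble the identity. Using $F^{xy}(\theta_i)=P(K_p=2)F^{xy}_{|K_p=2}(\theta_i)+P(K_p=1)F^{xy}_{|K_p=1}(\theta_i)$ for $i=1,2$, I rewrite $P(K_p=2)F^{xy}_{|K_p=2}(\theta_i)$ as $F^{xy}(\theta_i)-P(K_p=1)F^{xy}_{|K_p=1}(\theta_i)$, substitute this and the displayed decomposition of $F^{xy}_{|K_p=2}(\theta)$ into $F^{xy}(\theta)=P(K_p=1)F^{xy}_{|K_p=1}(\theta)+P(K_p=2)F^{xy}_{|K_p=2}(\theta)$, and then use $F^{xy}_{|K_p=1}(\theta)=F^{xy}_{|K_p=1}(\theta_1)=F^{xy}_{|K_p=1}(\theta_2)$ together with $1-\gamma_1^2-\gamma_2^2=2\gamma_1\gamma_2$ to collapse the $K_p=1$ terms into $P(K_p=1)\gamma_1\gamma_2\bigl(F^{xy}_{|K_p=1}(\theta_1)+F^{xy}_{|K_p=1}(\theta_2)\bigr)$; what remains is exactly the claimed formula. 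The step I expect to be the main obstacle is justifying the split-case claim: one must verify at the level of the coalescent that conditioning two co-located lineages on non-coalescence and then routing them up different hybrid edges produces the same joint distribution as the unconditioned process on a network in which those lineages never shared a population, and that this is what fixes the correct edge lengths and population functions on the pendant edges of $\mathcal N^+_3$ and $\mathcal N^+_4$.
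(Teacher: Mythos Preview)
Your overall strategy---condition on $K_p$, enumerate the four hybrid-edge outcomes when $K_p=2$, then recombine using $F^{xy}(\theta_i)=P(K_p=2)F^{xy}_{|K_p=2}(\theta_i)+P(K_p=1)F^{xy}_{|K_p=1}(\theta_i)$---is exactly the paper's approach, and your treatment of the $K_p=2$ case (including the split-lineage justification for $\mathcal N^+_3,\mathcal N^+_4$) is fine.

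There is, however, a genuine error in your $K_p=1$ step. You assert that $F^{xy}_{|K_p=1}(\theta_1)=F^{xy}_{|K_p=1}(\theta_2)$, reasoning by analogy with Lemma~\ref{lem::2_1cycles}. That analogy fails for a $3_2$-cycle: the two hybrid edges $h_1=(q,p)$ and $h_2=(w,p)$ terminate at \emph{different} nodes. In $\mathcal N^+_2$ (where $h_1$ is removed) the single coalesced $a,b$-lineage travels up $h_2$ to $w$, meets the $c$-lineage there, and may coalesce with it along the tree edge $(q,w)$. In $\mathcal N^+_1$ that edge carries only the $c$-lineage and the two lineages first meet at $q$. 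So $F^{xy}_{|K_p=1}(\theta_1)\ne F^{xy}_{|K_p=1}(\theta_2)$ in general, and your collapse via $1-\gamma_1^2-\gamma_2^2=2\gamma_1\gamma_2$ is not justified as written.

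The fix is minor and is what the paper does: when $K_p=1$ the single lineage still chooses $h_1$ or $h_2$ with probabilities $\gamma_1,\gamma_2$, giving
\[
F^{xy}_{|K_p=1}(\theta)=\gamma_1 F^{xy}_{|K_p=1}(\theta_1)+\gamma_2 F^{xy}_{|K_p=1}(\theta_2).
\]
Substituting this into your final assembly, the bracketed $K_p=1$ contribution becomes
\[
\gamma_1(1-\gamma_1)F^{xy}_{|K_p=1}(\theta_1)+\gamma_2(1-\gamma_2)F^{xy}_{|K_p=1}(\theta_2)
=\gamma_1\gamma_2\bigl(F^{xy}_{|K_p=1}(\theta_1)+F^{xy}_{|K_p=1}(\theta_2)\bigr),
\]
since $1-\gamma_1=\gamma_2$ and $1-\gamma_2=\gamma_1$. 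With this correction your argument is complete and coincides with the paper's.
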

 
\begin{proof}	
	Observe that
	\begin{equation} 
\begin{split}
F^{xy}(\theta)&= P(K_p=2)F^{xy}_{|K_p=2}(\theta)+P(K_p=1)F^{xy}_{|K_p=1}(\theta)\\
&= P(K_p=2)\left  [ \gamma_1^2 F^{xy}_{|K_p=2}(\theta_1)  +  \gamma_2^2  F^{xy}_{|K_p=2}(\theta_2)   + \gamma_1\gamma_2 F^{xy}(\theta_3)   +\gamma_1\gamma_2 F^{xy}(\theta_4)\right ] \\
&\ \ \ \ \ +P(K_p=1)F^{xy}_{| K_p=1}(\theta).
\end{split}\label{eq:3_2cyc}
\end{equation}	
Since  $F^{xy}_{|K_p=1}(\theta)=\gamma_1 F^{xy}_{|K_p=1}(\theta_1)+\gamma_2 F^{xy}_{|K_p=1}(\theta_2)$
 and $\gamma_1+\gamma_2=1$,
$$ F^{xy}_{|K_p=1}(\theta)=\gamma_1^2 F^{xy}_{|K_p=1}(\theta_1)+ \gamma_2 ^2  F^{xy}_{|K_p=1}(\theta_2) +\gamma_1 \gamma_2 \left(F^{xy}_{|K_p=1}(\theta_1)+F^{xy}_{|K_p=1}(\theta_2)\right).
$$

Using this  and $P(K_p=1)+P(K_p=2)=1$
in equation \eqref{eq:3_2cyc} yields the claim.
\qed\end{proof}
	
 \begin{figure}
	\begin{center}
	\includegraphics{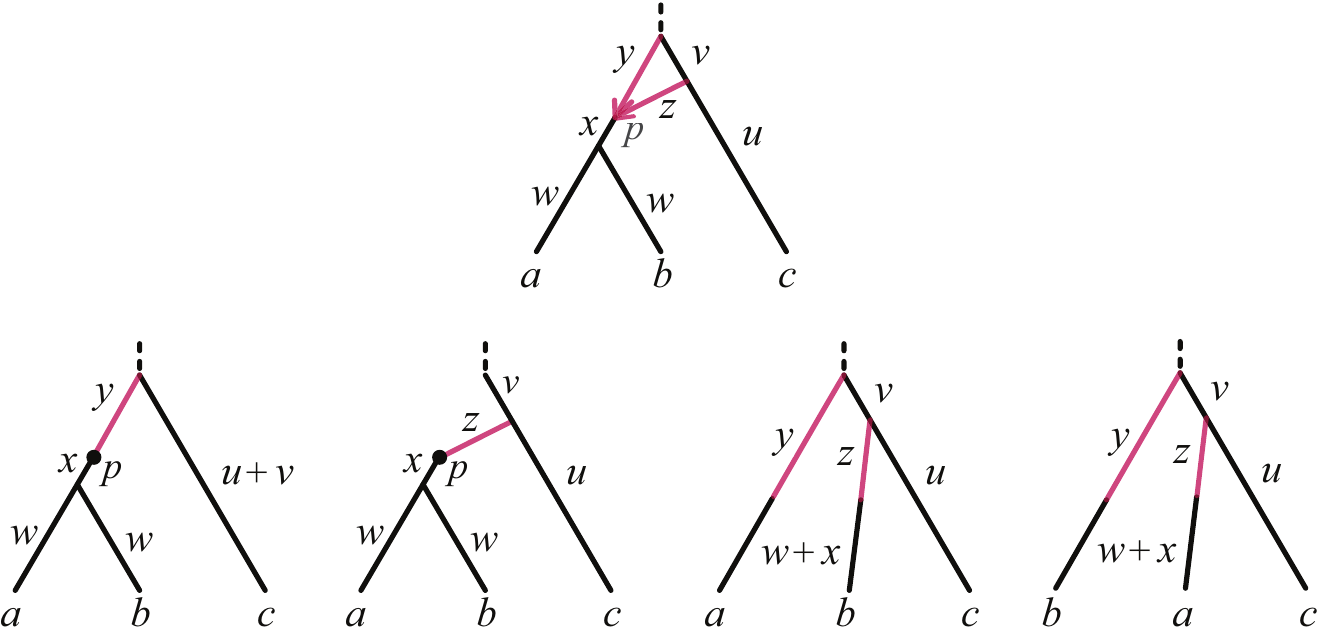}
	\end{center}
	\caption{ (Top)  A rooted level-1 ultrametric network with a $3_2$-cycle, and  (Bottom) the networks $\mathcal N^+_1$, $\mathcal N^+_2$, $\mathcal N^+_3$, and $\mathcal N^+_4$ used  in Lemma \ref{lem::3_2cyc}. Although only topology and branch lengths are shown, population size parameters for each edge of $\mathcal N^+_i$ are obtained from the corresponding ones of $\mathcal N^+$.
}\label{fig::3_2c}
\end{figure} 

\section{Theoretical logDet distances}\label{sec::logDet}

In this section, we show that, under the mixture of coalescent mixtures model $\mathcal M$ on an ultrametric level-1 rooted triple network, the theoretical logDet distances between taxa determine most topological features of the network.  The previous section established that the pattern frequency matrices for the model on such networks can be expressed as convex combinations of those on simpler networks (possibly subject to conditioning), whose only cycles are $2_3$-cycles located above  
$\LSA(a,b,c)$, such as depicted in Figure \ref{fig::chain2cyc}.  The following algebraic lemma is key to drawing conclusions about the determinants
of such linear combinations of matrices. 

\begin{lemma}[\cite{Allman2019}, Lemma 3.1]\label{lem::main}
	Suppose for each $i$, $F_i$ and $G_i$ are $\kappa\times \kappa$ symmetric positive definite matrices such that $y^TF_iy\geq y^TG_iy$ for every $y\in \RR^\kappa$ with the inequality strict for some $y$ and some $i$. For $\alpha_i\geq 0$, let
	$$ F=\sum_{i=1}^{m}\alpha_i F_i,\quad G=\sum_{i=1}^{m}\alpha_i G_i.$$
	Then 
	$$ \det F> \det G.$$
\end{lemma}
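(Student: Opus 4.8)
The plan is to reduce the statement to the single fact that the determinant is strictly order-preserving on the cone of symmetric positive definite matrices, and then to verify that the hypotheses place us in the strict case. First I would record the following elementary fact: if $A$ and $B$ are $\kappa\times\kappa$ symmetric positive definite matrices with $A-B$ positive semidefinite, then $\det A\ge\det B$, with equality precisely when $A=B$. To prove it I would diagonalize $A$ and $B$ simultaneously by congruence: since $B$ is positive definite there is an invertible $P$ with $P^TBP=I$ and $P^TAP=\diag(d_1,\dots,d_\kappa)$. The condition $y^TAy\ge y^TBy$ for all $y$ is congruence-invariant, so it gives $d_j\ge 1$ for every $j$; hence $\det A=\det B\cdot\prod_j d_j\ge\det B$, and equality forces each $d_j=1$, whence $P^TAP=P^TBP$ and $A=B$.

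Next I would apply this with $A=F$ and $B=G$. Each $F_i-G_i$ is positive semidefinite by hypothesis and each $\alpha_i\ge0$, so $F-G=\sum_i\alpha_i(F_i-G_i)$ is a nonnegative combination of positive semidefinite matrices and hence positive semidefinite; since at least one $\alpha_i$ is positive (the weights are genuine convex-combination coefficients in every application of the lemma in this paper) and each $G_i$ is positive definite, $G$ is positive definite, and then $F=G+(F-G)$ is positive definite too. The fact above therefore gives $\det F\ge\det G$, and to upgrade this to a strict inequality I would show $F\ne G$. If $F=G$ then $\sum_i\alpha_i(F_i-G_i)=0$, a vanishing sum of positive semidefinite matrices; evaluating the quadratic form at an arbitrary $y$ writes $0$ as a sum of nonnegative reals $\alpha_i\,y^T(F_i-G_i)y$, so each must vanish for all $y$, i.e.\ $\alpha_i(F_i-G_i)=0$ for every $i$. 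But the strict-inequality hypothesis supplies an index $i_0$ with $F_{i_0}-G_{i_0}\ne0$, and $\alpha_{i_0}>0$, contradicting $\alpha_{i_0}(F_{i_0}-G_{i_0})=0$. Hence $F\ne G$ and $\det F>\det G$.

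The one genuinely delicate point is exactly this passage from the weak to the strict inequality. The difference $F-G$ is in general only positive semidefinite and may be singular, so Minkowski-type determinant superadditivity (which would give strictness only when $F-G$ is positive definite) does not by itself suffice. What carries the argument is the sharp equality characterization obtained from the congruence normal form in the first step, together with the elementary observation that a nonnegatively weighted sum of positive semidefinite matrices vanishes only when every weighted summand vanishes.
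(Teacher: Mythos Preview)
The paper does not actually prove this lemma; it is quoted verbatim from \cite{Allman2019} (their Lemma~3.1) and used as a black box, so there is no in-paper argument to compare your proposal against. Your argument via simultaneous congruence diagonalization is a standard and correct route to the result, and your handling of the strict inequality through the equality case $\det A=\det B\iff A=B$ is exactly the right idea.

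One point worth flagging explicitly: the lemma as literally stated (with $\alpha_i\ge 0$ and strictness required only for \emph{some} index $i$) is false without the extra assumption that $\alpha_{i_0}>0$ for an index $i_0$ where $F_{i_0}\ne G_{i_0}$---take $\alpha_{i_0}=0$ and all other $F_i=G_i$ for a counterexample. You patch this by appealing to how the lemma is applied in the paper (convex combinations with positive weights), which is fine pragmatically, but your sentence ``$\alpha_{i_0}>0$'' is asserted rather than derived from the stated hypotheses. If you want the write-up to stand on its own, either add the hypothesis $\alpha_i>0$ for all $i$ (which is how the lemma is used throughout), or state that strictness must hold for some $i$ with $\alpha_i>0$.
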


Analyzing the pattern frequency matrix for networks with $2_3$-cycles above $\LSA(a,b,c)$ requires a detailed look at the coalescent process
in such a chain of $2$-cycles.
For a simple case, assume lineages $x$ and $y$ enter the single cycle chain depicted in Figure \ref{fig::single2cycle}. 
Population functions  $N_1,$ $N_2$ , $N_3$, and $N_4$ are fixed for each edge, where for convenience, we shift domains from the convention in Section \ref{ssec:NMSC} so that $N_1$ is defined on $[0,t_0)$,  $N_2,N_3$ on $[t_0,t_1)$, and $N_4$ on $[t_1,\infty)$.

The probability density $c(t)$ for time to coalescence of the lineages $x,y$ entering at the bottom node ($t=0$) can be calculated piecewise as follows:
For $t\in[0,t_0)$, 
$$c(t)=\frac 1{{ N}_1(t)} \exp \left ( -\int_0^t \frac 1{{ N}_1(\tau)}\, d\tau\right ),$$
as given in \cite{Allman2019}.

For $t\in[t_0,t_1)$, 
$$c(t)={p_0}\left( \gamma^2 c_2(t) +(1-\gamma)^2 c_3(t)\right )$$
where $p_0=1-\int_0^{t_0} c(t)\,dt$ is the probability of no coalescence before $t_0$, and for $i=2,3$
$$c_i(t)=\frac 1{N_i(t)} \exp \left (-\int_{t_0} ^t \frac 1{N_i(\tau)}\, d\tau\right).$$

Finally, for $t\in[t_1,\infty)$, with $p_1=1-\int_0^{t_1} c(t)\,dt$ the probability of no coalescence before $t_1$,
$$c(t)={p_1} \frac 1{N_4(t)} \exp \left (-\int_{t_1}^t \frac 1{N_4(\tau)}\, d\tau\right).$$

 \begin{figure}
	\begin{center}
	\includegraphics[width=5cm]{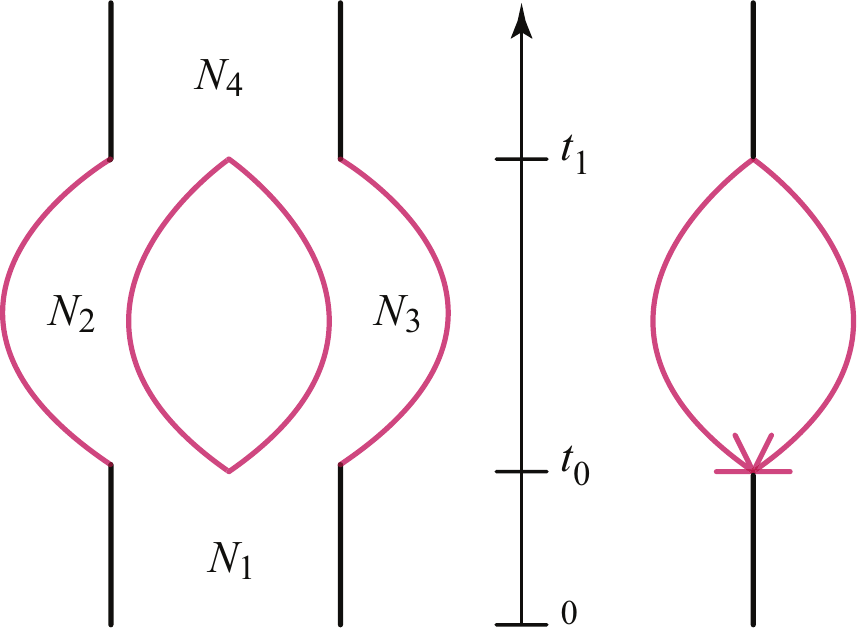}
	\end{center}
	\caption{A 2-cycle and adjacent tree edges in a species network, depicted (Left) with pipes whose width represent population sizes, and  (Right) as a schematic. }\label{fig::single2cycle}
\end{figure}

It is straightforward to extend this analysis of $c(t)$ to a chain with an arbitrary number of 2-cycles. 
Since we will not need an explicit formula for the distribution of coalescent times for two lineages entering such a chain of 2-cycles, we omit a complete derivation, and only state the properties of it that we use.

Formally, a \emph{chain of 2-cycles} is a species network with leaf  $a_0$, internal vertices  $b_1$, $a_1$, $b_2$, $a_2,\dots, a_n$, with root $r=a_n$, tree edges $e_i=(b_i,a_{i-1})$,   and hybrid edges 
$e_i'=(a_i,b_i)$, $e_i''=(a_i,b_i)$, together with edge lengths, piecewise-continuous population size functions on each edge, including above the root, and hybrid parameters $\gamma_i',\gamma_i''=1-\gamma_i'$ for each pair of hybrid edges $e_i',e_i''$. 
 
Using the technical assumptions given in Subsection \ref{ssec:NMSC}, it is straightforward to deduce the following.

\begin{lemma}\label{lem::coalFacts} Consider a fixed chain of 2-cycles with leaf $a_0$.  Let $c:[0,\infty)\to \mathbb R^{\ge 0}$ denote the probability density function under the NMSC for the time $T$ of coalescence of two lineages entering the chain at $a_0$. Then $c(t)$ is piecewise 
continuous, and $c(t)>0$ for all $t\in[0,\infty)$.
\end{lemma}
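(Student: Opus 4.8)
The plan is to assemble $c$ explicitly from the finitely many horizontal ``slices'' of the chain cut out by the heights of the internal nodes $b_1,a_1,\dots,b_n,a_n$, together with the semi-infinite slice above the root, exactly as in the displayed single-cycle computation above. A tree-edge slice is an interval on which the two ancestral lineages of $a_0$ (if not yet coalesced) sit in one common population; a hybrid-edge slice is an interval on which each such lineage has independently chosen, at the hybrid node just below, one of the two parallel populations, with probabilities $\gamma_i'$ and $\gamma_i''$. Two lineages coalesce only while in a common population, at instantaneous rate $1/N_e(t)$. First I would record the resulting recursion: on the $k$-th slice,
$$
c(t)=p_k\sum_{e}w_e\,c_e(t),\qquad c_e(t)=\frac1{N_e(t)}\exp\!\left(-\int_{s_k}^{t}\frac{d\tau}{N_e(\tau)}\right),
$$
where $s_k$ is the bottom of the slice, $e$ ranges over its one or two populations, $w_e\in\{1,(\gamma_i')^2,(\gamma_i'')^2\}$, and $p_k=1-\int_0^{s_k}c(\tau)\,d\tau$ is the probability of no coalescence strictly below height $s_k$; above the root the slice has the single infinite-length-edge population and $p_k=p_{\mathrm{top}}$.

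For piecewise continuity, on the interior of each slice $c$ is a finite sum of products of a constant $p_kw_e$ with the factor $1/N_e(t)$ and with $\exp(-\int_{s_k}^t 1/N_e)$. The exponential factor is continuous, being the exponential of an indefinite integral of a function integrable over finite intervals, and the factor $1/N_e(t)$ inherits the regularity of the population function; since there are only finitely many slices, $c$ is piecewise continuous on $[0,\infty)$, with possible discontinuities confined to the finitely many node heights.

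The substance is positivity, and here the plan is an induction over slices whose only nontrivial input is that $p_k>0$ for every slice, i.e.\ that coalescence strictly below any finite height $H$ has probability $<1$. Granting this, $c(t)=p_k\sum_e w_ec_e(t)>0$ on the $k$-th slice because $p_k>0$, the weights $w_e$ are positive (the hybridization parameters being positive), each $N_e(t)$ is positive, and the exponential is positive; the same formula handles the semi-infinite top slice. To see $p_k>0$, I would exhibit a positive-probability event on which no coalescence occurs below $H$: have lineage $1$ take the first hybrid edge and lineage $2$ the second hybrid edge at every hybrid node of height $<H$, and have no coalescence on any of the finitely many tree-edge segments of height $<H$. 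The hybrid choices are independent across nodes and across lineages, so their joint probability is a product of finitely many of the positive numbers $\gamma_i',\gamma_i''$; and the probability of no coalescence on a tree-edge segment of finite length $\ell$ is $\exp(-\int_0^{\ell}1/N_e)>0$ since $1/N_e$ is integrable over finite intervals. On this event the two lineages share a population only while traversing tree edges, where non-coalescence has been forced, so no coalescence happens below $H$ and hence $p_k>0$.

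I expect the main obstacle to be expository rather than mathematical: describing the slice decomposition cleanly and bookkeeping the $p_k$ factors for a chain of arbitrary length $n$, and pinning down exactly the regularity hypothesis on the $N_e$ under which ``piecewise continuous'' is literally correct. The probabilistic heart --- that neither a finite chain of tree edges nor a finite sequence of hybrid splittings can force coalescence before a prescribed finite time --- is precisely the elementary observation used repeatedly in Section~\ref{sec::freqs}.
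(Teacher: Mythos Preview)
Your proposal is correct and follows exactly the approach the paper gestures at: the paper presents the explicit single-cycle computation of $c(t)$ just before the lemma, then says extending it to an arbitrary chain is ``straightforward'' and that the stated properties follow from the technical assumptions of Subsection~\ref{ssec:NMSC}, without writing out a proof. Your slice-by-slice recursion is precisely that extension, and your positivity argument via the event ``separate the lineages at every hybrid node and forbid coalescence on each finite tree segment'' is the natural way to make the paper's ``straightforward'' explicit; your caveat about the regularity of $N_e$ needed for literal piecewise continuity is well taken and is exactly the sort of detail the paper sweeps into its standing assumptions.
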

 
The next three technical lemmas generalize Lemmas 4.1, 4.4, and 4.5 of \cite{Allman2019} from a tree to a network setting. These culminate in Proposition \ref{prop::chain2cyc} below, which justifies the application of Lemma \ref{lem::main}. 

\begin{lemma} \label{lem:Cineq} Let $c:[0,\infty)\to \mathbb R^{\ge 0}$ be the probability density function under the NMSC for the time $T$ of coalescence of two lineages entering a chain of 2-cycles, and for times $t_2>t_1\ge 0$
let $c_i$ be the conditional density  given $T\ge t_i$.  Then the cumulative distribution functions for $c_1$ and $c_2$
satisfy
$$C_1(t)\ge C_2(t),$$
with the inequality strict on some interval.
\end{lemma}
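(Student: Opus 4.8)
The plan is to reduce the statement to elementary monotonicity of the survival function of $T$. Write $S(t)=P(T\ge t)=\int_t^\infty c(s)\,ds$ for the survival function of the coalescence time. By Lemma \ref{lem::coalFacts}, $c$ is piecewise continuous and $c(t)>0$ for every $t\in[0,\infty)$, so $S$ is continuous, \emph{strictly} decreasing, and $S(t)>0$ for every finite $t$ (and $S(t)\to 0$ as $t\to\infty$, since boundedness of $N_r$ forces $\int_0^\infty 1/N_r\,dt=\infty$, so coalescence is almost sure — though only strict positivity and strict monotonicity of $S$ are actually needed). The conditional density of $T$ given $T\ge t_i$ is $c_i(t)=c(t)/S(t_i)$ for $t\ge t_i$ and $c_i(t)=0$ for $t<t_i$; integrating gives the conditional cumulative distribution function
$$C_i(t)=\begin{cases}0,& t<t_i,\\ 1-S(t)/S(t_i),& t\ge t_i.\end{cases}$$

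Next I would compare $C_1$ and $C_2$ on the three regimes determined by $0\le t_1<t_2$. For $t<t_1$ both functions vanish, so $C_1(t)=C_2(t)$. For $t_1\le t<t_2$ we have $C_2(t)=0$ while $C_1(t)=1-S(t)/S(t_1)\ge 0$ because $S(t)\le S(t_1)$; and since $S$ is strictly decreasing, $S(t)<S(t_1)$ whenever $t_1<t$, so in fact $C_1(t)>C_2(t)=0$ on $(t_1,t_2)$. For $t\ge t_2$ both $C_i(t)$ have the form $1-S(t)/S(t_i)$, and because $S(t_1)>S(t_2)>0$ and $S(t)>0$ we get $S(t)/S(t_1)<S(t)/S(t_2)$, hence $C_1(t)>C_2(t)$ here as well. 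Combining the three cases yields $C_1(t)\ge C_2(t)$ for all $t\ge 0$, with strict inequality on the nonempty interval $(t_1,t_2)$ (indeed on all of $(t_1,\infty)$), which is exactly the claim.

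There is no real obstacle in this argument: the only substantive input beyond bookkeeping is the strict positivity of the coalescent density, which is precisely what Lemma \ref{lem::coalFacts} supplies and which makes $S$ strictly decreasing. The one point requiring a little care is the piecewise nature of the conditional CDFs together with the degenerate possibility $t_1=0$; both are handled cleanly by the case split above.
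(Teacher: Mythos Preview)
Your proof is correct and handles the same case split as the paper, but your treatment of the region $t\ge t_2$ is cleaner. The paper writes $C_1(t)-C_2(t)=J-\frac{J}{1-J}I(t)$ with $J=\int_{t_1}^{t_2}c_1$ and $I(t)=\int_{t_2}^t c_1$, differentiates to see this is decreasing, and then uses $C_1(t)-C_2(t)\to 0$ as $t\to\infty$ to conclude nonnegativity. Your survival-function formulation gives the closed form $C_1(t)-C_2(t)=S(t)\bigl(1/S(t_2)-1/S(t_1)\bigr)$, which is manifestly positive once $S$ is strictly decreasing and strictly positive, and so sidesteps both the differentiation and the limit. In particular, as you note, your argument does not need $T<\infty$ almost surely, whereas the paper's limit step implicitly does. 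The two computations are algebraically equivalent (one can check $J-\frac{J}{1-J}I(t)$ simplifies to your expression), so this is a difference of packaging rather than of idea, but your version is the more direct one.
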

\begin{proof}
Since $0=c_2(t)\le c_1(t)$ for all $t\le t_2$, the inequality is immediate for $t\le t_2$. Since using Lemma \ref{lem::coalFacts} we have $c_1(t)>c_2(t)=0$ for $t\in(t_1,t_2)$,
the inequality is strict on a subinterval.

For $t\ge t_2$, let $J=\int_{t_1}^{t_2} c_1(t) \,dt$ and $I(t)=\int_{t_2}^t c_1(s) \,ds$, so 
\begin{align*}
C_1(t)-C_2(t)&=J+I(t)-\frac {I(t)}{1-J}\\
&= J-\frac J{1-J} I(t).
\end{align*}
Differentiating and using Lemma \ref{lem::coalFacts} shows $C_1(t)-C_2(t)$ is decreasing for $t>t_2$. Since $C_1(t)-C_2(t)\to 0$ as $t\to\infty$, this implies $C_1(t)-C_2(t)\ge 0$, as claimed.
\qed\end{proof}

\begin{lemma}\label{lem:inteig} Let $c_1,c_2$ be probability density functions on $[0,\infty)$, with cumulative distribution functions $C_1,C_2$, such that $C_1(t)\ge C_2(t)$ for all $t$, with the inequality strict on some interval. Let
$s(t)=\int_{0}^{t} \mu(x)\,dx$ for a positive, piecewise-continuous $\mu$ on $[0,\infty)$ such that $s(\infty)=\infty$.
For $\lambda\le 0$ let
$$f(\lambda,\mu, C_i)=\int_0^\infty \exp(2 \lambda s(t))c_i(t) \,dt.$$
Then if $\lambda=0$,
$$f(0,\mu, C_1)=f(0,\mu, C_2)=1.$$
while for $\lambda<0$
$$f(\lambda,\mu, C_1)>f(\lambda,\mu, C_2).$$
\end{lemma}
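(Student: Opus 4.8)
The plan is to reduce everything to the case $\lambda=0$ by differentiating in $\lambda$, and to handle the strict inequality by an integration-by-parts argument that converts the comparison of the $f$-values into a comparison of the cumulative distribution functions $C_1,C_2$.

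First I would dispose of the case $\lambda=0$: there $\exp(2\lambda s(t))\equiv 1$, so $f(0,\mu,C_i)=\int_0^\infty c_i(t)\,dt = 1$ since each $c_i$ is a probability density, giving the claimed equality. For the case $\lambda<0$, the key manipulation is integration by parts. Writing $\phi_\lambda(t)=\exp(2\lambda s(t))$, which is continuous, decreasing (since $\lambda<0$, $\mu>0$, so $s$ is increasing), with $\phi_\lambda(0)=1$ and $\phi_\lambda(\infty)=0$ (using $s(\infty)=\infty$), I would write
\begin{align*}
f(\lambda,\mu,C_i)&=\int_0^\infty \phi_\lambda(t)\,dC_i(t)
= \Big[\phi_\lambda(t)C_i(t)\Big]_0^\infty - \int_0^\infty \phi_\lambda'(t)C_i(t)\,dt\\
&= -\int_0^\infty \phi_\lambda'(t)C_i(t)\,dt,
\end{align*}
the boundary terms vanishing because $\phi_\lambda(0)C_i(0)=1\cdot 0 = 0$ (as $C_i(0)=0$ for a density on $[0,\infty)$) and $\phi_\lambda(t)C_i(t)\to 0$ as $t\to\infty$ since $\phi_\lambda(t)\to0$ and $C_i$ is bounded. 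Then
$$f(\lambda,\mu,C_1)-f(\lambda,\mu,C_2) = -\int_0^\infty \phi_\lambda'(t)\big(C_1(t)-C_2(t)\big)\,dt.$$
Now $\phi_\lambda'(t) = 2\lambda\mu(t)\exp(2\lambda s(t))$, which is strictly negative for all $t$ (using $\lambda<0$, $\mu(t)>0$), so $-\phi_\lambda'(t)>0$ everywhere; since $C_1(t)-C_2(t)\ge 0$ for all $t$ and is strictly positive on some interval (on which $-\phi_\lambda'$ is also positive), the integrand is nonnegative and strictly positive on a set of positive measure, forcing the integral to be strictly positive. This gives $f(\lambda,\mu,C_1)>f(\lambda,\mu,C_2)$.

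The main obstacle I anticipate is justifying the integration by parts and the handling of the improper integral rigorously when $c_1,c_2$ are merely densities rather than smooth functions — in particular one should treat $C_i$ as the (absolutely continuous, monotone) distribution function and write the integral as a Riemann–Stieltjes or Lebesgue integral, so that $\int_0^\infty \phi_\lambda\,dC_i$ makes sense even where $c_i$ is only piecewise continuous, and one needs $\phi_\lambda$ to be of bounded variation on $[0,\infty)$ (it is, being monotone and bounded) for the integration-by-parts formula for Stieltjes integrals to apply. A minor secondary point is confirming $\phi_\lambda(t)C_i(t)\to0$: this is where $s(\infty)=\infty$ is essential, since it forces $\phi_\lambda(t)=\exp(2\lambda s(t))\to 0$. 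Everything else is routine, and the hypotheses have been arranged (via Lemma \ref{lem:Cineq}, which supplies exactly the needed $C_1\ge C_2$ with strictness on an interval) to feed directly into this argument.
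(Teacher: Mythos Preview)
Your proof is correct and is essentially identical to the paper's: both handle $\lambda=0$ trivially and, for $\lambda<0$, integrate by parts to rewrite $f(\lambda,\mu,C_i)$ as $-2\lambda\int_0^\infty \mu(t)\exp(2\lambda s(t))C_i(t)\,dt$, then use $C_1\ge C_2$ with strictness on an interval to conclude. (Your opening sentence about ``differentiating in $\lambda$'' is never actually used and can be dropped; the integration-by-parts argument you give stands on its own.)
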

\begin{proof}
For $\lambda=0$ we find $f(0,\mu, C_i)=\int_0^\infty c_i(t)\,dt=1$.

If $\lambda<0$, integrating by parts yields
\begin{align*}
f(\lambda,\mu, C_i) &= \exp(2 \lambda s(t))C_i(t)\bigg |_{t=0}^\infty -  2\lambda   \int_0^\infty \mu(t) \exp(2 \lambda s(t))C_i(t) dt\\
&=  -  2\lambda   \int_0^\infty \mu(t) \exp(2 \lambda s(t))C_i(t) dt.
\end{align*}
Thus
$$
f(\lambda,\mu, C_1)- f(\lambda,\mu, C_2)=  
 -2\lambda   \int_0^\infty  \mu(t) \exp(2 \lambda s(t)) (C_1(t)-C_2(t)) dt.
$$
As the integrand is non-negative, and positive on some interval, the claim for $\lambda<0$ follows. 
\qed\end{proof}

\begin{lemma}\label{lem:41}
Consider a GTR substitution model with rate matrix $Q\ne 0$, a scalar-valued rate function $\mu(t)$ satisfying the assumptions of Subsection \ref{ssec::substitution}, and a cumulative distribution function $C(t)$ for the time $T$ to coalescence of 2 lineages in a population. 

Let $F(x)=F(Q,\mu,C,x)$ be the expected site-pattern frequency array for two lineages that enter a population at time 0 and undergo substitutions at rate $\mu(t)Q$ conditioned on $T\ge x$. 
For $x<x_1$ let $\widetilde F(x,x_1) =\widetilde F(Q,\mu,C,x,x_1)$ be the expected site-pattern frequency array for two lineages that enter a population at time 0 and undergo substitutions at rate $\mu(t)Q$ conditioned on $x< T< x_1$.

Then for all $0\ne y\in \mathbb R^k$ the functions $y^TF(x)y$  and $y^T\widetilde F(x,x_1)y$ are positive-valued and decreasing in $x$.
Moreover there exists a $y$ for which both are strictly decreasing, and for which if $x_0<x_1\le x_2$
$$y^T\widetilde F(x_0,x_1)y>y^TF(x_2)y.$$

\end{lemma}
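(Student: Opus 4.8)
\medskip\noindent
\emph{Proof strategy.}
The plan is to reduce everything to the scalar facts about coalescence-time distributions already isolated in Lemma~\ref{lem:inteig}. First I would symmetrize the GTR model: writing $D=\diag(\pi)$ (positive definite, since $\pi$ is the stationary distribution of an irreducible chain), the hypothesis that $DQ$ is symmetric makes $S:=D^{1/2}QD^{-1/2}$ symmetric, with the same eigenvalues as $Q$. Since $Q$ is an irreducible conservative rate matrix with $\operatorname{trace}Q=-1$, $S$ has $0$ as a simple eigenvalue (eigenspace spanned by $D^{1/2}\mathbf 1$) and all remaining eigenvalues $\lambda_2\ge\cdots\ge\lambda_k$ strictly negative; let $P_1,\dots,P_k$ be the associated orthogonal spectral projections. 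For a pair of lineages that coalesce at time $t$, the conditional expected pattern array is $F_t=M(t)^{T}DM(t)$ with $M(t)=\exp(s(t)Q)$ and $s(t)=\int_0^t\mu$; using $D^{1/2}M(t)=\exp(s(t)S)D^{1/2}$ and setting $w=D^{1/2}y$, one gets the key identity $y^{T}F_ty=\|\exp(s(t)S)w\|^2=\sum_{j=1}^k e^{2s(t)\lambda_j}\|P_jw\|^2$. For $y\ne 0$ this is strictly positive (as $w\ne0$ and $M(t)$ is invertible), continuous in $t$ (as $s$ is continuous), and non-increasing in $t$ since each $\lambda_j\le 0$ and $s$ is non-decreasing; it is strictly decreasing exactly when $P_jw\ne 0$ for some $j\ge 2$, and since $\mu>0$ makes $s$ strictly increasing this happens whenever $w$ is not a scalar multiple of $D^{1/2}\mathbf 1$.

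Next I would write $y^{T}F(x)y$ and $y^{T}\widetilde F(x,x_1)y$ as averages of $y^{T}F_ty$. With $c=C'$ the coalescence density, $y^{T}F(x)y=\frac{1}{1-C(x)}\int_x^\infty (y^{T}F_ty)\,c(t)\,dt$ and $y^{T}\widetilde F(x,x_1)y=\frac{1}{C(x_1)-C(x)}\int_x^{x_1}(y^{T}F_ty)\,c(t)\,dt$. Positivity for every $y\ne0$ is then immediate, being an average of strictly positive numbers. For monotonicity in $x$ I would substitute the spectral identity to get $y^{T}F(x)y=\sum_j\|P_jw\|^2\int_0^\infty e^{2s(t)\lambda_j}c_x(t)\,dt$, where $c_x$ is the density of $T$ conditioned on $T\ge x$; the $j=1$ term is the constant $\|P_1w\|^2$. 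For $x<x'$ an elementary computation shows the conditional distributions are stochastically ordered, $C_x(t)\ge C_{x'}(t)$ for all $t$: indeed for $t\ge x'$ one has $\frac{u-a}{1-a}-\frac{u-b}{1-b}=\frac{(1-u)(b-a)}{(1-a)(1-b)}\ge 0$ with $a=C(x)\le b=C(x')\le u=C(t)$, and the left tail is trivial. Applying Lemma~\ref{lem:inteig} term by term (with $\lambda=\lambda_j<0$) gives $\int e^{2s\lambda_j}c_x\ge\int e^{2s\lambda_j}c_{x'}$, hence $y^{T}F(x)y\ge y^{T}F(x')y$. The identical argument, using $\frac{u-a}{v-a}-\frac{u-b}{v-b}=\frac{(b-a)(v-u)}{(v-a)(v-b)}\ge 0$ for the densities restricted to $(x,x_1)$, handles $\widetilde F(\cdot,x_1)$.

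For the strict statements I would fix once and for all a unit eigenvector $v$ of $S$ with $Sv=\lambda_kv$ ($\lambda_k<0$) and take $y=D^{-1/2}v$, so that $w=v$ and $y^{T}F_ty=e^{2s(t)\lambda_k}$, which is strictly decreasing and continuous in $t$. Strict decrease of $y^{T}F(x)y$ and $y^{T}\widetilde F(x,x_1)y$ in $x$ then follows from the strict part of Lemma~\ref{lem:inteig}, using that $C$ is strictly increasing on the relevant intervals (which holds in every application of this lemma by Lemma~\ref{lem::coalFacts}, which also makes the conditionings well defined). Finally, for $x_0<x_1\le x_2$: every $t\in(x_0,x_1)$ and $t'\ge x_2$ satisfy $t<x_1\le x_2\le t'$, so by strict monotonicity and continuity $y^{T}F_ty>e^{2s(x_1)\lambda_k}\ge e^{2s(x_2)\lambda_k}\ge y^{T}F_{t'}y$. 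Averaging the left inequality over $t\in(x_0,x_1)$ against the positive density gives $y^{T}\widetilde F(x_0,x_1)y>e^{2s(x_1)\lambda_k}$, while averaging $y^{T}F_{t'}y\le e^{2s(x_2)\lambda_k}$ over $t'\ge x_2$ gives $y^{T}F(x_2)y\le e^{2s(x_2)\lambda_k}\le e^{2s(x_1)\lambda_k}$, and chaining these yields $y^{T}\widetilde F(x_0,x_1)y>y^{T}F(x_2)y$.

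The conceptual heart, and the step I expect to require the most care, is the spectral reduction of $y^{T}F_ty$ to a sum of decaying exponentials $\sum_j e^{2s(t)\lambda_j}\|P_jw\|^2$: once this is in place, positivity, continuity, and monotonicity in $t$ are visible by inspection, and the monotonicity in $x$ is exactly what Lemma~\ref{lem:inteig} delivers given the (elementary but fiddly) stochastic ordering of the conditioned coalescence distributions. The remaining subtleties are bookkeeping: arranging a single $y$ (the $\lambda_k$-eigenvector choice) to witness all three strict assertions simultaneously, and noting that the strict claims rely on $C$ having a density positive on the intervals in question — not literally part of the hypotheses, but automatic whenever $C$ is a coalescent CDF, as in all uses of the lemma.
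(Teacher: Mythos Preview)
Your proof is correct and follows essentially the same route as the paper: symmetrize via $\diag(\pi)^{1/2}$, diagonalize to express $y^TF(x)y$ as a nonnegative combination of integrals $\int_0^\infty e^{2s(t)\lambda_j}c_x(t)\,dt$, and invoke Lemma~\ref{lem:inteig} together with stochastic ordering of the conditioned coalescence distributions. The only execution differences are that you verify the stochastic ordering of $c_x$ versus $c_{x'}$ by a direct algebraic computation (the paper cites Lemma~\ref{lem:Cineq}, which is stated specifically for chains of 2-cycles, so your argument actually matches the generality of the lemma's hypotheses better), and for the final strict inequality you use a pointwise sandwich $y^T\widetilde F(x_0,x_1)y>e^{2s(x_1)\lambda_k}\ge e^{2s(x_2)\lambda_k}\ge y^TF(x_2)y$, whereas the paper instead shows $y^T\widetilde F(x_0,x_1)y$ is decreasing in $x_1$ with limit $y^TF(x_0)y\ge y^TF(x_2)y$.
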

\begin{proof} Let $c_x(t)$ denote the conditional probability density function for the coalescent time $T$ given $T>x$.
%Note the density function for coalescent times conditioned on $T>x$ is
%$$c_x(t)=\begin{cases} 0 &\text{if $t\le x$},\\
%\frac {C'(t)}{1-C(x)}& \text {if $t>x$}.
%\end{cases}$$
With $s(t)=\int_{0}^{t} \mu(\tau)\,d\tau$, the Markov matrix describing the substitution process on a single lineage from time $0$ to time $t$ is 
$$M(\mu,Q,t)=\exp\left (s(t)Q \right).$$
Thus using time-reversibility of the substitution process, with $\pi$ the stationary distribution for $Q$,
$$F(x)=\diag(\pi) \int_0^\infty  (M(\mu,Q,t))^2 c_x(t)\,  dt.$$
Here  the square of the Markov matrix accounts for substitutions in the two lineages before coalescence.

Now $S^{-1}QS$ is diagonal for a matrix $S=\diag(\pi)^{-1/2} U$ with $U$ orthogonal, and $Q$'s eigenvalues satisfy $0=\lambda_1\ge \lambda_2\ge\cdots\ge\lambda_k$ with at least one $\lambda_i<0$ (Lemma 2.2 of \cite{Allman2019}). Thus diagonalizing the Markov matrix yields
\begin{align*}
U^T\diag(\pi)^{-1/2} F(x)\diag(\pi)^{-1/2} U&=\int_0^\infty \Lambda_{M(\mu,Q,t)} c_x(t) \, dt
\end{align*}
where $\Lambda_{M(\mu,Q,t)}$ is diagonal with entries $\exp(2s(t)\lambda_i)$. The diagonal entries of this integral are thus
$$\int_0^\infty \exp(2s(t)\lambda_i) c_x(t) \, dt.$$
But Lemmas \ref{lem:Cineq} and \ref{lem:inteig}
show this is positive, decreasing in $x$, and strictly decreasing for some $i$.
This establishes the claims about $F$, by choosing $y$ to be any eigenvector of $Q$ whose eigenvalue is negative to obtain a strictly decreasing function.

The corresponding claims about $\widetilde F$ are given by the same argument with the cumulative distribution function $C$ replaced by the conditional distribution function given the coalescent time $T<x_1$, that is, with
$$\widetilde C_{x_1} (t) =\begin {cases}
C(t)/C(x_1) & \text{if $t\le x_1$}\\
1&\text{ if $t>x_1$}
\end{cases}.$$

Finally, since for every $t$ the function $\widetilde C_{x_1} (t)$ is decreasing in $x_1$, then for any $y$ and $x_0$, a similar diagonalization argument and again using Lemma \ref{lem:inteig} shows the function $y^T\widetilde F(x_0,x_1)y$ is decreasing in $x_1$. Thus
if $x_0<x_1\le x_2$, then
$$y^T\widetilde F(x_0,x_1)y\ge\lim_{x_1\to\infty} y^T\widetilde F(x_0,x_1)y=y^T F(x_0)y\ge y^T F(x_2)y.$$ 
Moreover, if $y$ is an eigenvector of $Q$ whose eigenvalue is negative, then strict inequality holds.
\qed\end{proof}

\begin{proposition}\label{prop::chain2cyc}
Let $\mathcal N^+$ be a binary level-1 ultrametric  rooted triple network on $\{a,b,c\}$ whose LSA network has topology $((a,b),c)$,  but above $\LSA(\{a,b,c\}, \mathcal N^+)$ there is possibly a chain of 2-cycles.Then, under a coalescent mixture model on $\mathcal N^+$ with fixed parameters $\mu(t)$, $\{N_e\}$, $Q$, $\pi$, the relative site-pattern frequency matrices $F^{ab}$, $F^{bc}$, and $F^{ac}$ are  symmetric positive definite, with $F^{ac}=F^{bc},$   and  satisfy
		$$y^TF^{ab}y\geq y^TF^{ac}y$$  for every $y\in \RR^k$, with the inequality strict for some $y$. 
Moreover, the same statements hold when the arrays $F^{xy}$ are replaced by $F^{xy}_{|K_p=1}$ with $p$ a node placed above the parent of $a,b$ and below the parent of $c$.
\end{proposition}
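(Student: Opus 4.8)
The plan is to analyze the coalescent process for a pair of lineages in the network $\mathcal N^+$ directly, tracking the position of the two lineages sampled at $a$ and $b$ (respectively $a$ and $c$, or $b$ and $c$) and the time and population in which they coalesce. The first step is to establish the symmetry $F^{ac}=F^{bc}$: since the LSA network is $((a,b),c)$, any lineage sampled at $a$ or at $b$ must pass through the common parent node of $a,b$, and above that node the coalescent and substitution processes governing the $a$-vs-$c$ and $b$-vs-$c$ pairs are identical; below it, only a single lineage from the $\{a,b\}$ side is relevant once we are comparing against $c$, so the pendant structure distinguishing $a$ from $b$ is invisible. More carefully, one conditions on the coalescent time of the $a$ (or $b$) lineage with anything — but the cleanest argument is that the marginal two-lineage process for $\{a,c\}$ and for $\{b,c\}$ factor through the induced rooted triple networks $\mathcal N^+_{\{a,c\}}$ and $\mathcal N^+_{\{b,c\}}$, which are isomorphic as metric networks (they differ only by relabeling the single non-$c$ leaf), invoking the marginalization property of the NMSC stated in Subsection~\ref{ssec:NMSC} together with the fact that GTR${}+\mu$ parameters are shared across all gene trees.

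The main work is the inequality $y^TF^{ab}y\ge y^TF^{ac}y$. The key structural observation is that the lineages from $a$ and $b$ either coalesce \emph{below} $\LSA(\{a,b,c\})$ — call this time $T_{ab}$, with density supported on $[0,\infty)$ — or, in the degenerate sense, the comparison of $a$ with $c$ requires that the $a$-lineage first reach a node above the parent of $a,b$ (so its ``effective entry time'' into the shared region is delayed relative to the $b$-lineage's). Concretely, I would condition on the full history up to $\LSA(\{a,b,c\})$ and show that for the $\{a,b\}$ pair the amount of substitution accumulated \emph{before} their coalescence is stochastically dominated by the amount accumulated before the $\{a,c\}$ (equivalently $\{b,c\}$) pair coalesce, because the $\{a,b\}$ lineages have ``more opportunity'' to coalesce early — they share the cherry — whereas the $\{a,c\}$ pair cannot coalesce until reaching a strictly higher node. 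This is exactly the scenario captured by Lemma~\ref{lem:41}: writing each $F^{xy}$ as $\diag(\pi)\int_0^\infty M(\mu,Q,t)^2 c_{xy}(t)\,dt$ where $c_{xy}$ is the coalescent density for the pair, and diagonalizing $Q$ so that $y^TF^{xy}y = \sum_i (\text{coeff}_i)\int_0^\infty \exp(2s(t)\lambda_i)c_{xy}(t)\,dt$, the inequality reduces via Lemma~\ref{lem:inteig} to the statement that the cumulative distribution function $C_{ab}$ dominates $C_{ac}$ pointwise, with strict domination on an interval. The latter is where the network structure enters: one must verify that the coalescent cdf for the cherry pair $\{a,b\}$ lies above that for $\{a,c\}$. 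After decomposing $F^{xy}$ using the lemmas of Section~\ref{sec::freqs} down to networks whose only cycles are $2_3$-cycles above $\LSA$, the $\{a,b\}$ pair can coalesce anywhere from time $0$ upward while the $\{a,c\}$ pair has zero coalescence probability until the parent of $\{a,b\}$ is passed; Lemma~\ref{lem::coalFacts} guarantees the relevant densities are piecewise continuous and strictly positive, so the domination is strict on a subinterval. Then Lemma~\ref{lem::main}, applied with the convex decompositions from Section~\ref{sec::freqs} (noting the conditioned terms $F^{xy}_{|K_p=1}$ are handled identically), promotes the pointwise quadratic-form inequality on each simple piece to the determinant-level conclusion — though here we only need the quadratic-form inequality itself, so Lemma~\ref{lem::main} is invoked mainly to confirm convex combinations of matrices satisfying the hypothesis still satisfy it.

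For the conditioned arrays $F^{xy}_{|K_p=1}$, the argument is the same with the coalescent density for the $\{a,b\}$ pair replaced by its conditional density given no coalescence before $p$; Lemma~\ref{lem:Cineq} is precisely designed to show that conditioning on $T\ge t_i$ preserves the cdf domination (strict on an interval), so all the estimates above carry over verbatim. Positive definiteness of each $F^{xy}$ and $F^{xy}_{|K_p=1}$ follows from the diagonalization: the diagonal entries $\int_0^\infty\exp(2s(t)\lambda_i)c(t)\,dt$ are strictly positive (the integrand is positive and $c$ integrates to $1$), and $\diag(\pi)$ is positive definite, so each $F$ is congruent to a positive diagonal matrix.

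\textbf{Anticipated obstacle.} The genuinely delicate point is not the spectral manipulation — that is essentially packaged in Lemmas~\ref{lem:inteig} and~\ref{lem:41} — but rather justifying the pointwise cdf domination $C_{ab}(t)\ge C_{ac}(t)$ cleanly in the presence of $3_2$- and $2_2$-cycles between the cherry $\{a,b\}$ and the LSA. The convex-combination bookkeeping of Section~\ref{sec::freqs} reduces every $F^{ab}$ term to one whose underlying network has a genuine cherry on $\{a,b\}$ with no cycles below $\LSA$, and correspondingly reduces $F^{ac}$; one must check that in each matched pair of summands the cherry-pair coalescent density still dominates. The cleanest route is to couple the two coalescent processes so that they use the same randomness above the parent of $\{a,b\}$, making the domination a deterministic consequence of the extra coalescence opportunity the cherry pair enjoys strictly below that node — but setting up this coupling carefully across all the cases produced by Lemmas~\ref{lem::2cyc}, \ref{lem::CondCoal}, and~\ref{lem::3_2cyc}, and confirming the strictness survives, is the part that requires genuine care.
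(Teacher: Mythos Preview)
You have misread the hypothesis, and this sends your whole plan in the wrong direction. The assumption is that the \emph{LSA network} has topology $((a,b),c)$ --- that is, below $\LSA(\{a,b,c\})$ the network is literally the tree $((a,b),c)$ with no cycles at all. The only possible cycles are the $2_3$-cycles above the LSA. So the decomposition lemmas of Section~\ref{sec::freqs} are not needed here; this proposition \emph{is} the base case those lemmas reduce to, not something that itself requires them. Your ``anticipated obstacle'' about $3_2$- and $2_2$-cycles between the cherry and the LSA simply cannot arise under the stated hypothesis.

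Once you see this, the paper's argument is immediate. Let $x_1$ be the pendant-edge length to $a,b$ and $x_2>x_1$ that to $c$. There is a single coalescent-time distribution $C$ --- for two lineages entering the parent of $a,b$ at time $x_1$, passing up the internal tree edge, and then into the chain of $2$-cycles --- such that $F^{ab}=F(x_1)$ and $F^{ac}=F^{bc}=F(x_2)$ in the notation of Lemma~\ref{lem:41}. (For the $\{a,c\}$ pair one uses that the marginal two-lineage process above the LSA coincides with the $\{a,b\}$ process conditioned on $T\ge x_2$.) Then Lemma~\ref{lem:41} gives positive-definiteness, the quadratic-form inequality, and strictness directly; no coupling or termwise bookkeeping is needed.

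You also have the conditioning backwards. The event $K_p=1$ means \emph{one} lineage reaches $p$, i.e.\ the $a,b$ lineages \emph{have} coalesced by time $x_p$, so $F^{ab}_{|K_p=1}$ is conditioned on $T<x_p$, not on $T\ge x_p$. This is exactly why Lemma~\ref{lem:41} introduces the array $\widetilde F(x,x_1)$ (conditioned on $x<T<x_1$): one has $F^{ab}_{|K_p=1}=\widetilde F(x_1,x_p)$, while $F^{ac}_{|K_p=1}=F^{bc}_{|K_p=1}=F(x_2)$ since conditioning on $a,b$ having already merged leaves exactly two lineages at the LSA whose coalescent above it is unchanged. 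The final displayed inequality of Lemma~\ref{lem:41} then gives $y^T\widetilde F(x_1,x_p)y>y^TF(x_2)y$ for the appropriate $y$. Lemma~\ref{lem:Cineq}, which you invoke, is an ingredient in proving Lemma~\ref{lem:41}, not a separate step here.
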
 
 \begin{proof}
           Let $x_1$ be the length of the pendant edges to $a$ and $b$, and $x_2$ the length of the pendant edge to $c$, so $x_2>x_1$. Then applying  Lemma \ref{lem:41} for an appropriately chosen distribution $C(t)$ of coalescent times so
           $$F^{ab}=F(x_1),\ \ F^{ac}=F^{bc}=F(x_2),$$
the result is immediate.	
           
Let $x_p$ denote the distance from $a$ or $b$ to $p$,  so $x_1<x_p< x_2$. Then conditioning on $K_p=1$, in the notation of Lemma \ref{lem:41} we have
       $$F^{ab}_{|K_p=1}=\widetilde F(x_1,x_p),\ \ F^{ac}_{|K_p=1}=F^{bc}_{|K_p=1}=F^{bc}=F(x_2),$$ so again Lemma \ref{lem:41}  yields the claim.
 \qed\end{proof}

We now turn from considering a coalescent mixture model, with a single substitution model class, to the mixture of coalescent mixtures $\mathcal M$.

\begin{lemma}\label{lem::switchability}
	Let $\mathcal N^+$ be a level-1 ultrametric rooted triple network on $\{a,b,c\}$ with no $4$-cycle.  Suppose $\{a,b\}$ form a cherry in the tree topology obtained from suppressing all cycles of  $ \mathcal N^+$. Then, under the mixture of coalescent mixtures model $\M$ on $\mathcal N^+$,
	$ F^{ac}(\theta)= F^{bc}(\theta).$   	
\end{lemma}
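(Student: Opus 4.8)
The plan is to reduce this to the decomposition lemmas of Section \ref{sec::freqs} together with the symmetry statement in Proposition \ref{prop::chain2cyc}. The key observation is that $F^{ac}(\theta)$ and $F^{bc}(\theta)$ are computed from the same model $\M$, so it suffices to show that the convex-combination expressions produced for them by iterated application of Lemmas \ref{lem::2_1cycles}, \ref{lem::3_1and4cycles}, \ref{lem::2cyc}, \ref{lem::CondCoal}, and \ref{lem::3_2cyc} agree term by term, once we track what each simplification step does to the pair $a,b$ versus to $c$. Since $\mathcal N^+$ has no $4$-cycle and $\{a,b\}$ is a cherry in the tree obtained by suppressing all cycles, every cycle in $\mathcal N^+$ is a $2_1$-, $3_1$-, $2_2$-, $3_2$-, or $2_3$-cycle, and the hybrid node of any $2_2$- or $3_2$-cycle lies above the common parent of $a$ and $b$ (by Corollary \ref{lem::numcyctriplet}, a $4$-cycle being excluded). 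So the roles of $a$ and $b$ are completely interchangeable at every step of the reduction.

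First I would formalize an $a\leftrightarrow b$ symmetry argument: there is a model isomorphism of $\M(\theta)$ that swaps the labels $a$ and $b$ and fixes everything else, because the cherry structure and the absence of a $4_1$- or $4_2$-cycle distinguishing $a$ from $b$ means the network $\mathcal N^+$ (with all its metric and population parameters) is invariant under relabeling $a\leftrightarrow b$ up to isomorphism — or, more carefully, even if the pendant population functions on the $a$- and $b$-edges differ, those functions are irrelevant because only a single lineage occupies each pendant edge below the cherry node, exactly as in the argument for Lemma \ref{lem::2_1cycles}. Hence $F^{ac}(\theta) = F^{bc}(\theta)$ will follow once we know that the joint site-pattern distribution on the pair $\{a,c\}$ depends on the $a$-pendant edge only through its length, not its population size, and similarly for $\{b,c\}$; and the $a$- and $b$-pendant edges have equal length by ultrametricity.

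Concretely, the steps I would carry out are: (1) apply Lemmas \ref{lem::2_1cycles} and \ref{lem::3_1and4cycles} to strip all $2_1$- and $3_1$-cycles, noting the resulting convex combination has the same coefficients whether we track $\{a,c\}$ or $\{b,c\}$ since those cycles sit on pendant edges and affect $a$, $b$, $c$ symmetrically or not at all; (2) apply Lemma \ref{lem::2cyc} recursively to eliminate $2_2$-cycles, and Lemma \ref{lem::3_2cyc} to handle a $3_2$-cycle if present, each of which introduces terms $F^{xy}_{|K_p=1}(\cdot)$ for a node $p$ above the parent of $a,b$ and below the parent of $c$ — exactly the node in the ``Moreover'' clause of Proposition \ref{prop::chain2cyc}; (3) observe that each network appearing in the final convex combination has LSA network with topology $((a,b),c)$ and only a possible chain of $2_3$-cycles above the LSA, so Proposition \ref{prop::chain2cyc} applies and gives $F^{ac} = F^{bc}$ (and $F^{ac}_{|K_p=1} = F^{bc}_{|K_p=1}$) for every such term; (4) conclude $F^{ac}(\theta) = F^{bc}(\theta)$ by summing. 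The main obstacle is bookkeeping: one must verify that the sequence of network simplifications can be organized so that $\{a,c\}$ and $\{b,c\}$ produce the *same* list of simpler networks with the *same* coefficients — this is where the no-$4$-cycle hypothesis and the cherry hypothesis do the real work, since they guarantee every cycle's hybrid node is positioned symmetrically with respect to $a$ and $b$, and it is worth stating explicitly that $\mathcal N^+_i$, $\mathcal N^+_0$ etc. in those lemmas still have $\{a,b\}$ as a cherry (or become a chain-of-$2_3$-cycles network over $((a,b),c)$), so the induction closes.
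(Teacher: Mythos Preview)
Your proposal is correct, and in fact the paper's much shorter argument is already contained in it. The paper stops after your step (1): once $2_1$- and $3_1$-cycles are removed via Lemmas \ref{lem::2_1cycles} and \ref{lem::3_1and4cycles}, there are no cycles below the parent of $a,b$ (any remaining $2_2$-, $3_2$-, or $2_3$-cycle has its hybrid node at or above that parent), so the pendant edges to $a$ and $b$ are plain edges of equal length by ultrametricity, each carrying a single lineage whose population size is irrelevant. Exchangeability of $a$ and $b$ under the coalescent-plus-substitution process for each substitution class, and hence for $\mathcal M$, is then immediate---this is exactly the observation you make in your second paragraph.

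Your steps (2)--(4), decomposing further via Lemmas \ref{lem::2cyc}, \ref{lem::CondCoal}, \ref{lem::3_2cyc} and invoking Proposition \ref{prop::chain2cyc} term by term, would also work, but they are not needed here; that machinery is what the paper reserves for the \emph{inequality} $\det F^{ab}>\det F^{ac}$ in Proposition \ref{prop::2and3_1cyc}, where a full reduction to chain-of-$2$-cycles networks really is required in order to apply Lemma \ref{lem::main}. For the mere equality $F^{ac}=F^{bc}$, the direct symmetry argument after the first reduction suffices, so your route is valid but longer than necessary.
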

 \begin{proof}
 	By Lemmas  \ref{lem::2_1cycles} and  \ref{lem::3_1and4cycles}, we may assume $\mathcal N^+$ has neither a $2_1$- nor a $3_1$-cycle, so there are no cycles below the parent of $a,b$. By the ultrametricity of the network,  $a$ and $b$ are exchangeable under the combined coalescent and substitution model for each substitution model class, and therefore for the model $\M$.
\qed\end{proof}

This result is used to show that logDet distances from rooted triple networks with only $2$- and $3_1$-cycles satisfy the same equality and inequality relationships as those from trees.
  	
\begin{proposition} (No $4_1$-cycles or $3_2$-cycles) \label{prop::2and3_1cyc}Let $ \mathcal N^+$ be a level-1 ultrametric rooted triple network on $\{a,b,c\}$ with neither a 4-cycle nor a $3_2$-cycle. Let $\T=((a,b),c)$ be the tree topology obtained after suppressing all cycles in  $ \mathcal N^+$.  Under the mixture of coalescent mixtures model $\M$ on  $ \mathcal N^+$ the theoretical logDet distances  satisfy 
$$d_{LD}(a,c)=d_{LD}(b,c)> d_{LD}(a,b).$$ 
\end{proposition}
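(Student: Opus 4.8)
The plan is to reduce the mixture-of-coalescent-mixtures statement to the single-class coalescent mixture result of Proposition \ref{prop::chain2cyc} by using the convex-decomposition lemmas of Section \ref{sec::freqs}, and then to invoke the algebraic Lemma \ref{lem::main} to pass inequalities through determinants. First I would dispose of the easy equality: since $\mathcal N^+$ has no 4-cycle, Lemma \ref{lem::switchability} immediately gives $F^{ac}(\theta)=F^{bc}(\theta)$, hence $d_{LD}(a,c)=d_{LD}(b,c)$. It remains to prove the strict inequality $d_{LD}(a,c)>d_{LD}(a,b)$, equivalently (after unwinding formula \eqref{formula::logdet} and noting that the marginal base distributions $\hat f_a,\hat f_b,\hat f_c$ — and thus the $\hat g$ correction terms — coincide for all pairs by ultrametricity and the GTR stationarity assumption) that $\det F^{ab}(\theta) > \det F^{ac}(\theta)$.

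Next I would apply the decomposition machinery of Section \ref{sec::freqs}. Using Lemmas \ref{lem::2_1cycles}, \ref{lem::3_1and4cycles}, \ref{lem::2cyc}, \ref{lem::CondCoal}, and the hypothesis that there is no $4_1$- or $3_2$-cycle, I can write, for each substitution class $j$,
$$F^{ab}(\theta^{(j)}) = \sum_i \alpha_i\, F_i^{(j)}, \qquad F^{ac}(\theta^{(j)}) = \sum_i \alpha_i\, G_i^{(j)},$$
where the $\alpha_i\ge 0$ are probabilities built from hybridization parameters $\gamma$ and coalescence probabilities $P(K_p=\cdot)$ that are \emph{the same} for the $ab$ and $ac$ expansions, and where each pair $(F_i^{(j)},G_i^{(j)})$ is a pair of pattern-frequency matrices (possibly conditioned on $K_p=1$) arising from a single-class coalescent mixture on a network of the type covered by Proposition \ref{prop::chain2cyc} — namely one whose LSA network is $((a,b),c)$ with at most a chain of $2_3$-cycles above the LSA. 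The point of requiring no $4_1$- and no $3_2$-cycle is exactly that every simpler network produced still has the cherry $\{a,b\}$ in the correct position, so $G_i^{(j)}$ really is the ``$ac$'' matrix for the same simpler network for which $F_i^{(j)}$ is the ``$ab$'' matrix. Then mixing over classes with weights $\lambda_j$ gives $F^{ab}(\theta)=\sum_{i,j}\lambda_j\alpha_i F_i^{(j)}$ and likewise for $F^{ac}$.

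Finally, Proposition \ref{prop::chain2cyc} tells us each $F_i^{(j)}$ and $G_i^{(j)}$ is symmetric positive definite, that $y^T F_i^{(j)} y \ge y^T G_i^{(j)} y$ for all $y$, and — crucially — that the inequality is strict for some $y$ (and this already for $i,j$ corresponding to, say, the class with $Q_j\ne 0$ and any term, using an eigenvector of $Q_j$ with negative eigenvalue). Lemma \ref{lem::main}, applied to the families $\{F_i^{(j)}\}$, $\{G_i^{(j)}\}$ with coefficients $\{\lambda_j\alpha_i\}$, then yields $\det F^{ab}(\theta) > \det F^{ac}(\theta)$, which is the desired strict distance inequality. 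The main obstacle I anticipate is bookkeeping rather than conceptual: verifying that the recursive applications of Lemmas \ref{lem::2_1cycles}--\ref{lem::CondCoal} genuinely terminate in networks to which Proposition \ref{prop::chain2cyc} applies, and that the $ab$- and $ac$-decompositions use a common index set with common coefficients — this is where the ``no $4_1$-, no $3_2$-cycle'' hypothesis does its work, since those cycles would produce child networks in which $\{a,b\}$ is no longer a cherry, breaking the matching between $F_i$ and $G_i$.
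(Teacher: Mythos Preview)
Your proposal is correct and follows essentially the same route as the paper: reduce to a determinant inequality via the common base-frequency correction, use Lemma \ref{lem::switchability} for the equality, apply the decomposition lemmas of Section \ref{sec::freqs} to write each $F^{xy}(\theta)$ as a convex combination (with common coefficients) of matrices covered by Proposition \ref{prop::chain2cyc}, and conclude with Lemma \ref{lem::main} across both network pieces and substitution classes. Your explicit attention to the matching of coefficients between the $ab$- and $ac$-expansions, and to why the absence of $4_1$- and $3_2$-cycles guarantees that every reduced network keeps $\{a,b\}$ as the cherry, is exactly the bookkeeping the paper leaves implicit.
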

\begin{proof} Under the model $\M$, the frequencies of  bases at any taxon are identical, given by the same convex combination of the base frequency vectors $ \pi_i$ for substitution classes $i$.  Thus the value of  $\ln(g_ug_v)$ in the definition of the logDet distance, equation \eqref{formula::logdet}, is identical for every pair of distinct taxa  $x,y\in\{a,b,c\}$.  
It thus suffices to show 
$$\det F^{ab}(\theta) \geq \det  F^{ac}(\theta) =\det F^{bc}(\theta).$$

Lemma \ref{lem::switchability} gives the equality.   By Lemmas \ref{lem::2_1cycles}, \ref{lem::3_1and4cycles}, and \ref{lem::2cyc}, we can express  $ F^{xy}(\theta)$ as a convex combination of relative site-pattern frequency matrices, possibly conditioned on $K_p=1$, of networks of the form of the tree $\T$ joined to a (possibly empty) chain of 2-cycles above $\T$'s  root, such as depicted in Figure \ref{fig::chain2cyc}. By  Proposition \ref{prop::chain2cyc} each of those matrices for coalescent mixture models satisfy the hypotheses of Lemma \ref{lem::main}. Lemma \ref{lem::main} thus yields the claim for mixtures of coalescent mixtures by considering a convex combination across both the networks and substitution model classes.\qed\end{proof} 

A weaker result, without the inequality, applies to networks with $3_2$-cycles.

\begin{proposition} ($3_2$-cycle) \label{prop::3_2cyc}  Let $ \mathcal N^+$ be a level-1 ultrametric rooted triple network on $\{a,b,c\}$ with a $3_2$-cycle. Let $\T=((a,b),c)$ be the tree topology obtained after suppressing all cycles in  $ \mathcal N^+$.  Then under the mixture of coalescent mixtures model $\M$ on  $ \mathcal N^+$,  the theoretical logDet distances  satisfy 
$$d_{LD}(a,c)=d_{LD}(b,c).$$  
\end{proposition}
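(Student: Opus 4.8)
The plan is to copy the opening move of the proof of Proposition~\ref{prop::2and3_1cyc} to reduce the asserted equality of logDet distances to an equality of determinants, and then to obtain that equality directly from the exchangeability of $a$ and $b$, bypassing the convex-decomposition lemmas of Section~\ref{sec::freqs} entirely.

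First I would note, exactly as in Proposition~\ref{prop::2and3_1cyc}, that under $\M$ the marginal base-frequency vector at each of $a,b,c$ is the same convex combination $\sum_i\lambda_i\pi_i$ of the class stationary distributions, since each GTR class is time-reversible with stationary distribution $\pi_i$ and hence has $\pi_i$ as the marginal at every leaf of every gene tree. Consequently the term $\tfrac{1}{2}\ln(g_ug_v)$ in~\eqref{formula::logdet} takes a common value for all pairs $u,v\in\{a,b,c\}$, so it suffices to show $\det F^{ac}(\theta)=\det F^{bc}(\theta)$.

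Next I would invoke Lemma~\ref{lem::switchability}. A $3_2$-cycle is not a $4$-cycle, so $\mathcal N^+$ has no $4$-cycle; and since suppressing all cycles of $\mathcal N^+$ produces $\T=((a,b),c)$, the pair $\{a,b\}$ forms a cherry. Thus the hypotheses of Lemma~\ref{lem::switchability} are satisfied, and it yields $F^{ac}(\theta)=F^{bc}(\theta)$ as matrices, hence $\det F^{ac}(\theta)=\det F^{bc}(\theta)$ and $d_{LD}(a,c)=d_{LD}(b,c)$.

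There is no real obstacle in this direction; the only point worth flagging is why, unlike in Proposition~\ref{prop::2and3_1cyc}, no inequality $d_{LD}(a,c)>d_{LD}(a,b)$ is claimed. The decomposition of $F^{xy}(\theta)$ for a $3_2$-cycle provided by Lemma~\ref{lem::3_2cyc} involves the summands $F^{xy}(\theta_3)$ and $F^{xy}(\theta_4)$ on the networks $\mathcal N^+_3,\mathcal N^+_4$, in which $a$ and $b$ are no longer siblings, so the quadratic-form inequality between $F^{ab}$ and $F^{ac}$ underlying Lemma~\ref{lem::main} can fail for those terms and the determinant comparison does not go through. I would therefore not attempt to strengthen the statement. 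One should also verify that Lemma~\ref{lem::switchability} genuinely applies to the $3_2$-cycle configuration: by Corollary~\ref{lem::numcyctriplet} there are no $2_1$- or $3_1$-cycles to remove, so the common parent of $a$ and $b$ is the hybrid node of the $3_2$-cycle with nothing below it, which is exactly the situation treated in the proof of that lemma.
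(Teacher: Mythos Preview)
Your argument is correct and is precisely the paper's proof: invoke Lemma~\ref{lem::switchability} to obtain $F^{ac}(\theta)=F^{bc}(\theta)$, then conclude equality of logDet distances via the marginal-frequency reduction already used in Proposition~\ref{prop::2and3_1cyc}. One small correction to your closing side remark: Corollary~\ref{lem::numcyctriplet} does \emph{not} preclude $2_1$- or $3_1$-cycles when a $3_2$-cycle is present (it only bounds the number of $3$-cycles and rules out a $4$-cycle), but this does not affect the argument since the proof of Lemma~\ref{lem::switchability} itself disposes of any $2_1$- and $3_1$-cycles via Lemmas~\ref{lem::2_1cycles} and~\ref{lem::3_1and4cycles}.
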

\begin{proof}
From Lemma \ref{lem::switchability}, $ F^{ac}(\theta) =F^{bc}(\theta)$, so the result follows as in the previous proof.  
\qed\end{proof}

Proposition \ref{prop::2and3_1cyc},  and the arguments leading to it, show that the equality and inequality relationships of logDet distances between only 3 taxa carry no signal of either $2$- or $3_1$-cycles. Proposition \ref{prop::3_2cyc}, however, leaves open the possibility that for a network with a $3_2$-cycle the smallest distance may not necessarily correspond to the taxa which are neighbors after 2- and 3- cycles are suppressed. This suggests that the presence of a $3_2$-cycle might be detectable, at least under some circumstances. In Section \ref{sec::other} we return to this issue, providing a more in-depth analysis of triples of logDet distances.

\begin{proposition}\label{prop::4cyc} ($4_1$-cycle) Let $ \mathcal N^+$ be a level-1 ultrametric rooted triple network on $\{a,b,c\}$ with a $4$-cycle, such that contracting all cycles except the 4-cycle and then deleting one of its hybrid edges  gives the trees $((a,b),c)$ and $((a,c),b)$. (See Figure \ref{fig::4c}.)
 Then under the mixture of coalescent mixtures model $\M$ on  $ \mathcal N^+$,  the theoretical logDet distances  satisfy 
$$d_{LD}(b,c) > d_{LD}(a,b)\text{ and } d_{LD}(b,c) > d_{LD}(a,c).$$
Moreover, if all other parameters are fixed, then for generic values of the hybridization parameters,
$$ d_{LD}(a,b)\neq  d_{LD}(a,c).$$
\end{proposition}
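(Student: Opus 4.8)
The plan is to follow the strategy of the proof of Proposition \ref{prop::2and3_1cyc}: first reduce the statement about logDet distances to one about determinants of pattern frequency matrices, then decompose each such matrix over the two ``tree-like'' networks obtained by deleting a hybrid edge of the $4$-cycle, and finally invoke Lemma \ref{lem::main}. Since under $\M$ every taxon carries the same base-frequency distribution, the term $\frac{1}{2}\ln(\hat g_x\hat g_y)$ in \eqref{formula::logdet} is the same for all three pairs; hence it suffices to prove $\det F^{ab}(\theta)>\det F^{bc}(\theta)$, $\det F^{ac}(\theta)>\det F^{bc}(\theta)$, and, for the last assertion, that $\det F^{ab}(\theta)\ne\det F^{ac}(\theta)$ for generic hybridization parameters.

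For the decomposition, note that by Corollary \ref{lem::numcyctriplet} the $4$-cycle is necessarily a $4_1$-cycle and the only other cycles of $\mathcal N^+$ are $2_1$-cycles together with a possible chain of $2$-cycles above $\LSA(\{a,b,c\},\mathcal N^+)$. Writing $h_1,h_2$ for the hybrid edges of the $4_1$-cycle and $\gamma_i=\gamma(h_i)$, Lemma \ref{lem::3_1and4cycles} gives $F^{xy}(\theta)=\gamma_1F^{xy}(\theta_1)+\gamma_2F^{xy}(\theta_2)$, where $\mathcal N^+_i$ deletes $h_j$, $j\ne i$. Clearing the $2_1$-cycles from $\mathcal N^+_1$ and $\mathcal N^+_2$ by repeated use of Lemma \ref{lem::2_1cycles} replaces them by networks $\widetilde{\mathcal N}^+_1,\widetilde{\mathcal N}^+_2$ whose only cycles form a chain of $2$-cycles above the LSA, and, by the hypothesis of the proposition (cf.\ Figure \ref{fig::4c}), with LSA networks $((a,b),c)$ and $((a,c),b)$ respectively; crucially, $\widetilde\theta_1,\widetilde\theta_2$ do not depend on $\gamma_1,\gamma_2$. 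Thus $F^{xy}(\theta)=\gamma_1F^{xy}(\widetilde\theta_1)+\gamma_2F^{xy}(\widetilde\theta_2)$, and further decomposing over substitution classes, $F^{xy}(\theta)=\sum_{i,j}\lambda_i\gamma_j\,F^{xy}_i(\widetilde\theta_j)$, where $F^{xy}_i(\cdot)$ denotes the frequency matrix in substitution class $i$.

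To obtain the inequalities I would apply Proposition \ref{prop::chain2cyc} to each $F^{xy}_i(\widetilde\theta_j)$: on $\widetilde{\mathcal N}^+_1$ the matrices $F^{ab}_i(\widetilde\theta_1),F^{ac}_i(\widetilde\theta_1),F^{bc}_i(\widetilde\theta_1)$ are symmetric positive definite with $F^{ac}_i(\widetilde\theta_1)=F^{bc}_i(\widetilde\theta_1)$ and $y^TF^{ab}_i(\widetilde\theta_1)y\ge y^TF^{bc}_i(\widetilde\theta_1)y$, strict for some $y$; symmetrically on $\widetilde{\mathcal N}^+_2$ one has $F^{ab}_i(\widetilde\theta_2)=F^{bc}_i(\widetilde\theta_2)$ and $y^TF^{ac}_i(\widetilde\theta_2)y\ge y^TF^{bc}_i(\widetilde\theta_2)y$, strict for some $y$. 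Now Lemma \ref{lem::main}, applied with the combined index $(i,j)$, weights $\lambda_i\gamma_j$, and matrices $F^{ab}_i(\widetilde\theta_j)$ versus $F^{bc}_i(\widetilde\theta_j)$ --- where the $j=1$ terms supply the strict inequality and the $j=2$ terms equality --- yields $\det F^{ab}(\theta)>\det F^{bc}(\theta)$, i.e.\ $d_{LD}(b,c)>d_{LD}(a,b)$. The same application of Lemma \ref{lem::main} with $F^{ac}_i(\widetilde\theta_j)$ in place of $F^{ab}_i(\widetilde\theta_j)$, where now the $j=2$ terms supply the strict inequality and the $j=1$ terms equality, gives $\det F^{ac}(\theta)>\det F^{bc}(\theta)$ and $d_{LD}(b,c)>d_{LD}(a,c)$.

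Finally, for the generic statement, fix all parameters except $\gamma_1$ (with $\gamma_2=1-\gamma_1$). Then $F^{ab}(\theta)$ and $F^{ac}(\theta)$ are matrices affine in $\gamma_1$, so $Q(\gamma_1):=\det F^{ab}(\theta)-\det F^{ac}(\theta)$ is a polynomial in $\gamma_1$, and $d_{LD}(a,b)=d_{LD}(a,c)$ precisely when $Q(\gamma_1)=0$. At $\gamma_1=1$ the affine expression for $F^{ab}(\theta)$ reduces to $F^{ab}(\widetilde\theta_1)$ and that for $F^{ac}(\theta)$ to $F^{ac}(\widetilde\theta_1)$; since, class by class, $F^{ab}_i(\widetilde\theta_1)$ and $F^{ac}_i(\widetilde\theta_1)$ are symmetric positive definite with $y^TF^{ab}_i(\widetilde\theta_1)y\ge y^TF^{ac}_i(\widetilde\theta_1)y$ strict for some $y$, Lemma \ref{lem::main} gives $\det F^{ab}(\widetilde\theta_1)>\det F^{ac}(\widetilde\theta_1)$, so $Q(1)\ne0$. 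Hence $Q$ is not the zero polynomial and vanishes at only finitely many $\gamma_1\in(0,1)$, which proves the claim. The step I expect to require the most care is the structural reduction of the second paragraph --- confirming that deleting a hybrid edge of the $4_1$-cycle and clearing $2_1$-cycles really produces the trees $((a,b),c)$ and $((a,c),b)$ with the chain of $2$-cycles intact above the LSA, and that the resulting parameters $\widetilde\theta_1,\widetilde\theta_2$ are independent of the hybridization parameter --- since once this is in hand the determinant arguments are routine applications of Lemma \ref{lem::main} and Proposition \ref{prop::chain2cyc}.
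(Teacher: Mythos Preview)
Your argument is correct and follows essentially the same route as the paper's: reduce to determinants, decompose via Lemmas \ref{lem::2_1cycles} and \ref{lem::3_1and4cycles} into the two tree-with-chain networks, then apply Proposition \ref{prop::chain2cyc} classwise together with Lemma \ref{lem::main}. Your explicit invocation of Corollary \ref{lem::numcyctriplet} to rule out $3$- and $2_2$-cycles, and hence to justify that only $2_1$-cycles (below the LSA) and the chain of $2$-cycles (above) survive, is a welcome clarification the paper leaves implicit.

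The one place your argument genuinely diverges is the generic claim. The paper treats the logDet distances as analytic functions of \emph{all} hybridization parameters, sends every $\gamma_e$ to a boundary value so the model degenerates to a tree, and then cites the tree result (Theorem~1 of \cite{Allman2019}) for the needed inequality. You instead work with the polynomial $Q(\gamma_1)=\det F^{ab}-\det F^{ac}$ in the single $4$-cycle parameter, evaluate at $\gamma_1=1$ where the model becomes $\widetilde{\mathcal N}^+_1$, and obtain $Q(1)\ne 0$ directly from Proposition \ref{prop::chain2cyc} and Lemma \ref{lem::main} already in hand. This is a modest but genuine simplification: it stays within the paper's own toolkit, avoids the external citation, and the polynomial framing makes the finiteness of the exceptional set immediate. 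The paper's version, on the other hand, handles all hybridization parameters simultaneously via analyticity; your Fubini-type reduction (fix the others, vary $\gamma_1$) achieves the same genericity conclusion.
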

\begin{proof} As in Proposition \ref{prop::2and3_1cyc}, to establish these inequalities for the logDet distance, it is enough to show
\begin{equation}\det F^{bc} (\theta)< \det F^{ab}(\theta)\text{ and }\det F^{bc}(\theta)< \det F^{ac}(\theta).\label{eq:detineq}
\end{equation}

From Lemmas \ref{lem::2_1cycles} and \ref{lem::3_1and4cycles}, for $x,y\in\{a,b,c\}$
	$$F^{xy}(\theta)= \gamma_1 F^{xy}(\theta_1)+\gamma_2 F^{xy}(\theta_2)$$
where $\mathcal N^+_1$ and $\mathcal N^+_2$ have the structure of the trees $((a,b),c)$ and $((a,c),b)$ with chains of 2-cycles possibly attached above their roots.
Proposition \ref{prop::chain2cyc} implies that for each GTR substitution model class
$$y^TF^{ab}(\theta_1)y\geq y^TF^{bc}(\theta_1)y=y^TF^{ac}(\theta_1)y\quad\text{ and }\quad y^TF^{ac}(\theta_2)y\geq y^TF^{ab}(\theta_2)y= y^TF^{bc}(\theta_2)y,$$
 for every $y\in \RR^k$, with the inequalities strict for some choices of $y$. 
From this and Lemma \ref{lem::main} we obtain the inequalities \eqref{eq:detineq}.

To see $ d_{LD}(a,b)\neq  d_{LD}(a,c)$ for generic hybridization parameters, first observe that these distances extend to analytic functions of the $\gamma$ on all of $\mathbb C$. To show the inequality for generic $\gamma$, it is enough to show there exists one specific choice of $\gamma \in \mathbb C$ for which they are not equal. First consider a choice on the boundary of the  parameter space, by letting $\gamma_e=1$, $\gamma_{e'}=0$ for every pair $e,e'$ of hybrid edges with a common child so that the model 
reduces to one on the tree $((a,c),b)$.
In this case Theorem 1 of
 \cite{Allman2019} establishes the inequality. Continuity implies that there are then choices of $0<\gamma_e<1$, where the model does not degenerate to one on a tree, for which these distances are also not equal.
\qed\end{proof}

Assuming generic parameter values, Proposition \ref{prop::4cyc} combined with earlier results implies that the presence of a $4$-cycle is indicated by three distinct logDet distances computed from expected pattern frequencies. However, the three networks at the top of Figure  \ref{fig::4c} all satisfy the hypothesis of Proposition \ref{prop::4cyc}, but using equalities and inequalities of
logDet distances we cannot distinguish them. We can only identify their undirected version as depicted in the bottom of Figure \ref{fig::4c}.  

\begin{figure}
	\begin{center}
	\includegraphics[width=4.in]{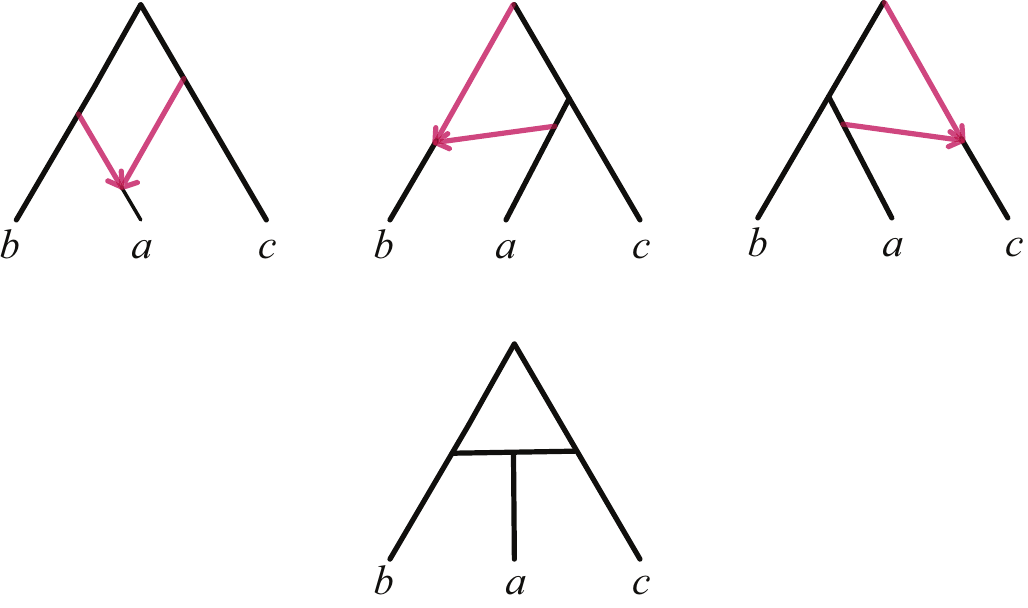}
	\end{center}
	\caption{ (Top) Three topologically-distinct rooted triple networks with a 4-cycle  displaying the trees $((a,b),c)$ and $((a,c),b)$. (Bottom) The undirected rooted topology shared by them. }\label{fig::4c}
\end{figure}

Nonetheless, the combinatorial result of Proposition \ref{prop::combnet} yields information on larger cycles and their hybrid nodes by first using logDet distances to determine undirected rooted triple networks. This gives our main result.

\begin{theorem}\label{thm::main}Let $ \mathcal N^+$ be a binary level-1 ultrametric  network on $X$ with a $|X|\geq 3$. Let $\widetilde N$ denote the topological LSA network $\mathcal N^\oplus$ modulo 2- and 3-cycles and directions of edges in 4-cycles. Then  for generic hybridization parameters under the mixture of coalescent mixtures model $\M$ on  $ \mathcal N^+$,  $\widetilde N$ is identifiable  from the theoretical logDet distances for pairs of taxa.
\end{theorem}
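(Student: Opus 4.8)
The plan is to reconstruct $\widetilde N$ in two stages. First, for every $3$-element subset of $X$, I use the three theoretical logDet distances among those taxa to identify the induced undirected rooted triple network modulo $2$- and $3$-cycles. Second, I feed the resulting collection of triple networks into the combinatorial reconstruction of Proposition \ref{prop::combnet}, which outputs exactly $\mathcal N^\oplus$ modulo $2$- and $3$-cycles and directions of edges in $4$-cycles, namely $\widetilde N$.

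Before the first stage I record a reduction. For taxa $a,b\in X$ the pairwise pattern frequency matrix $F^{ab}$ under $\mathcal M$ on $\mathcal N^+$ depends on a sampled gene tree only through the distribution of the coalescence time of the $a$- and $b$-lineages, and by the marginalization property of the NMSC this distribution is unchanged when one passes to any displayed subnetwork containing $a$ and $b$. Hence $d_{LD}(a,b)$, computed under $\mathcal M$ on $\mathcal N^+$, equals the corresponding distance computed under $\mathcal M$ on the induced rooted triple network $\mathcal N^+_{\{a,b,c\}}$ for any third taxon $c$, and the hybridization parameters of $\mathcal N^+_{\{a,b,c\}}$ are inherited from those of $\mathcal N^+$. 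So for a fixed triple all analysis takes place on $\mathcal N^+_{\{a,b,c\}}$, a binary level-$1$ ultrametric rooted triple network, and any genericity condition on its parameters pulls back to a generic condition on the hybridization parameters of $\mathcal N^+$. By Corollary \ref{lem::numcyctriplet} this triple network has either no $4$-cycle and no $3_2$-cycle, or a $3_2$-cycle and no $4$-cycle, or a $4$-cycle, in which case up to relabeling it is one of the networks of Figure \ref{fig::4c}. Rather than knowing a priori which case holds, I read it off the pattern of equalities and strict inequalities among $d_{LD}(a,b)$, $d_{LD}(a,c)$, $d_{LD}(b,c)$ using Propositions \ref{prop::2and3_1cyc}, \ref{prop::3_2cyc}, \ref{prop::4cyc}. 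If the three values are distinct, then since Propositions \ref{prop::2and3_1cyc} and \ref{prop::3_2cyc} both force an equality, the network must contain a $4$-cycle, and by Proposition \ref{prop::4cyc} the pair realizing the maximum distance is the pair of ``end'' taxa of the $4$-cycle, so the undirected triple network modulo $2$- and $3$-cycles (bottom of Figure \ref{fig::4c}) is determined. Otherwise two of the distances are equal, say $d_{LD}(a,c)=d_{LD}(b,c)$; for generic hybridization parameters Proposition \ref{prop::4cyc} excludes a $4$-cycle, so by Corollary \ref{lem::numcyctriplet} the triple network is tree-like or has a $3$-cycle, and in either case its image modulo $2$- and $3$-cycles is the rooted tree $((a,b),c)$, which is thereby identified (the repeated taxon $c$ is the non-cherry leaf, whether the repeated distance is the larger or the smaller of the two values).

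The one genuinely delicate point, and the place where genericity is essential, is the $3_2$-cycle case: Proposition \ref{prop::3_2cyc} yields only the single equality $d_{LD}(a,c)=d_{LD}(b,c)$ and leaves open the degenerate possibility that all three distances coincide, in which case the cherry cannot be read off. I would rule this out exactly as in the proof of Proposition \ref{prop::4cyc}: with all parameters but the hybridization parameters fixed, $d_{LD}(a,b)-d_{LD}(a,c)$ extends to an analytic function of the $\gamma$'s on $\mathbb C$, and it is not identically zero because at the boundary values $\gamma\in\{0,1\}$ a $3_2$-cycle triple network degenerates to the tree $((a,b),c)$, for which Theorem 1 of \cite{Allman2019} gives $d_{LD}(a,b)<d_{LD}(a,c)$; hence it is nonzero off a proper subvariety. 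Since $X$ is finite there are only finitely many triples, each contributing (via Proposition \ref{prop::4cyc} for $4$-cycles and the argument just sketched for $3_2$-cycles) the complement of a proper subvariety in the finitely many hybridization parameters of $\mathcal N^+$, so a single generic choice of those parameters makes all the triple-level reconstructions valid simultaneously. Assembling the identified undirected rooted triple networks modulo $2$- and $3$-cycles gives the set $S$ of Proposition \ref{prop::combnet}, and applying that proposition yields $\widetilde N$, proving the theorem. The main obstacle is thus not any single hard estimate but the bookkeeping of the case analysis together with this $3_2$-cycle genericity argument and its uniform choice across all triples.
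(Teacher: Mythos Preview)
Your proof follows the paper's exactly: use Propositions \ref{prop::2and3_1cyc}--\ref{prop::4cyc} to read off each element of $S$ from the pattern of equalities among the three logDet distances, then invoke Proposition \ref{prop::combnet}. You are in fact more careful than the paper's terse argument in explicitly ruling out, via an analyticity argument, the $3_2$-cycle degeneracy where all three distances coincide (a possibility the paper's proof passes over in silence); one small correction is that at the boundary $\gamma\in\{0,1\}$ the triple network need not become a tree, since it may still carry $2$- and $3_1$-cycles, so the strict inequality $d_{LD}(a,b)<d_{LD}(a,c)$ there should be drawn from Proposition \ref{prop::2and3_1cyc} of this paper rather than from Theorem~1 of \cite{Allman2019}.
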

\begin{proof}
Propositions \ref{prop::2and3_1cyc}, \ref{prop::3_2cyc}, and \ref{prop::4cyc} imply that for generic parameters the three logDet distances for any choice of 3 taxa are distinct if, and only if, the induced rooted triple network has a 4-cycle. Moreover, the unrooted topology of the 4-cycle is determined by the largest of the three distances. Thus
the set $S$ of Proposition \ref{prop::combnet} is determined, yielding the result.
\qed\end{proof}

An example of a rooted level-1 network and the structure that we have shown to be identifiable from logDet distances under the model $\M$ is given in Figure
\ref{fig::logdetnet}. On the left is a level-1 rooted phylogenetic network
 with cycles of various sizes, and on the right the partially directed network that could be inferred from it for generic parameters.
 \begin{figure}
	\begin{center}
			\includegraphics{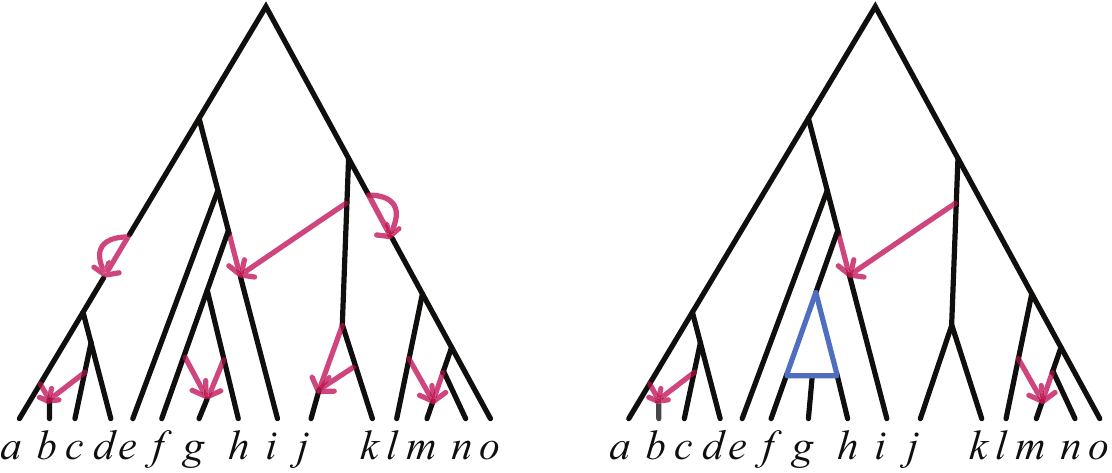} 
	\end{center}
\caption{(Left) A rooted binary level-1 network and (Right) that part of its structure that Theorem \ref{thm::main} identifies from logDet distances under the model $\M$  for generic parameters. Both $2$- and $3$- cycles are lost, as are the directions of $4$-cycle edges, and hence knowledge of the hybrid nodes in 4-cycles. Directed edges in cycles of size greater than 4 are identifiable.}\label{fig::logdetnet} 
\end{figure}

\section{Normalized triples of logDet distances.}\label{sec::other}

In the previous section, we obtained linear equalities and inequalities that the logDet distances between three taxa must satisfy if they are related by various level-1 rooted networks. Combined with the combinatorial result of Section \ref{sec::comb} these are sufficient for proving the identifiability claim that is the main focus of this work. However, it is worthwhile to seek a more complete characterization of what distances are achievable by various network topologies. In particular, with an eye toward practical application, any tighter characterizations would enable stronger testing for network topology from the empirical distances.

Here we conduct a partial investigation, characterizing not the triple of theoretical logDet distances that may be produced on rooted 3-taxon networks, but rather the \emph{normalized triple} obtained by dividing the distances by their sum. The triple of distances forms a point in the non-negative octant $\left (\mathbb R_{\ge0}\right )^3$, while the normalized triple gives a point in the 2-dimensional simplex. Thus plots can be made with the normalized distances that are analogous to the simplex plots for visualizing gene quartet concordance factors \cite{Banos2019,MAR2019,AMR2020}. 
Just as simplex plots of concordance factors aid in understanding genomic data sets, we anticipate that the 2-simplex visualization of the normalized logDet distance triples will be similarly useful.
 
We begin with the logDet triples from 3-taxon trees. 
 
\begin{proposition}
Let $\ell=(\ell_{ab},\ell_{ac},\ell_{bc})$ with $0<\ell_{ab}\le \ell_{ac}=\ell_{bc}$ be a triple of positive numbers summing to 1. Then there  exists an ultrametric rooted tree with topology $((a,b),c)$ and GTR substitution model parameters such that
the normalized theoretical logDet distances of sequences generated under the coalescent mixture model are $\ell$.
\end{proposition}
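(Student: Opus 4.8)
The plan is to exhibit a connected family of ultrametric trees with topology $((a,b),c)$, each equipped with a single GTR+$\mu$ substitution class, along which the normalized logDet triple sweeps out the whole segment $\{\ell_{ac}=\ell_{bc}\}$ of the simplex, and then to invoke the intermediate value theorem. First I would record the shape of the theoretical logDet distance under a coalescent mixture with one substitution class. Exactly as in the diagonalization in the proof of Lemma~\ref{lem:41}, if $c_{xy}$ is the density of the coalescence time $T_{xy}$ of the ancestral lineages of $x$ and $y$ under the NMSC, $s(t)=\int_0^t\mu(\tau)\,d\tau$, and $0=\lambda_1>\lambda_2\ge\cdots\ge\lambda_k$ are the eigenvalues of $Q$, then
$$d_{LD}(x,y)=-\frac1k\sum_{i=2}^k\ln \EE\!\left[\exp\!\big(2\lambda_i s(T_{xy})\big)\right].$$
Since $\{a,b\}$ is a cherry of the ultrametric tree, $a$ and $b$ are exchangeable, so $T_{ac}$ and $T_{bc}$ have the same law and $d_{LD}(a,c)=d_{LD}(b,c)$ automatically; the normalized triple thus lies on the stated segment and is pinned down by the single ratio $r:=d_{LD}(a,b)/d_{LD}(a,c)$, via $\ell_{ab}=r/(r+2)$, $\ell_{ac}=\ell_{bc}=1/(r+2)$. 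Hence it suffices to realize any prescribed $r\in(0,1)$ (the endpoint $r=1$, i.e.\ $\ell=(1/3,1/3,1/3)$, arising as a limiting case). Note also that, because $T_{ac}\ge H$ almost surely for $H$ the root height and $\exp(2\lambda_i s(\cdot))$ is decreasing, one has the uniform bound $d_{LD}(a,c)\ge 2s(H)/k>0$, which keeps $r$ well defined.

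Next I would fix a nonzero GTR rate matrix $Q$, the constant rate scalar $\mu\equiv 1$, and constant population size functions on all edges and above the root, and consider the ultrametric trees on $\{a,b,c\}$ with root height $H=1$, the node joining $a$ and $b$ at height $h$, and population size $N$ on the internal edge. The parameter region $\{(h,N):0<h<1,\ N>0\}$ is connected, and on it $F^{ab}$ and $F^{ac}$ depend continuously on $(h,N)$: the relevant coalescent densities are explicit piecewise expressions in the metric and population parameters (as in the computation preceding Lemma~\ref{lem::coalFacts}, here in the simpler tree setting), and $\det$ and $\ln$ are continuous; combined with $d_{LD}(a,c)>0$, this makes $r$ a continuous function on a connected set, hence its range is an interval.

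It remains to examine the two ends of the family. As $h\to 1^-$ the internal edge length $1-h\to 0$, so the probability $p(h)$ that $a$ and $b$ coalesce before reaching the root tends to $0$; on the complementary event all three ancestral lineages arrive at the infinite root population at the same absolute time $1$ (each pendant-to-root path has length $1$) and are exchangeable there, so $T_{ab},T_{ac},T_{bc}$ agree in law on that event. Since $\exp(2\lambda_i s(\cdot))\in(0,1]$, this forces $|\EE[\exp(2\lambda_i s(T_{ab}))]-\EE[\exp(2\lambda_i s(T_{ac}))]|\le p(h)\to 0$, so $d_{LD}(a,b)-d_{LD}(a,c)\to 0$ while both stay $\ge 2/k$, i.e.\ $r\to 1$. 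At the other end, sending $h\to 0^+$ and $N\to 0^+$ together shrinks both the pendant edges to $a,b$ and the in-internal-edge coalescence time to $0$, so $T_{ab}\to 0$ in probability and $d_{LD}(a,b)\to 0$, whereas $d_{LD}(a,c)\ge 2/k$ throughout; hence $r\to 0$. By the intermediate value theorem every value in $(0,1)$ is attained, giving every admissible $\ell$ with $\ell_{ab}<\ell_{ac}=\ell_{bc}$, with $(1/3,1/3,1/3)$ the $h\to 1$ limit. The substantive work lies in the two routine-but-fussy points just sketched: verifying continuity of $r$ on the open parameter region (resting on the explicit, continuously parametrized coalescent densities and on the uniform positivity $d_{LD}(a,c)\ge 2s(H)/k$), and making the two limit claims rigorous — in particular the $h\to 1$ exchangeability argument, where one must check that all three ancestral lineages reach the root simultaneously and that the excluded event has vanishing probability. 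Neither is deep, but both lean on the NMSC bookkeeping assembled in Section~\ref{sec::freqs} and the technical assumptions of Subsection~\ref{ssec:NMSC}.
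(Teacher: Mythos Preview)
Your proposal is correct and follows essentially the same strategy as the paper: exhibit a connected family of parameters, show the normalized logDet triple depends continuously on them and approaches both endpoints of the segment $\{\ell_{ac}=\ell_{bc}\}$, and conclude by connectedness/IVT. The paper is terser---it takes the degenerate trees $((a\tc 0,b\tc 0)\tc x/2,c\tc x/2)$ with population size $\epsilon\to 0$ and the star tree $(a\tc x/2,b\tc x/2,c\tc x/2)$ to hit (limits at) the two endpoints, then perturbs to binary trees---whereas you work entirely in the open parameter region $(h,N)$ and supply the eigenvalue formula for $d_{LD}$ together with the uniform lower bound $d_{LD}(a,c)\ge 2s(H)/k$, which makes the $r\to 0$ limit cleaner. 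One small point: you flag $\ell=(1/3,1/3,1/3)$ as only a limiting case, and indeed Proposition~\ref{prop::2and3_1cyc} shows no binary tree with topology $((a,b),c)$ attains it exactly; the paper handles this endpoint via the unresolved star tree, so if you want to match the proposition as stated you should do the same rather than leave it as a limit.
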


\begin{proof} Consider the metric species tree $((a\tc 0,b\tc 0)\tc x/2,c\tc x/2)$, and constant population sizes $\epsilon>0$ on all edges. Fix a single substitution model, say the Jukes-Cantor, for sequence generation. Since small population sizes $\epsilon$ result in rapid coalescence with arbitrarily high probability, by taking $\epsilon$ sufficiently small one can show the expected frequency array can be made arbitrarily close to that which would arise if all gene trees exactly matched the species tree.
Thus the theoretical logDet distances can be made arbitrarily close to $d_{LD}(a,b)=0$  and $d_{LD}(a,c)=d_{LD}(b,c)=x$, which normalizes to $(0,1/2,1/2)$.

The unresolved species tree $(a\tc x/2,b\tc x/2,c\tc x/2)$, regardless of choice of population functions on the edges yields, by exchangeability of the taxa, a triple of equal logDet distances, which normalizes to $(1/3,1/3,1/3)$.

While the two trees above have 0-length edges and hence are non-binary, perturbations to binary trees with positive length edges can produce normalized logDet distances that are arbitrarily close.

Since the normalized logDet distances are continuous functions of parameters, the parameter space is connected, and the image of the normalized distances lies in a line segment by Proposition \ref{prop::2and3_1cyc}, the claim follows.
\qed\end{proof}

We turn now to networks with a single cycle.

\begin{proposition}\label{prop:4cyclefills}
Let $\ell=(\ell_{ab},\ell_{ac},\ell_{bc})$ with $0<\ell_{ab}\le \ell_{ac}<\ell_{bc}$ be a triple of positive numbers summing to 1. Then there  exists a binary ultrametric rooted network on taxa $a,b,c$  with a single 4-cycle and GTR substitution model parameters such that
the normalized theoretical logDet distances of sequences generated under a single-class coalescent mixture model are $\ell$.
\end{proposition}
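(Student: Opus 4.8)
The plan is to prove the proposition by an explicit construction that realizes the target triple $\ell$ exactly, rather than by a limiting/continuity argument. First I would fix the network topology: a rooted triple network $\mathcal N^+$ with root $v_0$, two tree nodes $v_1,v_2$, one hybrid node $p$, and leaves $a,b,c$, with tree edges $(v_0,v_1),(v_0,v_2),(v_1,b),(v_2,c),(p,a)$ and hybrid edges $(v_1,p),(v_2,p)$, so that the unique cycle is the $4_1$-cycle $v_0v_1pv_2$; this network displays $((a,b),c)$ and $((a,c),b)$, and so fits the setting of Proposition~\ref{prop::4cyc}. I would put hybridization parameter $\tfrac12$ on each hybrid edge, constant population size $\epsilon$ on every edge (including the infinite edge above $v_0$), and a Jukes--Cantor substitution model with constant scalar rate $\mu\equiv 1$; node heights will be chosen at the end to make the network ultrametric.

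Next I would reduce the logDet distances to a clean formula. By Lemma~\ref{lem::3_1and4cycles}, $F^{xy}(\theta)=\tfrac12 F^{xy}(\theta_1)+\tfrac12 F^{xy}(\theta_2)$, where $\theta_1,\theta_2$ are the single-class coalescent-mixture models on the two displayed trees. Marginalizing the MSC on a tree to a pair of taxa and using time-reversibility, $F^{xy}(\theta_i)=\diag(\pi)\,\EE\left[\exp\big(2s(T^{(i)}_{xy})Q\big)\right]$, where $s(t)=\int_0^t\mu(\tau)\,d\tau=t$ here and $T^{(i)}_{xy}$ is the coalescence time of the $x,y$ lineages in $\theta_i$; with all populations equal to $\epsilon$, $T^{(i)}_{xy}$ is the height of the most recent common ancestor of $x,y$ in $\theta_i$ plus an independent $\mathrm{Exp}(1/\epsilon)$ variable. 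Diagonalizing $Q$ (nonzero eigenvalue $-\rho$, $\rho=1/(k-1)$) and noting, as in Proposition~\ref{prop::2and3_1cyc}, that the base-frequency correction term in the logDet formula is the same for every pair, I get $d_{LD}(x,y)=-\tfrac{k-1}{k}\ln\big(\tfrac12 A^{(1)}_{xy}+\tfrac12 A^{(2)}_{xy}\big)$ with $A^{(i)}_{xy}=\EE[e^{-2\rho s(T^{(i)}_{xy})}]$. The height of $v_0$ governs every pair except $(a,b)$ in $((a,b),c)$ and $(a,c)$ in $((a,c),b)$; writing $\Delta,\alpha,\beta$ for the values of $\EE[e^{-2\rho(x+E)}]$, $E\sim\mathrm{Exp}(1/\epsilon)$, at the heights of $v_0,v_1,v_2$ respectively, this gives $d_{LD}(a,b)=-c\ln\big(\tfrac12(\alpha+\Delta)\big)$, $d_{LD}(a,c)=-c\ln\big(\tfrac12(\beta+\Delta)\big)$, $d_{LD}(b,c)=-c\ln\Delta$ with a common constant $c=(k-1)/k$, and necessarily $\Delta<\alpha$, $\Delta<\beta$ (which is the inequality of Proposition~\ref{prop::4cyc}).

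Then I would invert. Set $r=\ell_{ab}/\ell_{bc}$ and $r'=\ell_{ac}/\ell_{bc}$, so $0<r\le r'<1$; equating normalized triples forces $\tfrac12(\alpha+\Delta)=\Delta^{r}$ and $\tfrac12(\beta+\Delta)=\Delta^{r'}$, i.e.\ $\alpha=2\Delta^{r}-\Delta$ and $\beta=2\Delta^{r'}-\Delta$. Choosing $\Delta\in(0,1)$ small enough that $2\Delta^{r}-\Delta<1$ (which forces $2\Delta^{r'}-\Delta<1$ too, since $r'\ge r$) yields $\Delta<\beta\le\alpha<1$. To realize these numbers, note $\EE[e^{-2\rho(x+E)}]=\kappa\,e^{-2\rho x}$ with $\kappa=(1+2\rho\epsilon)^{-1}$; choosing $\epsilon$ small enough that $\kappa>\alpha$, the heights $H=\tfrac1{2\rho}\ln(\kappa/\Delta)$, $h_b=\tfrac1{2\rho}\ln(\kappa/\alpha)$, $h_c=\tfrac1{2\rho}\ln(\kappa/\beta)$ for $v_0,v_1,v_2$ satisfy $0<h_b\le h_c<H$; placing $p$ at height $\tfrac12 h_b$ then assigns every tree edge a positive length and every hybrid edge a positive length while keeping $\mathcal N^+$ ultrametric. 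This $\mathcal N^+$ with these parameters has normalized theoretical logDet triple $\ell$.

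I expect the one genuinely fiddly point to be the last verification: that the heights produced by the inversion can always be fitted into a bona fide binary ultrametric $4_1$-cycle network with positive tree-edge lengths (it is this constraint that dictates taking both $\Delta$ and $\epsilon$ small above), together with confirming $\Delta<\alpha,\beta<1$ across the full range $0<\ell_{ab}\le\ell_{ac}<\ell_{bc}$. Everything else is the Jukes--Cantor eigenvalue bookkeeping already used in Proposition~\ref{prop::2and3_1cyc}. An alternative route, matching the style of the preceding proposition, would use the $\epsilon\to0$ limit $(\Delta,\alpha,\beta)=(e^{-2\rho H},e^{-2\rho h_b},e^{-2\rho h_c})$ together with continuity and connectedness of the parameter space and the containment from Proposition~\ref{prop::4cyc}; but since the target is a two-dimensional region, connectedness alone would not finish the argument, so one would still need an openness/surjectivity step that the explicit construction makes unnecessary.
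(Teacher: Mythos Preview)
Your construction is correct and complete; the eigenvalue bookkeeping, the inversion $\alpha=2\Delta^r-\Delta$, $\beta=2\Delta^{r'}-\Delta$, and the positivity checks for the edge lengths all go through as stated. The marginalization step (reducing to a two-lineage coalescent so that $T^{(i)}_{xy}$ really is the MRCA height plus an independent $\mathrm{Exp}(1/\epsilon)$) is justified by the consistency of the NMSC under restriction to a subset of taxa.

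The paper also gives an explicit Jukes--Cantor construction with a single constant population size, but it parameterizes the family differently: there the two non-root tree nodes $v_1,v_2$ and the hybrid node are all placed at a common height $t_0$ (so the hybrid edges have length~$0$), the root sits at height $t_1$, and the third free parameter is the hybridization probability $\gamma\in[1/2,1)$. The inversion is then carried out through the JC off-diagonal function $f(t)=\tfrac14(1-e^{-4t/3})$ rather than through eigenvalues. Your choice---fixing $\gamma=\tfrac12$ and instead letting the heights $h_b,h_c,H$ vary independently---gives a different three-parameter slice of the model but hits the same two-dimensional target region. What your route buys is a somewhat cleaner inversion (no need to introduce the auxiliary scale $m$ or to bound $f(m\ell_{ab}-\beta)>1/8$); what the paper's route buys is that the ``symmetry breaking'' between $d_{LD}(a,b)$ and $d_{LD}(a,c)$ is carried entirely by $\gamma$, tying the construction visibly to the genericity clause in Proposition~\ref{prop::4cyc}. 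Either parameterization suffices.
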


\begin{proof}
 The 4-cycle network we construct is shown in Figure \ref{fig::4c_bac}, with $t_0,t_1$ measured in generations, and the hybrid edges of length 0. Consider a single constant population size $N>0$ for all populations over the tree and above the root, and a Jukes-Cantor substitution process with constant rate $\mu>0$. We will choose values for $t_0, t_1>0$, $\gamma\in[1/2,1)$ so that the normalized distances for the coalescent mixture model with this single substitution process are given by $\ell$. 
 
 \begin{figure}
	\begin{center}
	\includegraphics{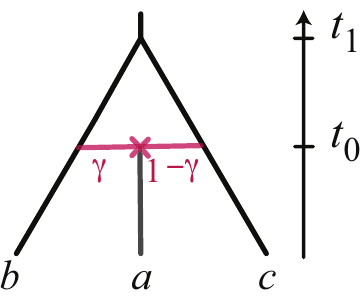}
	\end{center}
\caption{The 4-cycle network, with times in generations, constructed in Proposition \ref{prop:4cyclefills}. Hybridization parameters are $\gamma$, $1-\gamma$, and hybrid edges have length 0. }\label{fig::4c_bac}
\end{figure}

Recall that if $M(t)$ denotes the Jukes-Cantor Markov matrix for a substitution process over time $t$ with rate 1, then the common value of all
its off-diagonal entries is
$$f(t)=\frac 14\left (1- e^{-\frac 43 t}\right ) .$$ 
With $D=\diag (1/4,1/4,1/4,1/4)$, the Jukes-Cantor pattern frequency array is
$DM(t)$, and the logDet distance (equal to Jukes-Cantor distance) is $$t=f^{-1}(f(t))=-\frac 34 \log \left (1-4f(t) \right ).$$ Note that $f$ is an increasing function.

From equation 4.1 of  \cite{Allman2019}, for a coalescent mixture Jukes-Cantor model on an ultrametric  tree with uniform population size $N$ and mutation rate $\mu$, sequences for two taxa $x,y$ whose MRCA is at time $t$ before the present has expected pattern frequency array
$$F(t)=D M(2t\mu)\tilde M(\mu,N),$$
where $\tilde M(\mu,N)$ is a Markov matrix of  Jukes-Cantor form describing the expected additional substitutions due to the coalescent model delaying lineages merging until some time above the MRCA.
The logDet distance between $x,y$ is then the same as the Jukes-Cantor distance, which is computed to be
$$d_{LD}(x,y)=2t\mu + \beta$$
where $\beta=\beta(\mu,N)>0$ can be explicitly computed from $\tilde M(\mu, N)$, though we will not do so here. Since $\beta$ is continuous and $\beta(\mu,N)\to 0$ as $N\to 0$ and $\beta(\mu,N)\to \infty$ as $N\to \infty$,
it follows that $\beta$ takes on all positive values. 

Now by Lemma \ref{lem::3_1and4cycles} on the 4-cycle network of Figure \ref{fig::4c_bac} the expected pattern frequency array for $a,b$ is
$$\gamma F(t_0)+(1-\gamma) F(t_1) = D M_{ab} \tilde M(\mu, N)$$
where $$M_{ab}=\gamma M(2t_0\mu)+(1-\gamma) M(2t_1\mu)$$
has the usual Jukes-Cantor form, with 
off-diagonal entries 
$$f_{ab}=\gamma f(2t_0\mu)+(1-\gamma)f(2t_1\mu).$$
This shows
$$d_{LD}(a,b)=f^{-1}(f_{ab}) +\beta.$$

A similar calculation shows
$$d_{LD}(a,c)=f^{-1}(f_{ac}) +\beta,$$
where $$f_{ac}= \gamma f(2t_1\mu)+(1-\gamma)f(2t_0\mu).$$

The expected pattern frequencies for $b,c$ sequences is $F(t_1)$, so 
$$d_{LD}(b,c)=f^{-1}(f_{bc}) +\beta$$
where 
$$f_{bc} =f(2t_1\mu).$$

We now determine parameters which produce the normalized triple of distances $\ell$.
Fixing values of  $\mu$, $N$ determines a fixed value of $\beta>0$. Next, choose some value $m$ so that $$f\left (m\ell_{ab}-\beta\right)>\frac 1{8},$$ which can be done since
$f:\mathbb R^{>0} \to (0,1/4)$ is surjective and increasing. Then, with $x_{ij}=f \left (m\ell_{ij}-\beta\right)$, because $\ell_{ab}\le \ell_{ac}< \ell_{bc}$ we have
$$\frac 18<x_{ab}\le x_{ac}< x_{bc}<\frac 14.$$
Let $x_0=x_{ab}+x_{ac}-x_{bc}$, so $0< x_0<\frac 14$. Determine $t_0$ by $f(2t_0\mu)=x_0$, and $\gamma\in [1/2,1)$ by
$$\gamma=\frac {x_{bc}-x_{ab}}{2x_{bc}-x_{ab}-x_{ac}}, \text{ so } 1-\gamma= \frac {x_{bc}-x_{ac}}{2x_{bc}-x_{ab}-x_{ac}}.$$
Then choose $t_1$ by $f(2t_1\mu)=x_{bc}.$

To verify that these parameter choices give the desired normalized triple of distances,
the expected distance between $a,b$ is
\begin{align*}
d_{LD}(a,b)&=f^{-1}( \gamma f(2t_0\mu) +(1-\gamma)f(2t_1\mu ) ) +\beta\\
&=f^{-1}( \gamma x_0 +(1-\gamma)x_{bc}) +\beta\\
&=f^{-1}(x_{ab})+\beta\\
&=m\ell_{ab}.
\end{align*}
Similarly, we see $d_{LD}(a,c)=m\ell_{ac}$.
Finally we have
$$d_{LD}(b,c)=f^{-1}(f(2t_1\mu))+\beta= f^{-1}(x_{bc})+\beta=m\ell_{bc}.$$
\qed\end{proof}

Note that even if $\ell_{ac}=\ell_{bc}$, the argument of Proposition  \ref{prop:4cyclefills} can be modified slightly by taking $\gamma=1$ in the analytic continuation of the parameterization. However, that choice of the hybridization parameter essentially means that in place of a 4-cycle network parameter we have a tree.

\begin{figure}
	\begin{center}
	\includegraphics{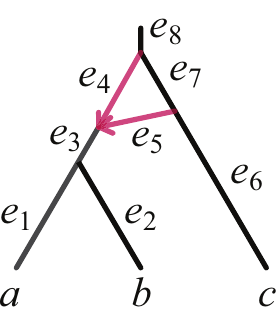}
	\end{center}
	\caption{A $3_2$-network, with numbered edges, as used in Proposition \ref{prop:32cyclefills}. The hybridization parameter on edge $e_5$ is $\gamma$, and on $e_4$ is $1-\gamma$.}\label{fig::32cycle}
\end{figure}

Finally, we consider a network with a $3_2$-cycle. While Proposition \ref{prop::3_2cyc} shows the  normalized triples of theoretical logDet distances
  lie on the same line as those for  a tree, we establish they need not be restricted to the same line segment of tree-like distances. However, we do not completely characterize the extent of the segment they fill out.

\begin{proposition}\label{prop:32cyclefills}
Let $\ell=(\ell_{ab},\ell_{ac},\ell_{bc})$ with $\ell_{ac}= \ell_{bc}$ be a triple of positive numbers summing to 1 with $0<\ell_{ab}<\frac 12$. Then there exists a binary ultrametric rooted network on taxa $\{a,b,c\}$ with a single $3_2$-cycle  whose leaf-descendants are $a,b$ and GTR substitution model parameters such that
the normalized theoretical LogDet distances of sequences generated under the coalescent mixture model are $\ell$.
\end{proposition}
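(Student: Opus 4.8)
The plan is to mimic the proof of Proposition~\ref{prop:4cyclefills}. First I would fix a single-class Jukes--Cantor substitution model with constant rate $\mu$ on a $3_2$-cycle network of the shape in Figure~\ref{fig::32cycle}, keeping the hybrid edge of length $0$ as drawn, and treat the remaining edge lengths, the hybridization parameter $\gamma$, and the per-edge population functions $\{N_e\}$ as tunable. As there, every pattern frequency array on a Jukes--Cantor tree or network is again of Jukes--Cantor form, so is determined by its single off-diagonal entry, the logDet distance is recovered from that entry by a fixed increasing bijection $g^{-1}\colon(0,1)\to(0,\infty)$, and a convex combination of such arrays is of Jukes--Cantor form with off-diagonal entry the same convex combination; the whole computation thus reduces to tracking one number per pair of taxa.

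Next I would apply Lemma~\ref{lem::3_2cyc} to write the off-diagonal of $F^{ab}$, and that of $F^{ac}=F^{bc}$ (equal by Lemma~\ref{lem::switchability}), as explicit convex combinations of the off-diagonals of the four derived trees $\mathcal N^+_1,\dots,\mathcal N^+_4$ and of the two conditioned arrays $F^{xy}_{|K_p=1}(\theta_i)$, with mixing weights built from $\gamma$ and from the coalescence probability $P(K_p=1)=1-\exp(-\ell/N)$ on the edge below the hybrid node. Here $\mathcal N^+_1,\mathcal N^+_2$ are $((a,b),c)$ rooted at the two depths obtained by deleting one hybrid edge, and $\mathcal N^+_3,\mathcal N^+_4$ are the trees $((a,c),b)$ and $((b,c),a)$ arising when the $a$- and $b$-lineages take different hybrid edges. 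Since $\ell_{ac}=\ell_{bc}$ automatically, it remains only to realize the single ratio $\ell_{ab}:\ell_{ac}$, which sweeps out $(0,2)$ as $\ell_{ab}$ sweeps $(0,\tfrac12)$.

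The core of the argument is to display parameter regimes covering this range. In one regime --- a very large population on the edge below the hybrid node, so that $P(K_p=1)$ is negligible and $a,b$ never coalesce inside the cycle, together with very small populations above the hybrid node, so that recoalescences occur essentially at nodes --- the decomposition collapses to: off-diagonal of $F^{ab}$ proportional to $(\gamma^2+(1-\gamma)^2)\,g(2\mu H)+2\gamma(1-\gamma)\,g(2\mu R)$, and off-diagonal of $F^{ac}$ proportional to $\gamma\,g(2\mu H)+(1-\gamma)\,g(2\mu R)$, where $H<R$ are the depths of the hybrid node and of the root. Their difference is $(1-\gamma)(2\gamma-1)\big(g(2\mu R)-g(2\mu H)\big)$, positive precisely when $\gamma\in(\tfrac12,1)$, and a short computation shows that $\gamma\in(\tfrac12,1)$, $H$, and $R$ can be chosen to make $d_{LD}(a,b)/d_{LD}(a,c)$ any prescribed value in $(1,2)$; at $\gamma=\tfrac12$ (or as $\gamma\to1$) the ratio is exactly $1$; and in the opposite regime --- a very small population below the hybrid node, so $a,b$ coalesce near a low species-MRCA --- $d_{LD}(a,b)\to 0$ while $d_{LD}(a,c)$ stays bounded below, giving all ratios in $(0,1)$. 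Since the parameter space is connected, the normalized logDet triple varies continuously, and its image is confined to the line $\ell_{ac}=\ell_{bc}$ by Proposition~\ref{prop::3_2cyc}, an intermediate-value argument then fills in the entire open segment $0<\ell_{ab}<\tfrac12$; and because length-$0$ hybrid edges and finite (if extreme) populations already give genuine binary metric networks, no separate perturbation step is needed.

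The hard part will be the careful evaluation of the six arrays in the decomposition under the chosen degenerate population functions --- in particular pinning down the ``effective MRCA depth'' of each pair in the conditioned arrays $F^{xy}_{|K_p=1}(\theta_i)$ and in the trees $\mathcal N^+_3,\mathcal N^+_4$ --- and then verifying that the induced family of ratios genuinely exhausts $(0,2)$, and in particular can be pushed arbitrarily close to the endpoint $2$.
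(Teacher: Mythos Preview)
Your proposal is correct and follows essentially the same approach as the paper: both fix a Jukes--Cantor model on the network of Figure~\ref{fig::32cycle}, reduce via connectedness and continuity to showing that the normalized $\ell_{ab}$ can be made arbitrarily close to $0$ and to $\tfrac12$, and handle these endpoints by the same mechanisms (fast coalescence below the hybrid node for $\ell_{ab}\approx 0$; no coalescence below it together with $\gamma\to 1$ for $\ell_{ab}\approx \tfrac12$). The only differences are cosmetic --- you organize the near-$\tfrac12$ computation through the decomposition of Lemma~\ref{lem::3_2cyc} and take $N_4$ small, whereas the paper argues directly with $N_4$ large --- and the paper carries out your deferred ``short computation'' explicitly via L'H\^opital's rule.
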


\begin{proof}
We construct several $3_2$-cycle species networks of the form shown in Figure \ref{fig::32cycle}, with edge lengths $t_i=\ell(e_i)$.
In making choices of numerical parameters, since the network is ultrametric we view $t_1,t_3,t_5,t_7$ as independent, determining $t_2,t_4,t_6$. 
The population size on edge $e_i$ for $3\le i \le 8$ are constants $N_i$, with the sizes on terminal edges irrelevant. The hybridization parameters are $1-\gamma$ and $\gamma$ on edges  $e_4$ and $e_5$ respectively.  We also fix a single Jukes-Cantor substitution process with any constant rate $\mu>0$.  

By Proposition \ref{prop::3_2cyc}, for any choices of the $t_i,N_i,\gamma$, the theoretical LogDet distances will satisfy $d_{LD}(a,c)=d_{LD}(b,c)$ so the normalized theoretical LogDet distance triple lies on a line.
Since the parameter space is connected, it is  enough to show that 
\begin{equation}
\frac{d_{LD}(a,b)}{2d_{LD}(a,c)+d_{LD}(a,b)}\label{eq:ndist}
\end{equation} 
is arbitrarily close to $0$ for some choice of the parameters, and arbitrarily close to $1/2$ for others, to conclude that the rescaled expected distances give all the described triples.

To make  expression \eqref{eq:ndist} near $0$, we choose parameters  with $t_1$ and $N_3$ sufficiently small so that with high probability the $a,b$ lineages coalesce quickly.
Specifically, let $t_3=1$, and fix any positive values for $t_5,t_7$ and $N_i$ for $i\ne 3$. Now for any $\epsilon>0$, as $N_3\to 0^+$, the probability of lineages from $a,b$ coalescing on  $e_3$ within $\epsilon$ of entering it approaches 1.
Using this, it is straightforward to show that as $N_3\to 0^+$ the expected pattern frequency array for $a,b$ approaches that for the JC model on a 2-taxon tree of total length $2t_1$. This then implies that $d_{LD}(a,b)\to 2\mu t_1$ as $N_3\to 0^+$. On the other hand, for all values of $N_3>0$ one can show $d_{LD}(a,c)>2\mu (t_1+2)$. Thus for a sufficiently small choices of $t_1$ and $N_3$, we can make $d_{LD}(a,b)/(2d(a,c)+d(a,b))$ as close to $0$ as desired.

To produce a value of  expression \eqref{eq:ndist} near $1/2$ is more subtle. We choose parameters so that $a,b$ lineages are likely to enter $e_5$, but if they both do they are then unlikely to coalesce in it, and coalescence of any pair of lineages in $e_7$ is likely to occur quickly. First set $t_5=0$, $t_7=1$ and $N_8$ arbitrary.
For any $t_1,t_3$ and $\gamma$,  by choosing $N_3=N_4=N_5$ sufficiently large, the probability that
the $a,b$ lineages coalesce on $e_3$, $e_4$, or $e_5$ can be made arbitrarily small, so that if they coalesce below the root with (conditional) probability approaching 1 they must do so on $e_7$. This requires that both the $a,b$ lineages follow $e_5$, which occurs with probability $\gamma^2$. If lineages $a,c$ coalesce below the root, they must do so on $e_7$, requiring the $a$ lineage to follow $e_5$, which occurs with probability $\gamma$. 
By picking $N_7$ sufficiently small, the probability that two lineages in edge $e_7$ coalesce near the lower end can be made close to 1. All this shows that once $t_1,t_3$ and $\gamma$ are chosen, by appropriate choices of the $N_i$ we can ensure the expected frequency arrays for $a,b$
and $a,c$ are arbitrarily close to
$$ \gamma^2 F(t_1+t_3)  +  (1-\gamma^2) G(t_1+t_3+1,N_8)$$
and
$$ \gamma F(t_1+t_3)  +  (1-\gamma) G(t_1+t_3+1,N_8),$$
respectively, where $F(t)$ is the expected pattern frequency array for two samples at distance $2t$ and $G(t,N)$ is the expected array under the coalescent
for 2 lineages which enter a common population of size $N$ at time $t$. Further picking sufficiently small values for $t_1,t_3$, the pattern frequency arrays for $a,b$ and $a,c$  can be made arbitrarily close to
$$ \gamma^2\frac 14 I  +  (1-\gamma^2) G(1,N_8)
$$ 
and
$$ \gamma \frac 14 I  +  (1-\gamma) G(1,N_8),
$$ 
respectively. Thus for any $\gamma$ the theoretical distance can be made arbitrarily close to the distance computed from the above arrays.
Using the formulas defined in the proof of Proposition \ref{prop:4cyclefills}, we find these distances are 
$$d_{LD}(a,b)=f^{-1}\left ( (1-\gamma^2) \delta \right)$$
and
$$d_{LD}(a,c)=f^{-1}\left ( (1-\gamma) \delta \right)$$
where $\delta>0$ is the off-diagonal entry of $G(1,N_8)$.
Thus once $\gamma$ is specified, by choosing $t_1$, $t_3$, $N_3=N_4=N_5$, $N_7$ we can ensure  expression \eqref{eq:ndist}
is arbitrarily close to
\begin{equation}
\frac {\log(1-4\delta(1-\gamma^2))}{2\log(1-4\delta(1-\gamma))+\log(1-4\delta(1-\gamma^2))}.
\label{eq:gamma}
\end{equation}
Applying L'Hopital's rule shows the limit of expression \eqref{eq:gamma} as $\gamma\to 1$ is $\frac 12$. Thus for any $\epsilon>0$, by first choosing $\gamma$ near 1 so that the  expression \eqref{eq:gamma} is within $\epsilon/2$ of $1/2$, and then choosing $t_1=t_3$, $N_3=N_4=N_5$, $N_7$ so that  expression \eqref{eq:ndist} is within $\epsilon/2$ of expression \eqref{eq:gamma},
we obtain the desired result.
\qed\end{proof}

The results of this section, combined with those of Section \ref{sec::logDet} are summarized by Figure \ref{fig::regions}, which indicates the various regions of the simplex which normalized logDet triples fill, according to whether the network has a $4$-cycle, a $3_2$-cycle, or neither.
 \begin{figure}
 	\begin{center} 
	\includegraphics{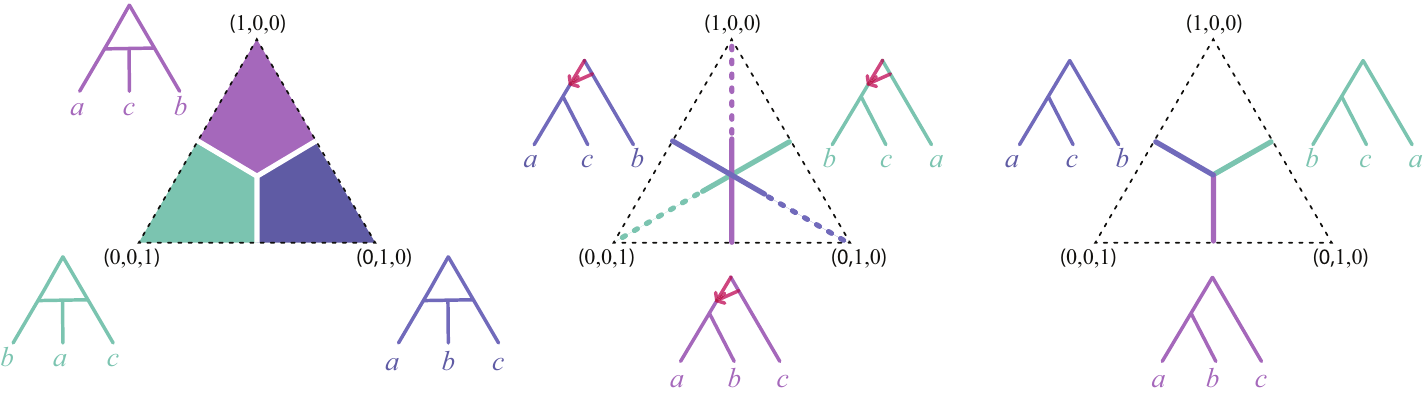}
	\end{center}
	\caption{The regions of the simplex filled by normalized triples of logDet distances under the model $\mathcal M$ on a 3-taxon network. The networks shown are those obtained by suppressing all cycles other than $4$- and $3_2$-cycles, and then undirecting the 4-cycle edges. Normalized logDet distances are ordered as $(\ell_{ab},\ell_{ac},\ell_{bc}).$ Networks with $3_2$-cycles fill the solid line segments in the center simplex, but it is unknown whether they may also produce points in the dashed line segments.}\label{fig::regions}
\end{figure}

\section{Conclusion}\label{sec:discussion}

Theorem \ref{thm::main} states that most topological features of an ultrametric level-1 network can be identified from theoretical logDet distances under a fairly general model of sequence evolution with incomplete lineage sorting.  It more generally implies network identifiability from pattern frequency arrays, since logDet distances are functions of these.
In particular, individual gene trees, or even sequences partitioned into genes, are not required for network identifiability.

While identifiability is a theoretical question about the model, it has important implications for data analysis. Indeed, it is a key requirement for a statistically consistent inference procedure to exist. While our method of proof of identifiability, using the logDet distance, suggests using that distance as a basis for an inference procedure, others might be developed as well. 

In subsequent work, we will explore using the logDet distance in a procedure for level-1 network inference following the framework of NANUQ \cite{ABR2019}. In outline, for each triple of taxa, the location of the normalized triple of logDet distances in simplex plots such as those of Figure \ref{fig::regions} can indicate whether the rooted triple has a 4-cycle or not. A triple near the lines through the centroid can, through some statistical test, be judged unlikely to have arisen from a 4-cycle, while those farther away are judged to have arisen from a 4-cycle. Then, modifying the rooted triple distance of \cite{Rhodes2019} to a network setting, similarly to how NANUQ modified the quartet distance, an intertaxon distance can be computed from the results of these statistical tests. Rules for relating a splits graph for the expected rooted triple distance to the original network will be developed. When applied to the splits graph constructed by NeighborNet from the empirically-derived distance, this should lead to consistent network inference. Since individual gene trees are never inferred, this will potentially give a much faster data analysis pipeline than the current version of NANUQ, which is built on quartet concordance factors across gene trees.

\section{Acknowledgements}

This work was supported by the National Institutes of Health [R01 GM117590],
under the Joint DMS/NIGMS Initiative to Support Research at the Interface of the Biological  Mathematical Sciences, and
[2P20GM103395], an NIGMS Institutional Development Award (IDeA).

\bibliographystyle{alpha}
\bibliography{Hybridization}

\end{document}